\documentclass[11pt]{article}
\usepackage{fullpage}
\usepackage[margin=1in]{geometry}
\usepackage[numbers]{natbib}
\usepackage[utf8]{inputenc}
\usepackage{amsmath,amsthm,amssymb}
\usepackage{thmtools} 
\usepackage{thm-restate}
\usepackage{amsfonts,bm,bbold,dsfont}
\usepackage{enumitem}
\usepackage[normalem]{ulem}
\usepackage{xspace}
\usepackage[usenames,dvipsnames]{xcolor}
\usepackage[colorlinks=true,citecolor=ForestGreen,linkcolor=ForestGreen,urlcolor=NavyBlue]{hyperref}

\usepackage[ruled]{algorithm2e} 
\SetAlFnt{\small}
\SetAlCapFnt{\small}
\SetAlCapNameFnt{\small}
\SetAlCapHSkip{0pt}
\IncMargin{-\parindent}

\usepackage[nameinlink,capitalise]{cleveref}

\usepackage{natbib}

\usepackage{comment} 
\usepackage{subcaption}
\usepackage{multirow}
\usepackage{mathrsfs}
\usepackage{bbold}
\usepackage{pgf,tikz,pgfplots}
\usetikzlibrary{matrix}
\usetikzlibrary{calc,patterns,intersections,pgfplots.fillbetween}
\pgfplotsset{compat=1.18}

\usepackage{times}
\usepackage{soul}
\usepackage{url}
\usepackage[utf8]{inputenc}
\usepackage{caption}
\usepackage{graphicx}
\usepackage{subcaption}
\usepackage{booktabs}
\usepackage{algorithmic}
\usepackage[switch]{lineno}
\usepackage{float}

\newcommand{\doctors}{\mathcal{D}}
\newcommand{\hospitals}{\mathcal{H}}
\newcommand{\hist}{\mathrm{Hist}}
\newcommand{\pull}{\mathrm{next}}
\newcommand{\accept}{\mathrm{accept}}
\newcommand{\rank}{\mathrm{rank}}

\newcommand{\E}{\mathop{\mathds{E}}}
\newtheorem{definition}{Definition}[section]
\newtheorem{lemma}{Lemma}[section]
\newtheorem{theorem}{Theorem}[section]
\newtheorem{claim}{Claim}[section]
\newtheorem{corollary}{Corollary}[section]

\newtheorem{observation}{Observation}[section]

\usepackage[]{color-edits}


\addauthor{AR}{blue}
\newcommand{\aarc}[1]{{\ARcomment{#1}}}
\newcommand{\aare}[1]{{\ARedit{#1}}}

\addauthor{SM}{green!50!black}

\addauthor{AE}{magenta}

\title{Stable Marriage: Loyalty vs. Competition
\thanks{The work of A.\ Eden and A.\ Ronen was supported by the Israel Science Foundation (grant No. 533/23).}}

\author{
		Amit Ronen
		\thanks{The Hebrew University; {\tt 3ortima@gmail.com}}
		\and
		Jonah Evan Hess
		\thanks{Unaffiliated; {\tt Jonahhessdev@gmail.com}}
            \and
		Yael Belfer
		\thanks{The Hebrew University; {\tt yael.belfer@mail.huji.ac.il}}
            \and
            Simon Mauras
            \thanks{INRIA (FairPlay); {\tt simon.mauras@inria.fr}}
            \and
		Alon Eden
		\thanks{The Hebrew University; {\tt alon.eden@mail.huji.ac.il}. Incumbent of the Harry \& Abe Sherman Senior Lectureship at the School of Computer Science
and Engineering at the Hebrew University.}
	}

\begin{document}

\begin{titlepage}

\maketitle

\begin{abstract}
We consider the stable matching problem (e.g. between doctors and hospitals) in a one-to-one matching setting, where preferences are drawn uniformly at random. It is known that when doctors propose and the number of doctors equals the number of hospitals, then the expected rank of doctors for their match is $\Theta(\log n)$, while the expected rank of the hospitals for their match is $\Theta(n/\log n)$, where $n$ is the size of each side of the market. However, when adding even a single doctor, [Ashlagi, Kanoria and Leshno, 2017] show that the tables have turned: doctors have expected rank of $\Theta(n/\log n)$ while hospitals have expected rank of $\Theta(\log n)$. That is, (slight) competition has a much more dramatically harmful effect than the benefit of being on the proposing side. Motivated by settings where agents inflate their value for an item if it is already allocated to them (termed endowment effect), we study the case where hospitals exhibit ``loyalty".

We model loyalty as a parameter $k$, where a hospital currently matched to their $\ell$th most preferred doctor accepts proposals from their $\ell-k-1$th most preferred doctors. Hospital loyalty should help doctors mitigate the harmful effect of competition, as many more outcomes are now stable. However, we show that the effect of competition is so dramatic that, even in settings with extremely high loyalty, in unbalanced markets, the expected rank of doctors already becomes $\tilde{\Theta}(\sqrt{n})$ for loyalty $k=n-\sqrt{n}\log n=n(1-o(1))$.
\end{abstract}
\end{titlepage}

\section{Introduction}

Two-sided markets capture many real-life scenarios where every entity on each side has a preference order on the other side. Some examples of such markets include doctor to hospital residency matching and student to school matching. A desired property of an allocation in such a market is \textit{stability}, which requires that no pair (e.g. a doctor and a hospital) would rather ``elope" and match each other rather than their prescribed match in the allocation. The famous Deferred Acceptance algorithm (henceforth referred to as DA), discovered by~\citet{gale1962college}, finds such a stable matching by letting one side, e.g. doctors, propose to the entities on the other side, e.g. hospitals. A hospital tentatively accepts the proposal whenever the doctor making the proposal is more attractive than their current match. A doctor, on the other hand, proposes to the most preferred hospital they haven't proposed to yet (more on this in Section~\ref{sec:prelims}).  It was established that the proposing side has an inherent advantage; namely,~\cite{gale1962college} show that when doctors propose, the resulting stable matching is the doctor-optimal matching in the set of all stable matchings. Moreover, the same matching is also the hospital-pessimal stable matching.

In order to quantify the benefit of being the proposing side,~\citet{wilson1972analysis,Pittel89} study one-to-one matching markets with $n$ doctors and $n$ hospitals and randomly drawn preferences. They reveal that when each doctor and hospital independently draws a permutation uniformly at random as preference order  hospitals and doctors, respectively, then the expected (average) rank of doctors for their match is $\Theta(\log n),$ while  the expected rank for hospitals is $\Theta(n/\log n)$. \citet{ashlagi2017unbalanced} study the same setting, but with just one additional doctor; that is, we have $n+1$ doctors and $n$ hospitals whose preferences are drawn independently and uniformly at random and doctors propose to hospitals. Surprisingly, they demonstrate that being on the slightly shorter side is much more beneficial than being the side making the offers. Namely, they show that in this case, the expected rank of doctors is $\Theta(n/\log n),$ while the expected rank of hospitals is $\Theta(\log n)$. 

The intuition behind this is the following. In a balanced market, the DA algorithm terminates when the last hospital gets a proposal, which can be proven to take $O(n\log n)$ proposals using a coupon-collector-style analysis. In contrast, with one extra doctor, once $n$ hospitals are matched, there is still an unmatched doctor who then tries to propose to all hospitals before the algorithm terminates. This doctor is very likely to get accepted by one of the hospitals, which results in another doctor getting rejected and trying to propose to all hospitals, and so on and so forth. ~\cite{ashlagi2017unbalanced} prove that this results in very long rejection chains, terminating when some doctor is rejected by all hospitals. This results in additional $\Omega(n^2/\log n)$ proposals. Motivated by behavioral studies indicating that individuals tend to inflate the value of items currently assigned to them, we investigate whether introducing the notion of \textit{loyalty} into this setting can mitigate the dramatic effects of \textit{competition}.

\paragraph{Loyalty vs. Competition.} Experimental studies consistently reveal that possessing items increases their perceived value~\cite{kahneman1991anomalies,knetsch1989endowment}, a phenomenon termed \textit{The Endowment Effect}~\cite{thaler1980toward}. Recent studies on market equilibrium suggest that even a mild form of  endowment effect might increase the set of stable outcomes beyond gross-substitutes valuations~\cite{babaioff2018combinatorial,EzraFF20}. We expect to find endowment effect in two-sided markets when DA is run in an online manner, such as in the French~\cite{parcoursup} or German~\cite{grenet2022preference} school matching, or in settings where there is no centralized matching algorithm, but  DA-like dynamics might naturally occur, as one side typically ``proposes" to the other side~\cite{echenique2024experimental}. In this paper, we introduce an endowment effect in two-sided markets by having the accepting side exhibit loyalty and study its implications.

We model loyalty as an additive parameter $k\in \{0,1\ldots,n-1\}$ where a hospital ``divorces" the current doctor they are matched with only if they get a proposal from a doctor they rank higher by at least $k+1$ positions. $k=0$ is the standard model where hospitals always choose the highest ranked doctor who proposed to them. In contrast, when $k$ equals the \textit{number of doctors minus 1}, once a hospital is matched to a doctor, they stick to this doctor until the algorithm terminates. Intuitively, loyalty should help the doctors mitigate the harmful effect of competition, as it is harder to unmatch a doctor. Thus we would not expect to see the long rejection chains that drive the lower bound of~\cite{ashlagi2017unbalanced}. However, we demonstrate that the effect of competition is much more significant than the mitigating effect of loyalty, as the average doctor rank in unbalanced markets jumps from $\Theta(\log n)$ to $\tilde{\Theta}(\sqrt{n})$ even at an extremely high loyalty $k=n-\sqrt{n}\cdot\ln n=n(1-o(1))$. This result strengthens the importance of being on the short side of a two-sided market.

\subsection{Our Results}
Without loyalty, regardless of which doctor proposes next, the algorithm always ends up in the same stable matching. In unbalanced markets, the \textit{Rural Hospital Theorem}~\cite{roth1986allocation} states that the same doctor is unmatched in all stable matchings. By introducing loyalty, multiple stable outcomes may arise, even when doctors propose. Furthermore, the unmatched doctor is no longer unique. Consequently, our analysis must be robust to any DA variant in use.

\paragraph{Balanced Market.}
In Section~\ref{sec:balanced}, we study markets where the number of doctors equals the number of hospitals. The results of this section hold for every consistent DA variant, a general form capturing many forms of loyalty. In Theorem~\ref{thm:total-prop-balanced}, we give a tight bound of $\Theta(n\log n)$ expected number of proposals for any consistent DA variant, which implies an average rank of $\Theta(\log n)$ for doctors (see Figure~\ref{fig:main_balanced}). We first provide an upper bound to the number of proposals. In a balanced market, the algorithm terminates once all hospitals receive a proposal. Similar to~\cite{wilson1972analysis,CaiT22}, we can bound the number of proposals by applying the coupon collector's bound (Corollary~\ref{cor:balanced_upperbound}). In Lemma~\ref{lem:balanced_lower_bound}, we provide a lower bound by introducing the concept of \textit{heavy} doctors, i.e., doctors that make many proposals. We differentiate between two cases: $(i)$ either the algorithm is likely to have many  heavy doctors, where we immediately  obtain an $\Omega(n\log n)$ lower bound, or $(ii)$ there are few heavy doctors, in which case the light doctors need to send out many proposals for them to get matched.

\paragraph{Unbalanced Market.}
    In Section~\ref{sec:unbalanced}, we study the case where the number of doctors exceeds the number of hospitals by $1$. When loyalty is $k=n$, no hospital will accept another proposal once $n$ hospitals are matched. Therefore,  the average rank $\Theta(\log n)$ of doctors is  maintained. In contrast, \cite{ashlagi2017unbalanced} establish that without loyalty, where $k=0$, the average rank of doctors is $\Theta(n/\log n)$. One can hope that loyalty counteracts the stark effect of competition, and some moderate amount of loyalty suffices to regain the logarithmic rank. Notably, we find that the level of loyalty needed to counteract the effect of competition is excessively high.  Our analysis reveals a phase transition, where for $k\ge n-\sqrt{n}$, the average doctors' rank is $\Theta(\log n)$, while already for $k=n-\sqrt{n}\cdot \ln n=n(1-o(1))$, the average doctors' rank is $\tilde{\Theta}(\sqrt{n})$. This phase transition is very visible in simulations (see Figure~\ref{fig:main_unbalanced}). A summary of the loyalty regimes for which we have a complete picture are presented in Figure~\ref{fig:doctor_rankings}. Our analysis mainly focuses on \textit{the unbalanced phase}, which starts once all hospitals get matched.   

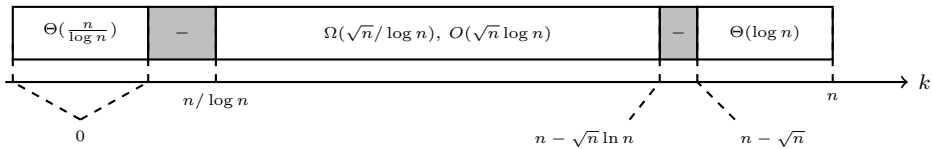
\begin{figure}[ht]
\centering
\begin{tikzpicture}[scale=1, thick]
    \node[above, align=center, font=\bfseries, font=\bfseries] at (6, 1.5) {}; 

    \draw[->] (0,0) -- (12,0) node[right, font=\scriptsize] {$k$};

    \fill[white] (0.25,0.3) rectangle (1.9,1); 
    \draw (0.1,0.3) rectangle (1.9,1); 
    \node[font=\tiny] at (1.0,0.65) {$\Theta(\frac{n}{\log n})$}; 

    \fill[gray!50] (1.9,0.3) rectangle (2.8,1); 
    \draw (1.9,0.3) rectangle (2.8,1);
    \node[font=\tiny] at (2.35,0.65) {$-$};
    
    \fill[gray!50] (8.7,0.3) rectangle (9.2,1); 
    \draw (8.7,0.3) rectangle (9.2,1);
    \node[font=\tiny] at (8.95,0.65) {$-$};
    
    \fill[white] (2.8,0.3) rectangle (8.7,1); 
    \draw (2.8,0.3) rectangle (8.7,1);
    \node[font=\tiny] at (5.75,0.65) {$\Omega(\sqrt{n}/\log n), \ O(\sqrt{n} \log n)$};

    \fill[white] (9.2,0.3) rectangle (11,1); 
    \draw (9.2,0.3) rectangle (11,1);
    \node[font=\tiny] at (10.1,0.65) {$\Theta(\log n)$};

    \draw[dashed] (0.1,0) -- (0.1,0.5); 
    \draw[dashed] (1,-0.5) -- (0.1,0);
    \draw[dashed] (1,-0.5) -- (1.9,0);
    \draw[dashed] (1.9,0) -- (1.9,1); 
    \draw[dashed] (2.8,0) -- (2.8,1); 
    \draw[dashed] (8.7,0) -- (8.7,1); 
    \draw[dashed] (8.3,-0.5) -- (8.7,0); 
    \draw[dashed] (9.7,-0.5) -- (9.2,0); 
    \draw[dashed] (9.2,0) -- (9.2,1); 
    \draw[dashed] (11,0) -- (11,1); 

    \node[below, font=\tiny] at (1,-0.5) {$0$};
    \node[below, font=\tiny] at (2.8,0) {$n/\log n$};
    \node[below, font=\tiny] at (7.7,-0.5) {$n - \sqrt{n}\ln n$}; 
    \node[below, font=\tiny] at (10.2,-0.5) {$n - \sqrt{n}$}; 
    \node[below, font=\tiny] at (11,0) {$n$};
\end{tikzpicture}
\vspace{-1ex} 
    \caption{Average doctors' rank in unbalanced market with varying level of loyalty $k$.}
    \label{fig:doctor_rankings}
\end{figure}

In Theorem~\ref{thm:very_high_loyalty_unbalanced} we provide an upper bound of $O(n\log n)$ proposals when $k\ge n-\sqrt{n}$. We show that in this case, only a few doctors propose during the unbalanced phase, as it is likely that a doctor gets rejected by all hospitals at such a high loyalty. In Theorem~\ref{thm:upper_bound_unbalanced_small_loyalty}, we obtain an  $O(n^{3/2}\log n)$ upper bound on the expected number of proposals for $k\ge  n/\log n$. Our analysis is intricate, and distinguishes between different sets of hospitals which are classified by the current rank they have of their match. We follow the progression of hospitals among these sets, and bound the number of proposals it takes for hospitals to either become unavailable for a re-match, or become very hard to match. Our analysis makes use of a variant of the {Coupon Collector Problem} which we term the \textit{Absent-Minded Coupon Collector Problem.} 

Theorem~\ref{thm:lower_bound_unbalanced} is our main and most technically involved result, where we prove that for loyalty levels $k\in[\sqrt{n}\ln n, n-\sqrt{n}\ln n]$, the expected number of proposals is $\Omega(n^{3/2}/\log n)$. Our analysis identifies a set of hospitals $T$ which \textit{(i)} is large enough and easy enough to match, so it is very unlikely for the algorithm to terminate before many of these hospitals are re-matched. \textit{(ii)} This set is hard enough to match so that the doctors need to send out many proposals in order to re-match many hospitals in $T$. These two properties drive our lower bound.

\subsection{Related Works}\label{sec:related}

The foundation of stable matching theory was laid by Gale and Shapley in their seminal work \cite{gale1962college}, which introduced the Deferred Acceptance (DA) algorithm and established key properties of stable matchings. Building on this framework, many subsequent studies have investigated one-to-one matching markets under random preferences. A significant early contribution by \citet{wilson1972analysis} examined the mathematical structure of stable matchings and the number of proposals required to achieve stability. Later, Pittel showed that the expected number of stable matchings grows quasi-linearly with market size \cite{Pittel89}, but that average rank follow a ``law of hyperbola'' \cite{pittel1992likely}, clarifying how matchings that are optimal for one side can differ significantly from matchings optimal for the other.

An important thread of this literature focuses on market imbalance in the presence of full-length preference lists. \citet{ashlagi2017unbalanced} revealed that, when the market becomes unbalanced, the advantage for the proposing side disappears because the number of potential stable matchings shrinks drastically. Building on this, \citet{CaiT22} continued to investigate how adding an agent on the proposing side transforms the matching outcome, further illuminating the dynamics of competitive imbalance. 

Another related line of research concerns ties and truncated preference lists. \citet{mcvitie1971stable} introduced the idea of ties in preference lists, allowing agents to rank different potential partners equally. \citet{ImmorlicaM05,ImmorlicaM15} studied the effects of shorter preference lists, where each agent ranks only a subset of the agents on the opposite side. Similar to imbalance, shorter lists reduce the total number of stable matchings; both unbalanced markets and truncated lists tend to favor one side and lengthen the rejection process. Nevertheless, these two scenarios differ in the intensity and nature of the competition, as well as in how the matching algorithm's runtime is affected.

\citet{KanoriaMQ21} examined the interplay between imbalance and truncated lists and found that shorter lists can sometimes mitigate the negative effects of competition caused by market imbalance. This suggests promising directions for future research: for instance, exploring how the introduction of loyalty might interact with truncated preferences to stabilize outcomes in otherwise highly competitive settings.

\begin{figure}[ht]
	\centering
	\begin{subfigure}{0.49\columnwidth}	
		\centering
		\includegraphics[width=\columnwidth]{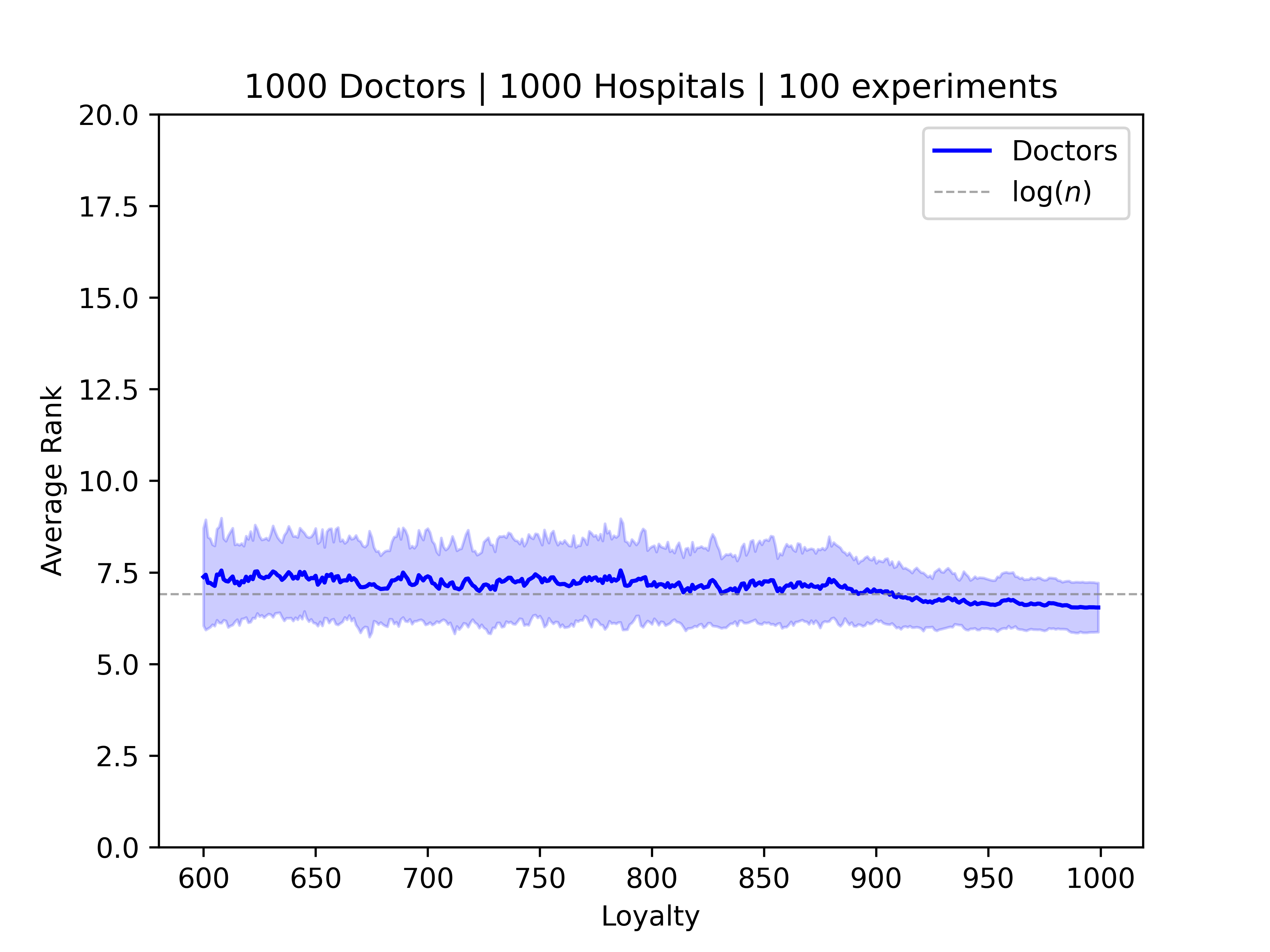}
  \vspace{-4ex}
		\caption{Average doctors' rank in a balanced market. \label{fig:main_balanced}}
	\end{subfigure}
    \begin{subfigure}{0.49\columnwidth}	
		\centering
		\includegraphics[width=\columnwidth]{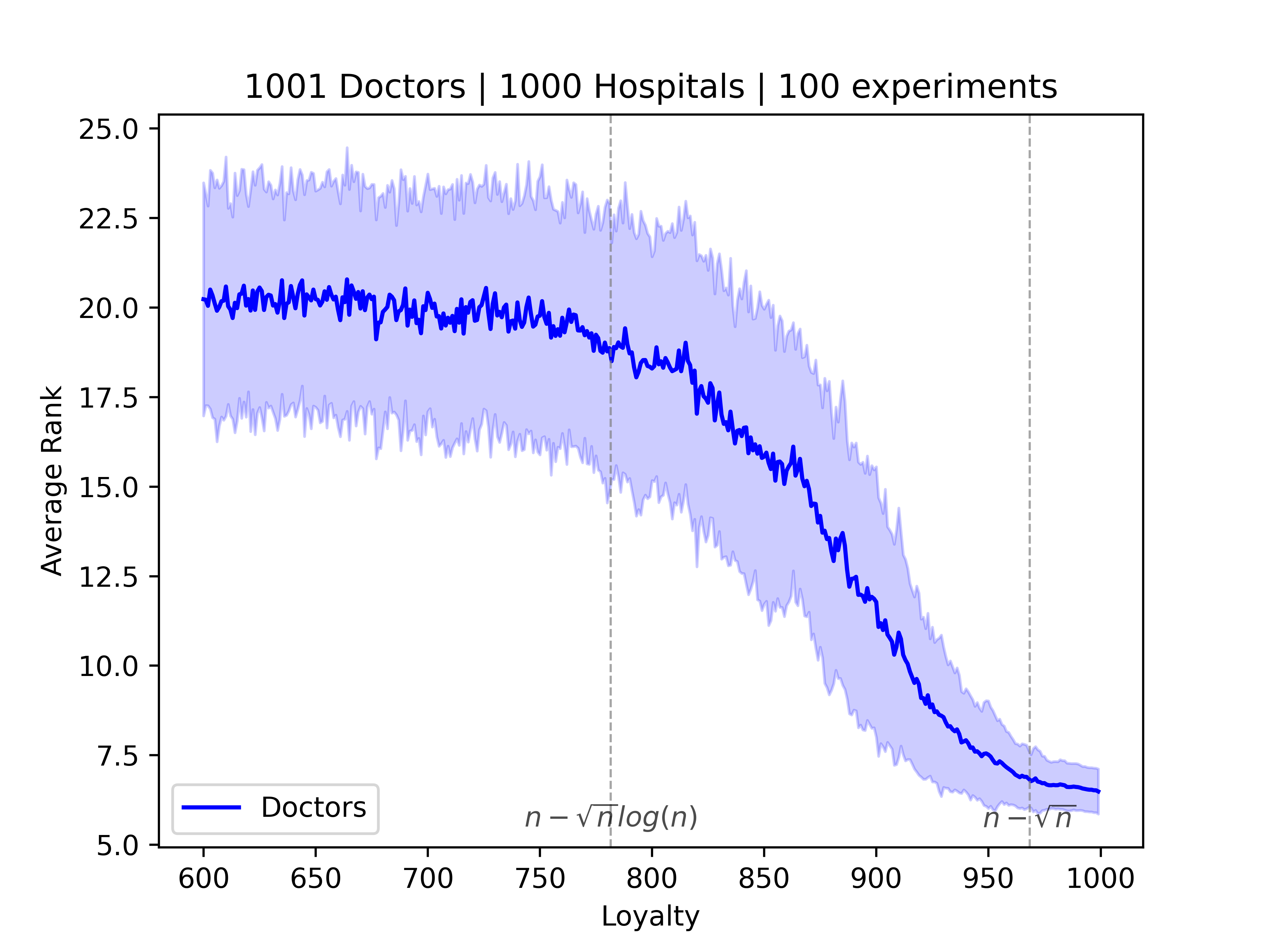}
    \vspace{-4ex}
		\caption{Average doctors' rank in an unbalanced market.\label{fig:main_unbalanced}}
	\end{subfigure}
    
        \vspace{-2ex}
	\caption{Simulations illustrating the average rank of doctors in balanced and unbalanced markets under varying levels of loyalty. On Figure~\ref{fig:main_balanced}, the horizontal dashed gray line marks the $\ln(1000)\approx 6.9$ height. On Figure~\ref{fig:main_unbalanced}, the two vertical dashed gray lines represent the two loyalties between which the phase transition occurs: $n-\sqrt{n}\ln n$ and $n-\sqrt{n}$.}
  \vspace{-2ex}
\end{figure}

\subsection{Organization}

We provide the fundamental definitions and introduce some basic tools in Section~\ref{sec:prelims}. We analyze the balanced market in Section~\ref{sec:balanced}, and the unbalanced market in Section~\ref{sec:unbalanced}. In Section~\ref{sec:simulations} we present simulations that illustrate and support our theoretical foundings. We end with a discussion in Section~\ref{sec:discussion}.

\section{Preliminaries} \label{sec:prelims}

We study a two-sided one-to-one matching market with a set of doctors $\doctors$ and a set of hospitals $\hospitals$ who are set to match. Each doctor $d$ and hospital $h$ has strict and full preference order $\succ_d$ and $\succ_h$ over $\hospitals$ and $\doctors$, respectively.

In order to present the dynamics and the role of loyalty in the market, we present the meta-algorithm, \textit{Deferred Acceptance}  (henceforth DA) in Algorithm~\ref{alg:DA}. DA takes a set of doctors, hospitals, their preferences, a function $\pull$ for choosing the next doctor to propose and a function $\accept$ for determining whether the hospital accepts the next proposal. At each iteration, the algorithm chooses the next doctor to propose according to the $\pull$ function, this doctor proposes to their favorite hospital they haven't proposed to yet, and the hospital chooses whether to accept the proposal according to the $\accept$ function. The algorithm terminates when all doctors are matched or when some doctor is rejected by all hospitals. In the vanilla setting,  $\accept(\succ_h,d,\mu(h))=true$ if and only if $d\succ_h \mu(h)$. Moreover, for every implementation of $\pull$, the same (stable) matching is returned. In our setting, since $\accept$ is modified, $\pull$ might affect the returned matching. Moreover, it might be the case where for some hospital $h$ and two doctors $d\ne d'$, both $\accept(\succ_h, d,d')=false$ and $\accept(\succ_h, d',d)=false$

\begin{algorithm}[ht!]
\caption{(Doctor-Proposing) Deferred Acceptance Meta-Algorithm (DA)}
\label{alg:DA}
\begin{algorithmic}[1]
\REQUIRE \begin{itemize}
    \item Doctors $\doctors$, Hospitals $\hospitals$.
    \item Preference lists $\{\succ_d\}_{d\in\doctors}$,$\{\succ_h\}_{h\in\hospitals}$.
    \item A function $\pull$ to choose the next doctor to propose.
    \item A function $\accept$ to determine whether the hospital accepts the proposal.
\end{itemize} 
\STATE Initialize matching $\mu$ to be empty (i.e., every agent's partner is undefined)
\STATE Initialize history $\hist \gets \phi$ to be empty.
\STATE Initialize $\mathcal{U} \gets \doctors$ to be the set of all unmatched doctors

\WHILE{$|\mathcal{U}| > 0$ and no $d\in \doctors$ was rejected by all hospitals}
  \STATE Choose $d \gets \pull(\mathcal{U},\hist)$
  \STATE Let $d$ propose to his most preferred hospital $h$ to whom $d$ has not yet proposed
  \IF{$\accept(\succ_h,d, \mu(h))=true$}
    \IF{$\mu(h)$ is defined}
      \STATE Add $\mu(h)$ to $\mathcal{U}$
    \ENDIF
    \STATE Remove $d$ from $\mathcal{U}$
    \STATE Assign $\mu(h) \gets d$
  \ENDIF
  \STATE Add $\langle d,h, \mu(h),\accept(\succ_h,d, \mu(h))\rangle$ to $\hist$ 
\ENDWHILE

\end{algorithmic}
\end{algorithm}

In order to reason about stable outcomes in the generalized DA Meta-Algorithm, we extend the notion of stability.
\begin{definition}[Stable Matching]
    Given a matching $\mu$, a doctor-hospital pair $(d,h)$ is called a \textbf{blocking pair} if $h\succ_d \mu(d)$ and $\accept(\succ_h, d, \mu(h))=true$. A  matching $\mu$ is \textbf{stable} if no doctor-hospital pair $(d,h)$ forms a blocking pair. 
\end{definition}

For DA to return a stable matching we need the following consistency property of the $\accept$ function.
\begin{definition}[Consistency]
    An $\accept$ function is consistent if for every $\succ_h$, the following holds:
    \begin{enumerate}
        \item If $d\succ_h d'$, then $\accept(\succ_h, d', d)=false$.
        \item If $\accept(\succ_h, d, d')=false$ for some $d,d'$, then for every $\hat{d}$ such that $d\succ_h \hat{d}$, $\accept(\succ_h, \hat{d}, d')=false$. 
        \item For every $h,d$, $\accept(\succ_h, d, \phi)=true$.
    \end{enumerate}
\end{definition}

Next, we establish that consistency suffices for the DA Meta-Algorithm to end up in a stable matching. 

\begin{lemma} \label{lem:stable_lemma}
    When given a consistent $\accept$ function as input, the DA Meta-Algorithm produces a stable matching.
\end{lemma}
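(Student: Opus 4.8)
The plan is to argue by contradiction: assume the final matching $\mu$ admits a blocking pair $(d,h)$, so that $h \succ_d \mu(d)$ and $\accept(\succ_h, d, \mu(h))=true$, and derive a contradiction with the latter condition. Throughout, I would lean on two structural facts about the run of the algorithm. First, because $d$ proposes to hospitals in strictly decreasing order of $\succ_d$ and never revisits a hospital, the fact that $d$ is finally matched to the less-preferred $\mu(d)$ (or is left unmatched, in which case it was rejected everywhere) forces $d$ to have proposed to $h$ at some earlier iteration. Second, a hospital's tentative partner can only improve: if $h$'s incumbent changes from $d_i$ to $d_{i+1}$, this is because $d_{i+1}$ proposed and $\accept(\succ_h, d_{i+1}, d_i)=true$, and the contrapositive of consistency property~(1) then yields $d_{i+1} \succ_h d_i$.

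Next I would fix the iteration at which $d$ proposes to $h$ and let $d_0$ be $h$'s incumbent just before it (possibly $\phi$), splitting into two cases according to whether the proposal is accepted. If it is rejected, then $\accept(\succ_h, d, d_0)=false$ with $d_0 \neq \phi$ (by property~(3) an empty hospital always accepts). If it is accepted, then $d$ becomes the incumbent, and since $\mu(h) \neq d$ (otherwise $\mu(d)=h$, contradicting $h \succ_d \mu(d)$), some later doctor $d_1$ displaces $d$; property~(1) gives $d_1 \succ_h d$ and hence $\accept(\succ_h, d, d_1)=false$. In either case I arrive at a common situation: at some point $h$ has an incumbent $d^\ast$ with $\accept(\succ_h, d, d^\ast)=false$.

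The heart of the argument is then the invariant that, from this moment onward, \emph{every} incumbent $d'$ of $h$ satisfies $\accept(\succ_h, d, d')=false$, which I would prove by induction along the chain of incumbents. For the inductive step, suppose the current incumbent $d_i$ satisfies $\accept(\succ_h, d, d_i)=false$ and is replaced by $d_{i+1}$, so $\accept(\succ_h, d_{i+1}, d_i)=true$. Applying consistency property~(2) to $\accept(\succ_h, d, d_i)=false$ shows that every doctor ranked below $d$ by $h$ is rejected against $d_i$; since $d_{i+1}$ is \emph{not} rejected against $d_i$, it must be that $d_{i+1} \succ_h d$, and then property~(1) gives $\accept(\succ_h, d, d_{i+1})=false$, completing the induction. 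Taking $d'=\mu(h)$ yields $\accept(\succ_h, d, \mu(h))=false$, the desired contradiction. I expect the main obstacle to be precisely this step: one might hope that rejection is monotone in the quality of the incumbent (so that improving $h$'s partner could only make $d$ more likely to be rejected), but this need not follow from consistency alone. Property~(2) is what rescues the argument, since it guarantees that $h$'s incumbent can never improve to a doctor against whom $d$ would be accepted; it is this interplay, rather than any monotonicity of $\accept$ in its incumbent argument, that must carry the proof.
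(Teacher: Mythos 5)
Your proof is correct, and its skeleton is the same as the paper's: assume a blocking pair $(d,h)$, observe that $d$ must have proposed to $h$ at some point, and split on whether that proposal was rejected outright or accepted and later displaced. Where you genuinely differ is in the justification of the crucial step: showing that once $\accept(\succ_h, d, d_i)=false$ for some incumbent $d_i$ of $h$, the same holds for every subsequent incumbent, in particular for $\mu(h)$. The paper handles this in one shot by invoking the second consistency property as if it said rejection is monotone in the \emph{incumbent} argument (i.e., $\accept(\succ_h,d,d')=false$ and $\hat d\succ_h d'$ imply $\accept(\succ_h,d,\hat d)=false$); as you correctly flag, that statement does not follow from consistency alone --- for instance, with $d_1\succ_h d_2\succ_h d_3$ one can take $\accept(\succ_h,d_1,d_3)=false$ yet $\accept(\succ_h,d_1,d_2)=true$ while satisfying all three properties. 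Your induction supplies the mechanism that actually works: since $d_{i+1}$ was accepted against $d_i$ while $d$ was rejected against $d_i$, property~(2) --- applied to the proposer argument, which is what it really governs --- forces $d_{i+1}\succ_h d$, and property~(1) then yields $\accept(\succ_h,d,d_{i+1})=false$. In this sense your write-up is the rigorous version of a step the paper states too loosely, and the paper's argument is best read as a compression of yours. One caveat you share with the paper: both proofs assume a blocking doctor must have proposed to $h$, and for an \emph{unmatched} doctor this relies on the market sizes studied in the paper ($|\doctors|\le|\hospitals|+1$), under which the doctor triggering termination is the unique unmatched one; with two or more surplus doctors, a never-proposing unmatched doctor can in fact form a blocking pair, so the lemma needs that restriction on its scope.
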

\begin{proof}
    Since in each iteration, the proposing doctor goes down their preference list, the algorithm is guaranteed to terminate. We establish that the produced matching is stable.

    Assume towards a contradiction that there exists a blocking pair $(d, h)$ in the final matching $\mu$. Thus, $h \succ_d \mu(d)$ and $\accept(\succ_h, d, \mu(h)) = true$. By the algorithm's definition, for this to happen, a doctor $d$ must propose to $h$, and either
    \begin{enumerate}
        \item Get rejected because at the time $d$ proposes, $h$ is matched to some doctor $d'$ for which $\accept(\succ_h, d, d') = false$. By the first consistency property, $h$ can only accept proposals from doctors they prefer to their current match. Therefore, either once the algorithm terminates, $\mu(h)=d'$, and then $\accept(\succ_h, d, \mu(h)) = false$, which contradicts $(d,h)$ being a blocking pair; or $h$ ends up being matched to $\mu(h)=\hat{d}$, and $\hat{d}\succ_h d'$, and by the second consistency property, if  $\accept(\succ_h, d, d') = false$ then $\accept(\succ_h, d, \hat{d}=\mu(d)) = false$, which again contradicts $(d,h)$ being a blocking pair.
        \item $d$ gets accepted by $h$, and then gets thrown back to the market since $h$ accepted another proposal. Here, again, by the first consistency property, $h$ ends up with a doctor $\mu(h)=d'$ such that $d'\succ_h d,$ which again by the first consistency property implies that $\accept(\succ_h, d, \mu(h)) = false$, which contradicts $(d,h)$ being a blocking pair.
    \end{enumerate}
\end{proof}

\paragraph{Loyalty.} We use the $\accept$ function to model loyalty. The rank of doctor $d$ for hospital $h$ is defined as 
\begin{eqnarray}
    \rank_d(h) = 1+|\{h'\in \hospitals\ :\ h'\succ_d h\}|. 
\end{eqnarray}
Similarly, the rank of hospital $h$ for doctor $d$ is defined as
\begin{eqnarray}
    \rank_h(d) = 1+|\{d'\in \doctors\ :\ d'\succ_h s\}|. 
\end{eqnarray}
Loyalty is modeled by a parameter $k\in \{0,1,\ldots, |\doctors|-1\}$. We say the hospitals exhibit loyalty $k$ if for every $d,d',\succ_h$, 
\begin{eqnarray}
    &\accept(\succ_h, d, d')=true &\nonumber \\
    &\iff&\nonumber \\ 
   &\rank_h(d)  < \rank_h(d')-k\mbox{ or } \mu(d)=\phi.& \label{eq:loyalty}    
\end{eqnarray}

\begin{observation}
    For any loyalty parameter $k \in \{0,1,\ldots, |\doctors|-1\}$, defining $\accept$ as in Eq.~\eqref{eq:loyalty} yields a consistent $\accept$ function. Thus Algorithm~\ref{alg:DA} ends up in a stable matching for any $\pull$ function.
\end{observation}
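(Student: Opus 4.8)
The plan is to verify the three consistency properties directly from the defining equivalence~\eqref{eq:loyalty}, and then invoke Lemma~\ref{lem:stable_lemma} to obtain the stability conclusion. Throughout, I use two elementary facts: that $d\succ_h d'$ is equivalent to $\rank_h(d)<\rank_h(d')$ (a smaller rank means more preferred), and that $k\ge 0$. I read the third argument of $\accept$ as the hospital's \emph{current} partner, so that the second disjunct of~\eqref{eq:loyalty} fires exactly when that partner is $\phi$, i.e. when the hospital is still unmatched.

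First I would dispatch Property~3, the base case: when the current partner equals $\phi$, the second disjunct of~\eqref{eq:loyalty} holds, so $\accept(\succ_h,d,\phi)=true$ by definition, with nothing further to check. Next, for Property~1, I would assume $d\succ_h d'$, i.e. $\rank_h(d)<\rank_h(d')$, and show $\accept(\succ_h,d',d)=false$. Since the current partner $d$ is a genuine doctor (not $\phi$), the second disjunct fails, so it suffices to rule out the first, namely $\rank_h(d')<\rank_h(d)-k$. But $k\ge 0$ gives $\rank_h(d)-k\le \rank_h(d)<\rank_h(d')$, so the required strict inequality cannot hold and the predicate is false.

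For Property~2, I would start from $\accept(\succ_h,d,d')=false$, which forces both disjuncts of~\eqref{eq:loyalty} to fail: the partner $d'$ is not $\phi$, and $\rank_h(d)\ge \rank_h(d')-k$. Taking any $\hat d$ with $d\succ_h\hat d$, so $\rank_h(\hat d)>\rank_h(d)$, I would chain the inequalities to get $\rank_h(\hat d)>\rank_h(d)\ge \rank_h(d')-k$; hence $\rank_h(\hat d)<\rank_h(d')-k$ is impossible, and since $d'\ne\phi$ the second disjunct still fails, so $\accept(\succ_h,\hat d,d')=false$, as required. With all three properties in hand, consistency holds, and Lemma~\ref{lem:stable_lemma} immediately yields that Algorithm~\ref{alg:DA} terminates at a stable matching for every $\pull$ function.

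The computations here are genuinely routine; the only points demanding care---and thus the closest thing to an ``obstacle''---are keeping the rank orientation straight (smaller rank $=$ more preferred), using the hypothesis $k\ge 0$ in exactly the right place in each inequality chain, and adopting the reading that the empty-partner disjunct is precisely what guarantees Property~3. No deeper difficulty arises.
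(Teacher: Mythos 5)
Your proof is correct, and since the paper states this observation without any proof at all, your argument---direct verification of the three consistency properties from Eq.~\eqref{eq:loyalty} followed by an appeal to Lemma~\ref{lem:stable_lemma}---is exactly the routine verification the paper leaves implicit. Your interpretive choice is also the right one: the second disjunct in Eq.~\eqref{eq:loyalty}, written in the paper as ``$\mu(d)=\phi$,'' must be read as the hospital's current partner (the third argument of $\accept$) being $\phi$, since this is what the algorithm passes in and what makes Property~3 hold.
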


\paragraph{Preference Generation.} Preferences orders of each doctor and hospitals are determined by an independently drawing a uniformly at random permutation over $\hospitals$ and $\doctors$, respectively.

\paragraph{Coupon Collector's Problem}
Consider a coupon collector who wants to collect $n$ types of coupons. coupons are collected one at a time, and each coupon is equally likely to be any one of $n$ types. A  folklore result shows the following.
\begin{lemma} \label{lem:coupon_collector_bound}
    Let $T$ be the random variable denoting the number of attempts required for the collector to collect the entire set of coupons. Then: $\E[T]=n\cdot H_n=\Theta(n\log n),$  and $\Pr[T>2n\ln n]\le 1/n.$
\end{lemma}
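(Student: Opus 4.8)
The plan is to prove the two claims separately, both via the standard phase-decomposition and union-bound arguments for the coupon collector. For the expectation, I would write $T = \sum_{i=1}^n T_i$, where $T_i$ is the number of additional draws needed to obtain the $i$th distinct coupon type once $i-1$ distinct types have already been collected. When exactly $i-1$ distinct types are held, each fresh draw yields a new type with probability $p_i = (n-i+1)/n$, independently of the past, so $T_i$ is geometric with $\E[T_i] = 1/p_i = n/(n-i+1)$. By linearity of expectation,
\begin{eqnarray*}
\E[T] = \sum_{i=1}^n \frac{n}{n-i+1} = n\sum_{j=1}^n \frac{1}{j} = n\cdot H_n.
\end{eqnarray*}
I then invoke the standard estimate $H_n = \ln n + \Theta(1)$ to conclude $\E[T] = \Theta(n\log n)$.

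For the tail bound, I would switch from tracking phases to a union bound over coupon types, which is much cleaner for a high-probability statement. Fix $m = 2n\ln n$, and for each type $j$ let $A_j$ be the event that type $j$ never appears among the first $m$ draws. Since each draw misses type $j$ independently with probability $1 - 1/n$, the elementary inequality $1 - x \le e^{-x}$ gives
\begin{eqnarray*}
\Pr[A_j] = \left(1 - \tfrac{1}{n}\right)^m \le e^{-m/n} = e^{-2\ln n} = \frac{1}{n^2}.
\end{eqnarray*}
The event $\{T > m\}$ is exactly $\bigcup_{j} A_j$ (the collector has failed after $m$ draws iff some type is still missing), so a union bound yields $\Pr[T > 2n\ln n] \le \sum_{j=1}^n \Pr[A_j] \le n\cdot n^{-2} = 1/n$.

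The main point to note is that there is no genuine obstacle here: this is a folklore result, and both halves are routine. The only steps requiring mild care are \emph{(i)} verifying that the phases $T_i$ are independent geometric variables with the stated success probabilities, which follows from the memorylessness of uniform sampling with replacement, and \emph{(ii)} the passage from the product $(1-1/n)^m$ to the clean exponential bound via $1-x \le e^{-x}$. I would present the expectation computation first, as it motivates the decomposition, and then give the union-bound tail estimate as a short standalone argument.
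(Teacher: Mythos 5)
Your proof is correct, and there is nothing in the paper to compare it against: the paper states this lemma as a folklore result and provides no proof of its own. Your two arguments are the canonical ones---the decomposition into independent geometric phases $T_i$ with success probability $(n-i+1)/n$ for the expectation, and the union bound over coupon types with $(1-1/n)^{2n\ln n} \le e^{-2\ln n} = n^{-2}$ for the tail---and both are sound (modulo the usual harmless integrality of $2n\ln n$, which every standard treatment also glosses over).
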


\paragraph{Amnesiac Doctors and the Coupon Collector.} 
\citet{wilson1972analysis} was the first to draw parallels between deferred acceptance in balanced matching markets (without loyalty) and the coupon collector's problem. The correspondence is the following: Generating a uniformly random preference order can be modeled by iteratively selecting the next most preferred partner uniformly at random (u.a.r.) from the list of unranked partners. Rather than generating the entire preference list upfront, whenever a doctor proposes, the next hospital to be proposed to is chosen u.a.r. from the hospitals the doctor is yet to propose to. Notice that in a balanced market, DA terminates when the $n$th hospital receives its first proposal.

To derive an upper bound on the number of proposals, we consider \textit{amnesiac doctors}. When an amnesiac doctor proposes, they choose a random hospital \textit{from the list of all hospitals}, including ones the doctor already proposed to. If the chosen hospital has already received a proposal from the doctor, it immediately rejects the proposal. The doctor repeats the process until a new hospital is selected. Since this process includes redundant proposals, it provides an upper bound on the number of proposals needed. This modified process is equivalent to the coupon collector problem, where each proposal corresponds to drawing a random coupon. Thus, the expected number of proposals can be shown to be $n \cdot H_n = O(n \log n)$. 

\subsection{Probabilistic Facts}
The following is a known property of uniform sampling. 
\begin{claim} \label{clm:min_element}
    Let $X$ denote a uniformly randomly drawn subset of $k$ distinct elements of ${1, 2, \ldots , n}$, and let $\min(X)$ denote the smallest element of $X$. We have $\E[\min(X)] = \frac{n + 1}{k + 1}$. 
\end{claim}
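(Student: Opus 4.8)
The plan is to compute the expectation through the tail-sum formula for a nonnegative integer-valued random variable, namely $\E[\min(X)] = \sum_{j \ge 1} \Pr[\min(X) \ge j]$. The advantage of working with the upper tail is that the event $\{\min(X) \ge j\}$ is simply the event that every element of $X$ lies in $\{j, j+1, \ldots, n\}$, which is far cleaner to count directly than the pointwise distribution of $\min(X)$.

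First I would fix $j$ and count subsets. Since $X$ is a uniformly random $k$-subset of an $n$-element set, there are $n-j+1$ candidate elements in $\{j, \ldots, n\}$, so $\Pr[\min(X) \ge j] = \binom{n-j+1}{k} / \binom{n}{k}$, with the convention that this is $0$ once $n-j+1 < k$. Summing over $j$ from $1$ to $n-k+1$ and reindexing with $m = n-j+1$ converts the numerators into the single sum $\sum_{m=k}^{n} \binom{m}{k}$.

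The key step — and the one place where a slip is easy — is to evaluate this sum with the hockey-stick identity $\sum_{m=k}^{n} \binom{m}{k} = \binom{n+1}{k+1}$. Dividing by $\binom{n}{k}$ and simplifying the ratio of binomials gives $\binom{n+1}{k+1}/\binom{n}{k} = (n+1)/(k+1)$, which is exactly the claim. The only real obstacle is bookkeeping: getting the summation limits right so that the hockey-stick identity applies on the nose, and simplifying the binomial ratio $\frac{(n+1)!\,k!\,(n-k)!}{(k+1)!\,(n-k)!\,n!}$ without arithmetic error.

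As a sanity check and an alternative route, I would note a purely combinatorial symmetry argument: the $k$ chosen elements split the $n-k$ unchosen elements into $k+1$ gaps (before the first chosen element, between consecutive chosen ones, and after the last), and by exchangeability each gap has the same expected size $(n-k)/(k+1)$. Since $\min(X) = 1 + (\text{size of the first gap})$, this yields $\E[\min(X)] = 1 + (n-k)/(k+1) = (n+1)/(k+1)$, confirming the computation above.
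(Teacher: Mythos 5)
Your proposal is correct. Note, however, that the paper itself does not prove this claim at all --- it simply cites Appendix~A of \citet{CaiT22} --- so there is no internal proof to compare against; your write-up actually supplies what the paper outsources. Both of your routes check out: the tail-sum computation is clean (the limits $j=1,\ldots,n-k+1$ do line up exactly with the hockey-stick identity $\sum_{m=k}^{n}\binom{m}{k}=\binom{n+1}{k+1}$, and the ratio $\binom{n+1}{k+1}/\binom{n}{k}$ indeed collapses to $\frac{n+1}{k+1}$), and the gap/exchangeability argument is the classic stars-and-bars proof: the $k+1$ gap sizes are uniformly distributed over compositions of $n-k$, hence exchangeable, giving $\E[\min(X)]=1+\frac{n-k}{k+1}=\frac{n+1}{k+1}$. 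Either argument alone would suffice; the second is arguably the more illuminating, since it explains the $(n+1)/(k+1)$ form without any binomial manipulation.
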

A proof to Claim~\ref{clm:min_element} can be found in Appendix A of \cite{CaiT22}. 


\begin{lemma}[Multiplicative Chernoff Bound]
Let $X_1, \ldots, X_n$ be independent random variables such that $\forall i \in [n], X_i \in [0, 1]$. Define $X = \sum_{i=1}^n X_i$. Then, for any $0 < \epsilon < 1$, 

$$\Pr[|X - \E[X]| > \epsilon \cdot \E[X]] \leq 2 \cdot \exp\left(-\frac{\epsilon^2 \cdot \E[X]}{3}\right).$$

\end{lemma}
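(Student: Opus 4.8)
The plan is to prove the two one-sided tail bounds
$$\Pr[X > (1+\epsilon)\mu] \le \exp\!\left(-\epsilon^2\mu/3\right) \quad\text{and}\quad \Pr[X < (1-\epsilon)\mu] \le \exp\!\left(-\epsilon^2\mu/3\right),$$
where I write $\mu = \E[X]$, and then combine them by a union bound, which is exactly what produces the factor of $2$ in the statement. Each one-sided bound comes from the standard exponential-moment (Chernoff) method: apply Markov's inequality to $e^{tX}$ for a well-chosen real parameter $t$, exploit independence to factorize the moment generating function, and bound each factor using the constraint $X_i \in [0,1]$.

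For the upper tail I would fix $t > 0$ and write
$$\Pr[X \ge (1+\epsilon)\mu] = \Pr\!\left[e^{tX} \ge e^{t(1+\epsilon)\mu}\right] \le e^{-t(1+\epsilon)\mu}\,\E\!\left[e^{tX}\right].$$
By independence of the $X_i$, the moment generating function factorizes as $\E[e^{tX}] = \prod_{i=1}^n \E[e^{tX_i}]$. To bound each factor I would use convexity of $x \mapsto e^{tx}$ on $[0,1]$, which gives the pointwise chord bound $e^{tX_i} \le 1 + X_i(e^t - 1)$; taking expectations and applying $1 + y \le e^y$ yields $\E[e^{tX_i}] \le \exp\!\left(\E[X_i](e^t - 1)\right)$. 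Multiplying over $i$ collapses everything to the clean estimate $\E[e^{tX}] \le \exp\!\left(\mu(e^t - 1)\right)$, so that $\Pr[X \ge (1+\epsilon)\mu] \le \exp\!\left(\mu(e^t - 1) - t(1+\epsilon)\mu\right)$.

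Optimizing the exponent over $t$ gives $t = \ln(1+\epsilon)$ and the familiar bound $\Pr[X \ge (1+\epsilon)\mu] \le \bigl(e^{\epsilon}/(1+\epsilon)^{1+\epsilon}\bigr)^{\mu}$. The main work—and the only step I expect to be a genuine obstacle—is the purely analytic inequality needed to reduce this to the stated form: one must show $(1+\epsilon)\ln(1+\epsilon) - \epsilon \ge \epsilon^2/3$ for every $\epsilon \in (0,1)$. I would establish it by setting $f(\epsilon) = (1+\epsilon)\ln(1+\epsilon) - \epsilon - \epsilon^2/3$, observing $f(0) = 0$, and tracking the signs of the first few derivatives of $f$ (equivalently, expanding $\ln(1+\epsilon)$ as an alternating series and comparing coefficients) to conclude $f \ge 0$ on the interval.

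The lower tail is handled symmetrically: taking $t = \ln(1-\epsilon) < 0$ in the identical scheme gives $\Pr[X \le (1-\epsilon)\mu] \le \bigl(e^{-\epsilon}/(1-\epsilon)^{1-\epsilon}\bigr)^{\mu}$, and the analogous (and in fact slightly stronger) inequality $-\epsilon - (1-\epsilon)\ln(1-\epsilon) \le -\epsilon^2/2$ for $\epsilon \in (0,1)$ yields a bound of $\exp\!\left(-\epsilon^2\mu/2\right) \le \exp\!\left(-\epsilon^2\mu/3\right)$. Since $\Pr[|X - \mu| > \epsilon\mu]$ is the union of the two tail events, a union bound over the two estimates gives the claimed $2\exp\!\left(-\epsilon^2\mu/3\right)$, completing the proof.
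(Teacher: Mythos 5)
Your proof is correct. The paper does not prove this lemma at all---it is quoted as a standard probabilistic fact in the preliminaries---and your argument is precisely the canonical textbook derivation: Markov's inequality applied to $e^{tX}$, factorization of the moment generating function by independence, the convexity chord bound $\E[e^{tX_i}] \le \exp\bigl(\E[X_i](e^t-1)\bigr)$, optimization at $t=\ln(1\pm\epsilon)$, and the elementary inequalities $(1+\epsilon)\ln(1+\epsilon)-\epsilon \ge \epsilon^2/3$ and $(1-\epsilon)\ln(1-\epsilon)+\epsilon \ge \epsilon^2/2$ on $(0,1)$ (both of which indeed follow from the alternating-series expansion you indicate, e.g.\ $(1+\epsilon)\ln(1+\epsilon)-\epsilon = \epsilon^2/2-\epsilon^3/6+\cdots \ge \epsilon^2/2-\epsilon^2/6 = \epsilon^2/3$), followed by a union bound over the two tails.
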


\section{Balanced Market} \label{sec:balanced}

In this section, we study the case where the number of doctors equals the number of hospitals. \textit{In a balanced market, the algorithm terminates once all hospitals receive a proposal.} We show that for every variant of the DA algorithm which takes a consistent $\accept$ function as input, the expected number of proposals is $\Theta(n\log n)$. From now on, we call a DA variant with a consistent $\accept$ function, \textit{a consistent DA variant}.

\begin{theorem}
    For a consistent DA variant, there are $\Theta (n \log (n))$ proposals in expectation. Therefore, the average rank of doctors is $\Theta(\log n)$.   \label{thm:total-prop-balanced}
\end{theorem}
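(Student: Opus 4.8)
The plan is to reduce the rank statement to a clean bound on the total number of proposals $P$, and then prove $\E[P]=\Theta(n\log n)$. The key structural observation is that in a balanced market every run ends with all doctors matched: if some doctor were rejected by all $n$ hospitals then all $n$ hospitals would have received a proposal and hence be matched, which in a one-to-one market forces all $n$ doctors to be matched, a contradiction. Moreover, for any consistent $\accept$ function a currently matched doctor is always matched to the last hospital it proposed to, because the only ways to leave a hospital (an immediate rejection, or being displaced) send the doctor strictly down its list. Hence at termination $\rank_d(\mu(d))$ equals the number of proposals doctor $d$ made, so $\sum_{d}\rank_d(\mu(d))=P$ and the average doctor rank is \emph{exactly} $P/n$. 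Thus the whole theorem reduces to $\E[P]=\Theta(n\log n)$.

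For the upper bound I would use the amnesiac-doctor coupling together with the coupon collector bound (Lemma~\ref{lem:coupon_collector_bound}), as in Corollary~\ref{cor:balanced_upperbound}. Every real proposal goes to a hospital drawn uniformly from those the proposing doctor has not yet tried, and since an unmatched hospital has received no proposal at all, the first proposal ever delivered to any hospital is automatically a ``fresh'' draw. Running the amnesiac variant, where each step draws uniformly from all $n$ hospitals and silently discards repeats, therefore terminates exactly when all $n$ coupons (hospitals) have been collected, which takes $nH_n$ draws in expectation; the real proposals form a subset of these draws, so $\E[P]\le nH_n=O(n\log n)$. Crucially this argument is insensitive to loyalty: a consistent $\accept$ function only produces additional displacements and re-proposals, all of which are still fresh-or-redundant draws captured by the same coupling.

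The lower bound is the main obstacle, since under an arbitrary consistent $\accept$ function doctors can be displaced and re-enter the market, so the classical vanilla-DA analysis does not transfer directly. My plan mirrors Lemma~\ref{lem:balanced_lower_bound}: fix a threshold $t=\Theta(\log n)$ and call a doctor \emph{heavy} if it makes at least $t$ proposals. The quantitative engine is that when $j$ hospitals are still unmatched, all $j$ of them lie in \emph{every} doctor's remaining list (they have received no proposal), so a proposal by doctor $d$ matches a new hospital with probability exactly $j/|\mathrm{remaining}_d|$; for a light doctor $|\mathrm{remaining}_d|\ge n-t$, making this probability at most $(1+o(1))\,j/n$. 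I would then split on the number of heavy doctors. In case $(i)$, if a constant fraction of doctors are heavy (with constant probability), then $P\ge t\cdot\Omega(n)=\Omega(n\log n)$ immediately. In case $(ii)$, heavy doctors are few, so the matching of the last hospitals is dominated by light-doctor proposals, each succeeding with probability only $\approx j/n$; matching the last, say, $n/\log n$ hospitals then behaves like the tail of a coupon-collector process and costs $\sum_{j=1}^{n/\log n}\Theta(n/j)=\Omega(n\log n)$ proposals, which I would make rigorous using the multiplicative Chernoff bound. Taking expectations over the two cases yields $\E[P]=\Omega(n\log n)$.

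The delicate point throughout is precisely the interaction with loyalty: heavy doctors have short remaining lists and hence high per-proposal success probabilities, so a priori a handful of heavy doctors could sweep up all remaining hospitals cheaply and destroy the $\log n$ factor. The purpose of the heavy/light dichotomy is exactly to preclude this, since either the heavy doctors are numerous enough to pay $\Omega(n\log n)$ by themselves, or they are too few and the expensive small-$j$ tail must be resolved by inefficient light proposals. Turning this dichotomy into a clean bound that holds for \emph{every} consistent $\accept$ function, with the success-probability estimates and concentration made uniform in the (adversarially chosen) loyalty parameter, is where I expect the real work to lie.
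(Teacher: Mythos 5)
Your reduction of the rank statement to counting proposals, and your upper bound via the amnesiac/coupon-collector coupling, are correct and essentially identical to the paper's (Observation~\ref{obs:balanced_amnesiac_upperbound} and Corollary~\ref{cor:balanced_upperbound}). The genuine gap is in the lower bound, and it sits exactly at the point you yourself flag as delicate. You set the heaviness threshold at $t=\Theta(\log n)$ proposals and split on whether a constant fraction of doctors are heavy. Case $(i)$ is fine: $\Omega(n)$ doctors each making $\Omega(\log n)$ proposals pay $\Omega(n\log n)$. But case $(ii)$ does not follow: ``fewer than $\delta n$ heavy doctors'' does not imply that the last $n/\log n$ hospitals are matched by light doctors. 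There can be up to $\delta n$ heavy doctors --- far more than the $n/\log n$ tail hospitals --- and a doctor who is heavy under your definition may have made anywhere between $\Theta(\log n)$ and $n$ proposals, so its per-proposal success probability $j/(n-m_d)$ is completely uncontrolled; your $(1+o(1))\,j/n$ estimate applies only to light doctors. Concretely, if each tail hospital receives its first proposal from a distinct heavy doctor, your two cases together certify only a cost of $(\text{number of sweepers})\times t=\Theta(n)$, far below $n\log n$. The regime in which the number of heavy doctors lies between roughly $n/\log n$ and $\delta n$ is simply not covered by either case, so the dichotomy as stated does not preclude the cheap sweep you are worried about.

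The paper closes precisely this hole with two coupled choices (Lemmas~\ref{lem:balanced_lb_heavy} and~\ref{lem:balanced_lb_light}). First, heavy means $m_d\ge n/2$, so any doctor whose per-proposal success rate could beat $O(j/n)$ substantially has already prepaid $\Omega(n)$ proposals, and the split is on whether more than $\ln n$ doctors are heavy, so the heavy case pays $\tfrac{n}{2}\cdot\ln n$. Second, and more importantly, the paper does not track matched hospitals; it tracks the number of matched doctors from the set $S_i$ of currently light doctors. Since in a balanced market every doctor ends up matched, termination forces this count to reach $|S_i|$, and an increase of the count can only be caused by a doctor from $S_i$ proposing either to an unmatched hospital (at most $n-i+1$ of them) or to a hospital held by a doctor outside $S_i$ (at most $\ln n$ of them under the conditioning event). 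Heavy doctors getting matched simply does not advance the tracked quantity, so they cannot help the adversary, and the light doctors' success probability is at most $\frac{n-i+1+\ln n}{n/2-\ell+1}$, which yields the harmonic sum. If you insist on your hospital-counting formulation, you would need an additional amortization argument showing that any doctor efficient enough in the tail must have $m_d=n-o(n)$ and hence has prepaid $\Omega(n)$ proposals, with that prepayment not reusable across many tail hospitals (displacements are themselves expensive); this is doable but amounts to re-deriving the paper's $n/2$ threshold and its $S_i$ bookkeeping.
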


This matches the bound of the standard setting, where $\accept(\succ_h,d,d')=true$ simply if $d\succ_h d'$. 
Unlike the standard setting, the DA Meta-Algorithm can output multiple stable outcomes, even for the same $\accept$ function, for different $\pull$ functions.\footnote{Consider the case where there are two doctors, $d_1,d_2$ and two hospitals $h_1,h_2$, and for every doctor $d_i$, $h_1\succ_{d_i} h_2$. With loyalty $k=1$, the doctor that proposes first ends up matched to $h_1$.} We prove Theorem~\ref{thm:total-prop-balanced} by providing an upper bound in Corollary~\ref{cor:balanced_upperbound} and a lower bound in Lemma~\ref{lem:balanced_lower_bound}.


\paragraph{Upper Bound.}
As discussed in Section~\ref{sec:prelims}, for the standard DA implementation, \cite{wilson1972analysis} established an upper bound on the  expected number of proposals by analyzing the case where doctors are \textit{amnesiac}, and whenever they need to send out their next proposal, they choose a hospital uniformly at random from $\hospitals$. We notice that this reasoning holds for every consistent DA variant. 

Consider such a DA variant $A$. Fix $A$, and make the doctors amnesiac. Let $\tilde{A}$ denote the resulting algorithm. 
Let $P_i^A$ and $P_i^{\tilde{A}}$ denote the number of proposals are made between the $(i-1)$th hospital is matched until the $i$th hospital is matched in $A$ and $\tilde{A}$, respectively. Let $P^A=\sum_{i=1}^n P_i^A$  and $P^{\tilde{A}}= \sum_{i=1}^n P_i^{\tilde{A}}$ denote the total number of proposals made by algorithm $A$ and $\tilde{A}$, respectively.  Clearly, as in $\tilde{A}$ there can be redundant proposals, $\E[P_i^A]\le \E[P_i^{\tilde{A}}]$ for every $i\in [n]$, which implies $\E[P^A]\le\E[P^{\tilde{A}}]$. Thus, we only need to bound  $\E[P^{\tilde{A}}]$ to obtain our bound.

The following claim 
follows standard coupon collector arguments. The proof appears in Appendix~\ref{sec:balanced_proofs} for completeness.

\begin{observation} \label{obs:balanced_amnesiac_upperbound}
    For every amnesiac counterpart $\tilde{A}$ of some consistent DA variant $A$,  the expected number of proposals is $\E[P^{\tilde{A}}] = n\cdot H_n=O(n\log n)$.
\end{observation}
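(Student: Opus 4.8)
The plan is to exhibit the amnesiac process $\tilde A$ as a verbatim instance of the coupon collector problem and then invoke Lemma~\ref{lem:coupon_collector_bound}. The single structural fact that makes this work is that a hospital becomes matched exactly when it first receives a proposal and never reverts to being unmatched: the third consistency property guarantees $\accept(\succ_h, d, \phi)=true$, so the very first proposal any hospital receives is accepted, and thereafter a matched hospital only ever swaps to a more preferred doctor. Hence, at every point of the execution, the set of matched hospitals coincides with the set of hospitals that have received at least one proposal, and in a balanced market the run halts precisely when this set becomes all of $\hospitals$.

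Next I would pin down the distribution of the proposal targets. Because the doctors are amnesiac, each time a proposal is issued the target hospital is drawn uniformly at random from all of $\hospitals$, independently of the history --- crucially, this holds regardless of which doctor the $\pull$ function selects and regardless of the current matching $\mu$, since the $\accept$ decisions and the $\pull$ schedule only influence \emph{who} proposes, never \emph{how} the target is sampled. Pooling the draws of all doctors into a single stream therefore yields a sequence of i.i.d.\ uniform draws over $\hospitals$; identifying each hospital with a coupon type, ``collecting'' a coupon corresponds exactly to a hospital receiving its first proposal, i.e.\ becoming matched.

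It then remains to count. Conditioned on exactly $i-1$ hospitals being matched, the next proposal lands on one of the $n-i+1$ still-unmatched hospitals with probability $(n-i+1)/n$, independently of the past; hence $P_i^{\tilde A}$, the number of proposals made while $i-1$ hospitals are matched, is geometric with this success probability, so $\E[P_i^{\tilde A}]=n/(n-i+1)$. By linearity of expectation,
\begin{equation*}
\E[P^{\tilde A}] = \sum_{i=1}^n \frac{n}{n-i+1} = n\sum_{j=1}^n \frac1j = n H_n,
\end{equation*}
and $n H_n = \Theta(n\log n)=O(n\log n)$ by Lemma~\ref{lem:coupon_collector_bound}, which is exactly the claimed bound.

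The one place demanding genuine care --- and the only real obstacle --- is the second step: justifying that the $\accept$ function and the $\pull$ schedule cannot disturb the i.i.d.-uniform law of the targets, and that the stopping time truly coincides with collecting all $n$ coupons rather than terminating early (for instance through the ``rejected by all hospitals'' clause, which in the balanced market cannot trigger before every hospital is matched). Once this structural invariant is in place, the remainder is the textbook geometric-sum computation.
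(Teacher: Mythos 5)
Your proof is correct and follows essentially the same route as the paper's: decompose the run into segments $P_i^{\tilde A}$, observe each is geometric with success probability $(n-i+1)/n$ because amnesiac proposals are i.i.d.\ uniform over $\hospitals$, and sum to $n H_n$. The only difference is that you explicitly verify the structural invariants (first proposals are always accepted, matched hospitals stay matched, and the ``rejected by all hospitals'' clause cannot fire before all hospitals are matched) that the paper's appendix proof leaves implicit, which is a sound addition but not a different argument.
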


As an immediate corollary, we have the following. 
\begin{corollary}
 \label{cor:balanced_upperbound}
    For every consistent DA variant $A$, there are $O(n\log n)$ proposals in expectation.
\end{corollary}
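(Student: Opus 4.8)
The plan is to combine the two facts that are already in place before the statement: the stochastic domination $\E[P^A]\le\E[P^{\tilde{A}}]$ of the real process by its amnesiac counterpart, and the exact coupon-collector count $\E[P^{\tilde{A}}]=n\cdot H_n$ from Observation~\ref{obs:balanced_amnesiac_upperbound}. Since both ingredients are established earlier, the corollary follows by a one-line chaining, and there is essentially no new content to produce — the work has been front-loaded into the setup paragraph and the observation.

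Concretely, first I would recall the phase decomposition $P^A=\sum_{i=1}^n P_i^A$ and $P^{\tilde{A}}=\sum_{i=1}^n P_i^{\tilde{A}}$, where $P_i$ counts the proposals made while the number of matched hospitals sits at $i-1$. This decomposition is exhaustive precisely because a balanced market terminates exactly when all $n$ hospitals have received a proposal: by the third consistency property an unmatched hospital always accepts, so each hospital becomes matched upon its first proposal, and once all are matched all doctors are matched and the algorithm halts. The domination holds per phase, $\E[P_i^A]\le\E[P_i^{\tilde{A}}]$, because the amnesiac doctors of $\tilde{A}$ may re-draw hospitals they have already approached, and each such redundant draw is immediately rejected without changing the set of matched hospitals; deleting these wasted proposals can only decrease the count, which is exactly the behavior of the non-amnesiac process $A$. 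Summing over $i$ gives $\E[P^A]\le\E[P^{\tilde{A}}]$.

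I would then invoke Observation~\ref{obs:balanced_amnesiac_upperbound} to substitute $\E[P^{\tilde{A}}]=n\cdot H_n=O(n\log n)$, which yields $\E[P^A]=O(n\log n)$, as claimed. The only step where an obstacle could plausibly hide is the per-phase domination $\E[P_i^A]\le\E[P_i^{\tilde{A}}]$; making it fully rigorous is cleanest through a coupling that feeds both processes the same stream of fresh uniform hospital draws per doctor, so that every proposal of $A$ is matched to a non-redundant proposal of $\tilde{A}$ while the redundant draws of $\tilde{A}$ remain as surplus. As this domination is already asserted in the paragraph preceding Observation~\ref{obs:balanced_amnesiac_upperbound}, the corollary is immediate given the stated results.
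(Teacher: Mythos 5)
Your proposal is correct and follows essentially the same route as the paper: the paper's setup paragraph establishes the per-phase domination $\E[P_i^A]\le\E[P_i^{\tilde{A}}]$ by the same redundant-proposal argument, and the corollary is then read off from Observation~\ref{obs:balanced_amnesiac_upperbound} exactly as you do. Your added justification that the phase decomposition is exhaustive (via the third consistency property) and the coupling sketch for the domination are sound elaborations of steps the paper asserts without proof, but they do not change the argument.
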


\paragraph{Lower Bound.}
The above upper bound is not surprising, since making it harder for hospitals to reject a doctor they are currently matched with should help doctors get a better outcome. However, even though some forms of $\accept$ functions might seem to help doctors (e.g., by introducing loyalty),  we illustrate that asymptotically, this affect is insignificant. 

\begin{lemma}~\label{lem:balanced_lower_bound}
    For every consistent DA variant $A$, there are $\Omega(n \log n)$ proposals in expectation.
\end{lemma}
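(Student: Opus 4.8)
The plan is to show that covering every one of the $n$ hospitals with at least one proposal is expensive, since in a balanced market the run ends exactly when the last hospital receives its first proposal, so the total number of proposals $P$ equals the time to cover all hospitals. I would first set up a lazy revelation of the preference lists: whenever a doctor $d$ makes its $(a{+}1)$-th proposal, the hospital it approaches is uniform among the $n-a$ hospitals $d$ has not yet proposed to. Every hospital $d$ has already approached is necessarily already covered, so the set of currently uncovered hospitals lies entirely inside $d$'s remaining list; hence, conditioned on the history, this proposal covers a fresh hospital with probability exactly $s/(n-a)$, where $s$ is the current number of uncovered hospitals. This identity is the engine of the lower bound, and it holds for every $\pull$ and every consistent $\accept$, since neither can influence which uniformly random hospital of $d$'s list is revealed.

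Next I would fix a threshold $\tau = \Theta(\log n)$, call the first $\tau$ proposals of each doctor \emph{light} and all later ones \emph{heavy}, and call a doctor \emph{heavy} if it makes at least $\tau$ proposals. The crucial asymmetry is that a light proposal has $a < \tau$, hence covers a fresh hospital with probability at most $s/(n-\tau)$, whereas a heavy proposal, made by a doctor that has already exhausted many covered hospitals, may cover a fresh hospital with much higher probability. I would then split on the number $X$ of heavy doctors. If $\Pr[X \ge n/4] \ge 1/2$, then the heavy doctors alone contribute at least $\tau X = \Omega(n\log n)$ proposals on this event, and $\E[P] = \Omega(n\log n)$ follows. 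Otherwise $\Pr[X < n/4] > 1/2$, so with constant probability at least $3n/4$ doctors are light, and the bulk of the coverage must be carried out by light proposals.

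For the second case I would account for the proposals spent while few hospitals remain uncovered. Grouping proposals by the value of $s$, each \emph{light} covering that occurs when $s$ hospitals are uncovered is preceded, in expectation, by at least $(n-\tau)/s$ light proposals, because each such proposal succeeds with probability at most $s/(n-\tau)$; summing these geometric waiting times over $s = 1, \dots, n/2$ gives $\sum_{s=1}^{n/2}(n-\tau)/s = \Omega(n\log n)$, which is exactly the coupon-collector tail of Lemma~\ref{lem:coupon_collector_bound} restricted to the hard, late hospitals. The delicate point, and what I expect to be the main obstacle, is that heavy proposals can shortcut this cost: a doctor that has proposed to nearly all covered hospitals covers a fresh one with probability close to $1$, so in principle the last few hospitals could be covered cheaply by such doctors. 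The resolution is that turning a doctor heavy already costs $\tau$ warm-up proposals, and in the few-heavy-doctors regime there are too few heavy doctors to account for a constant fraction of the late (small-$s$) coverings, so the late coverings are overwhelmingly light and the coupon-collector sum applies. Making this trade-off quantitative, namely calibrating $\tau$, bounding the number of coverings attributable to heavy proposals, and controlling the adaptivity introduced by $\pull$, is the technical heart of the argument. Combining the two cases yields $\E[P]=\Omega(n\log n)$, and since the total number of proposals equals the sum of the ranks of the doctors' final matches, dividing by $n$ gives the claimed $\Omega(\log n)$ average doctor rank.
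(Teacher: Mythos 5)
Your overall skeleton---split on the number of heavy doctors, charge each heavy doctor its warm-up proposals in the many-heavy case, and run a coupon-collector count over light proposals in the few-heavy case---is the same dichotomy the paper uses, and your Case A is correct. The genuine gap is precisely the step you defer as ``the technical heart'': bounding the coverings attributable to heavy proposals in Case B. With your calibration ($\tau=\Theta(\log n)$, case split at $n/4$ heavy doctors) this step cannot be carried out the way you sketch, for two reasons. First, ``few heavy doctors'' does not imply ``few heavy coverings'': a heavy doctor can cover a fresh hospital, be displaced, and cover again, so you need a displacement argument; but since your Case B allows up to $n/4$ heavy doctors, up to $n/4$ hospitals can be held by heavy doctors simultaneously, so a displacing proposal succeeds with constant probability and recycling is cheap. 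Optimizing the only costs your scheme actually charges (warm-up $\tau$ per heavy doctor, plus roughly $(n-\tau)/X$ displacement proposals per heavy covering when there are $X$ heavy doctors) yields a total that can be as small as $\Theta(n\sqrt{\log n})$, attained around $X=\Theta(n/\sqrt{\log n})$---short of the target. Second, your threshold $\tau=\Theta(\log n)$ does not isolate the dangerous proposers: a doctor with only $a=\Theta(\log n)$ prior proposals covers a fresh hospital with probability $s/(n-a)\approx s/n$, no better than a fresh doctor; the proposers who can genuinely shortcut the coupon-collector sum are those with $n-O(s)$ prior proposals, whose creation cost is $\Theta(n)$, a cost your charging scheme never collects. (The lemma remains true in this regime only because the heavy doctors' own failed covering attempts must also be paid for---an accounting your sketch omits entirely.)

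The paper's proof calibrates exactly these two knobs differently and folds in the missing displacement term. It defines heavy as at least $n/2$ proposals (so each heavy doctor is intrinsically expensive) and splits on whether more than $\ln n$ doctors end up heavy: if that has probability at least $1/2$, then $\ln n\cdot n/2$ already gives $\Omega(n\log n)$; otherwise at most $\ln n$ hospitals are ever held by heavy doctors, and the paper tracks the number of \emph{matched light doctors} rather than covered hospitals. An increment requires a light doctor---one with at least about $n/2$ hospitals still on its list---to be accepted either by an unmatched hospital or by one of the at most $\ln n$ hospitals held by heavy doctors, so each proposal succeeds with probability at most $(n-i+1+\ln n)/(n/2-\ell+1)$, and summing the resulting waiting times gives the harmonic sum $\Omega(n\log n)$. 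Tracking matched light doctors is what closes your loophole: coverings performed by heavy doctors simply do not count as progress, and displacements of heavy doctors are absorbed into the additive $\ln n$, which is negligible. Repairing your covered-hospitals formulation would require raising $\tau$ to a constant fraction of $n$, lowering the case-split threshold to $O(\log n)$ heavy doctors, and proving the displacement bound---at which point you will have essentially reconstructed the paper's argument.
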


Consider a scenario with $t$ matched hospitals, implying $n-t$ unmatched hospitals. For a doctor $d$, let $m_d$ be the number of hospitals $d$ has already proposed to. When a doctor $d$ proposes, the probability that their proposal is directed to an unmatched hospital is $\frac{n-t}{n-m_d}$, which increases with $m_d$. 
We call a doctor $d$ with $m_d\ge\frac{n}{2}$ a \textit{heavy doctor}. 
We call non-heavy doctors \textit{light doctors}. 

Let $\varepsilon$ denote the event where there are at most $\ln n$ heavy doctors when the algorithm terminates. We inspect two cases. First, we show the bound is obtained if the probability of this event is low.

\begin{lemma} \label{lem:balanced_lb_heavy}
    Consider a consistent DA variant $A$. If $\Pr[\varepsilon]<1/2$ then there are $\Omega(n \log n)$ proposals in expectation.
\end{lemma}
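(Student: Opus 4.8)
The plan is to observe that the hypothesis $\Pr[\varepsilon]<1/2$ forces its complement $\neg\varepsilon$ — the event that strictly more than $\ln n$ doctors are heavy at termination — to occur with probability greater than $1/2$, and that \emph{on} this event the proposal count is automatically $\Omega(n\log n)$, since by definition every heavy doctor has already contributed at least $n/2$ proposals. Thus the bound follows from a single application of the layer-cake/indicator form of expectation, with no delicate probabilistic argument required.

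Concretely, I would first record the elementary identity that the total number of proposals satisfies $P=\sum_{d\in\doctors} m_d$, where $m_d$ denotes the final number of hospitals that $d$ has proposed to. This holds because every proposal in the run of $A$ is issued by exactly one doctor and, for that doctor, is directed at a hospital distinct from all of that doctor's other proposals; summing each doctor's individual count therefore recovers $P$ with no double counting. Next I would lower-bound $P$ pointwise on $\neg\varepsilon$: on this event there are more than $\ln n$ heavy doctors, each with $m_d\ge n/2$, so discarding the nonnegative contributions of the light doctors yields
\[
P \;\ge\; \sum_{d\text{ heavy}} m_d \;>\; \ln n\cdot \frac{n}{2} \;=\; \frac{n\ln n}{2}.
\]

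Finally I would combine these in one line. Using $P\ge 0$ and then the pointwise bound above together with $\Pr[\neg\varepsilon]=1-\Pr[\varepsilon]>1/2$,
\[
\E[P] \;\ge\; \E\!\left[P\,\mathds{1}_{\neg\varepsilon}\right] \;\ge\; \frac{n\ln n}{2}\,\Pr[\neg\varepsilon] \;>\; \frac{n\ln n}{2}\cdot\frac{1}{2} \;=\; \frac{n\ln n}{4} \;=\; \Omega(n\log n),
\]
which is exactly the claimed bound.

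I do not expect any real obstacle here, as this is the ``easy'' branch of the case split: all of the genuine work is deferred to the complementary regime $\Pr[\varepsilon]\ge 1/2$ (treated separately), where one must show that with few heavy doctors the light doctors are nonetheless forced to send out many proposals before all hospitals are matched. The only two points warranting a line of care are the disjointness observation underlying $P=\sum_d m_d$ (so that the proposals of distinct heavy doctors cannot overlap and the counts genuinely add), and getting the direction of the inequality $\Pr[\neg\varepsilon]>1/2$ right when translating the hypothesis on $\varepsilon$.
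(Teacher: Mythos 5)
Your proof is correct and takes essentially the same approach as the paper: both arguments rest on the observation that under $\neg\varepsilon$ there are more than $\ln n$ heavy doctors, each accounting for at least $n/2$ distinct proposals, combined with $\Pr[\neg\varepsilon]>1/2$ to conclude $\E[P^A]\ge n\ln n/4=\Omega(n\log n)$. The paper writes this via the conditional-expectation decomposition $\E[P^A]=\E[P^A\mid\varepsilon]\Pr[\varepsilon]+\E[P^A\mid\neg\varepsilon]\Pr[\neg\varepsilon]$ rather than your indicator form $\E[P^A]\ge\E[P^A\,\mathds{1}_{\neg\varepsilon}]$, but these are interchangeable.
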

\begin{proof}
    \begin{eqnarray*}
    \E[P^A] & = & \E[P^A\ | \ \varepsilon]\cdot\Pr[\varepsilon]+ \E[P^A\ | \ \neg\varepsilon]\cdot\Pr[\neg\varepsilon]\\
    &\ge & \ln n\cdot\frac{ n}{2}\cdot\Pr[\neg\varepsilon]\\
    &\ge & n\ln n/4  =  \Omega(n\log n).
\end{eqnarray*} 
\end{proof}

We next show that if this probability is high, our bound is obtained as well.

\begin{lemma} \label{lem:balanced_lb_light}
    Consider a consistent DA variant $A$. If $\Pr[\varepsilon]\ge 1/2$ then there are $\Omega(n \log n)$ proposals in expectation.
\end{lemma}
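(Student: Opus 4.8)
The plan is to establish $\E[P^A\mid\varepsilon]=\Omega(n\log n)$; since $\Pr[\varepsilon]\ge 1/2$, the conditional-expectation bookkeeping of Lemma~\ref{lem:balanced_lb_heavy} then yields $\E[P^A]\ge\E[P^A\mid\varepsilon]\cdot\Pr[\varepsilon]=\Omega(n\log n)$. I work in the coupon-collector picture of Section~\ref{sec:prelims}: a balanced market halts exactly when every hospital has received a proposal, so $P^A=\sum_{d}m_d$ equals the number of proposals needed to cover all $n$ hospitals, and by deferred decisions each proposal of a doctor who has already proposed $m$ times lands on a uniformly random hospital among the remaining $n-m$.

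The core of the argument is a coupon-collector lower bound carried by the light doctors. Whenever $u$ hospitals are still uncovered, a light doctor (which by definition draws from a pool of size $n-m_d>n/2$) covers a fresh hospital with probability at most $\tfrac{u}{n-m_d}<\tfrac{2u}{n}$. Splitting the run into phases according to the current number $u$ of uncovered hospitals, a phase at level $u$ that is driven by light doctors lasts $\Omega(n/u)$ proposals in expectation, and $\sum_{u=1}^{n}\tfrac{n}{2u}=\tfrac n2 H_n=\Omega(n\log n)$. Equivalently, grouping the levels into the $\Theta(\log n)$ doubling scales $u\in(n/2^{i+1},n/2^{i}]$, passing through one scale requires $\Theta(n/2^{i+1})$ first-contacts, and each light first-contact at that scale costs $\Omega(2^{i})$ light proposals in expectation, for a total of $\Omega(n)$ light proposals per scale. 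Thus it suffices to show that, at every scale, a constant fraction of the first-contacts are made by light doctors.

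I expect the main obstacle to be controlling the heavy doctors, and in particular ruling out that they cheaply supply the first-contacts in the expensive small-$u$ scales: a heavy doctor has a depleted list, so its per-proposal covering probability $\tfrac{u}{n-m_d}$ can be close to $1$, letting it terminate a low-$u$ phase in very few proposals. Under $\varepsilon$ there are at most $\ln n$ heavy doctors, which I would leverage through a dichotomy. If the heavy doctors make at least $\tfrac14 n\ln n$ proposals in total, then $P^A\ge\tfrac14 n\ln n$ and we are done. Otherwise they make few proposals; here I would additionally charge the displacements they incur (each first-contact leaves a heavy doctor matched, so it must be displaced by a separate proposal before proposing again) to argue that the few heavy doctors cannot account for more than a constant fraction of the first-contacts at each scale, leaving the remaining $\Omega(n)$-per-scale cost to the light doctors. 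Turning this outline into a proof requires a concentration argument on the per-proposal covering probabilities—so that the adaptive, correlated choice of which doctors end up heavy cannot conspire to place all heavy first-contacts in the small-$u$ scales—and this probabilistic accounting is the technical heart of the lemma.
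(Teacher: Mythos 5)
Your outline has the right ingredients in spirit (light doctors carry a coupon-collector bound; the event $\varepsilon$ controls the heavy doctors), but it is not a proof: the step you yourself flag as ``the technical heart'' is exactly the part that needs a new idea, and it is missing. Two concrete problems. First, the reduction ``it suffices to show that, at every scale, a constant fraction of the first-contacts are made by light doctors'' is not something you can hope to establish: the lemma quantifies over \emph{every} consistent DA variant, so the $\pull$ function is an adaptive adversary (and $\accept$ need not be the vanilla rule), and nothing prevents a schedule in which the $\le\ln n$ heavy doctors make essentially \emph{all} first-contacts in the small-$u$ scales. The bound survives not because light doctors make those first-contacts, but because recycling a heavy doctor (displacing it so it can cover again) is itself expensive in light-doctor proposals; your ``charge the displacements'' remark points at this, but the accounting is never done, and an expectation/waiting-time argument (not a concentration argument about which doctors end up heavy) is what is actually required. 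Second, your bookkeeping conditions on $\varepsilon$ globally and aims at $\E[P^A\mid\varepsilon]=\Omega(n\log n)$; but $\varepsilon$ is an event about the \emph{terminal} state, so conditioning on it tilts every per-proposal probability along the run, and the clean bounds like $\tfrac{u}{n-m_d}<\tfrac{2u}{n}$ can no longer be invoked as stated. The paper avoids this by introducing time-indexed events $\varepsilon_i$ (``few heavy doctors at the moment the $i$th count starts''), noting $\varepsilon\subseteq\varepsilon_i$ so $\Pr[\varepsilon_i]\ge 1/2$, and bounding $\E[L_i\mid\varepsilon_i]$ where the conditioning is on the past only.

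The device in the paper's proof that makes all of this unnecessary — and which is the idea your sketch is missing — is to change what is being counted. Instead of tracking covered hospitals (first-contacts), track the number of \emph{matched doctors among the current light set} $S_i$. With this counter, a light doctor displacing another light doctor is neutral (one matched, one unmatched), so the counter can increase only when a light doctor's proposal lands in a small target set: the $\le n-i+1$ unmatched hospitals together with the $\le\ln n$ hospitals matched to doctors that were heavy when the count started (under $\varepsilon_i$). Since every doctor in $S_i$ still has at least $n/2-\ell+1$ hospitals left at the $\ell$th proposal of the phase, each increment stochastically dominates the waiting time for a red ball in a hat with $n/2$ balls of which $n-i+1+\ln n$ are red, giving $\E[L_i\mid\varepsilon_i]>\frac{n/2+1}{n-i+\ln n+2}$; summing over $i\le n-\ln n$ and multiplying by $\Pr[\varepsilon_i]\ge 1/2$ yields the harmonic sum $\Omega(n\log n)$ directly, with no per-scale attribution, no displacement charging, and no concentration.
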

\begin{proof}
    We prove that the desired bound is obtained only by considering proposals sent out by light doctors. For $i\in [n]$, let $L_i$ be the following random variable:
    \begin{itemize}
        \item Let $S_1$ be the set of light doctors at the start of the algorithm (the set of all doctors). $L_1$ is the expected number of proposals it takes to match the first doctor in $S_1$.
        \item For $i\in [2,n]$, let $S_i$ be the set of light doctors right after $L_{i-1}$ was set (which is also a random variable). Clearly $S_i\subseteq S_{i-1}$. Some of the doctors in $S_i$ are already matched. $L_i$ is the number of proposals that is needed to increase the number of matched doctors from $S_i$ by one. If $S_i=\emptyset$, then $L_i=0$. 
    \end{itemize}

    Let $\varepsilon_i$ be the event that there are at most $\ln n$ heavy doctors at the point we start to count the proposals for $L_i$. Since $\varepsilon$ implies $\varepsilon_i$, 
    \begin{eqnarray}
        \Pr[\varepsilon_i]\ge \Pr[\varepsilon]\ge 1/2. \label{eq:epsi_bound}
    \end{eqnarray}

    We now analyze $\E\left[L_i\ |\ \varepsilon_i\right]$ for $i\le n-\ln n$. The event we are trying to analyze is the following: There are at most $\ln n$ heavy doctors (some are matched, some are unmatched), and at least $i-1$ matched hospitals. We try increasing the number of matched doctors from the set $S_i$, all of which are currently light. We note that for $i\le n-\ln n$, $S_i$ is non-empty under $\varepsilon_i$. In order for the number of doctors from $S_i$ that are matched to increase, one of the following has to happen: 
    \begin{enumerate}
        \item Either a doctor from $S_i$ proposes to an unmatched hospital. There are at most $n-i+1$ such hospitals at this point.
        \item Or a doctor from $S_i$ proposes to a hospital matched to a doctor that was heavy at the point we started to count proposals for $L_i$. There are at most $\ln n$ such hospitals under event $\varepsilon_i$.
    \end{enumerate}
    Consider some doctor proposing the $\ell$th proposal in $L_i$. If this doctor is not in $S_i$, then clearly the number of matched doctors from $S_i$ does not increase. Consider a doctor $d\in S_i$. After the $\ell$th proposal in $L_i$, we have that $m_d\le n/2+\ell-1$, since $m_d$ was less than $n/2$ at the point we started to count proposals in $L_i$ (as $d$ was light at that point). The probability of the $\ell$th proposal results in an increase in the number of matched doctors from $S_i$ is at most $$\frac{n-i+1+\ln n}{n-m_d}< \frac{n-i+1+\ln n}{n/2-\ell+1}.$$ Consider the following experiment. There are $m=n/2$ balls in a hat, of which there are $r=n-i+1+\ln n$ red balls. We now choose a ball at random, and take it out, until we choose a red ball. The expected number of trials to draw a red ball equals the expected position of the first red ball in a random permutation of the balls, given by $\frac{m+1}{r+1}$. The probability of choosing the first red ball at the $\ell$th draw is $$\frac{r}{m-\ell+1}= \frac{n-i+1+\ln n}{n/2-\ell+1},$$ which is higher than the probability the $\ell$th proposal will increase the number of matched doctors in $S_i$. Therefore, 
    \begin{eqnarray}
        \E\left[L_i\ |\ \varepsilon_i\right]> \frac{m+1}{r+1}= \frac{n/2+1}{n-i+\ln n+2}. \label{eq:li_bound}
    \end{eqnarray}
    We note that this bound also makes sense if $r>m,$ as there is at least one proposal under $\varepsilon_i$ for those values of $i$ where this occurs. We conclude:
     \begin{eqnarray*}
        \E[P^A]  & \ge & \E\left[\sum_i L_i\right]  =  \sum_i \E\left[L_i\right]\\
        &\ge & \sum_i \E\left[L_i\ |\ \varepsilon_i\right]\Pr[\varepsilon_i]
        \ge  \frac{1}{2} \sum_{i\in[n-\ln n]} \E\left[L_i\ |\ \varepsilon_i\right]\\
        &> &\frac{1}{2} \sum_{i\in[n-\ln n]} \frac{n/2+1}{n-i+\ln n+2}  > \frac{n}{4} \sum_{i= \ln n+2}^{n+\ln n+1}\frac{1}{i} \\
        & =& \frac{n}{4}(H_{n+\ln n + 1}-H_{\ln n +1})  >  \frac{n\ln n}{4}(1-o(1)) \\
        & =&  \Omega(n\log n).
    \end{eqnarray*}
    Here, the third inequality follows Eq.~\eqref{eq:epsi_bound} and the fourth inequality follows Eq.~\eqref{eq:li_bound}.
\end{proof}

Lemma~\ref{lem:balanced_lower_bound} immediately follows Lemma~\ref{lem:balanced_lb_heavy} and Lemma~\ref{lem:balanced_lb_light}.



\section{Unbalanced Market} \label{sec:unbalanced} 

We next turn to the unbalanced case, where the number of doctors exceeds the number of hospitals by one. We focus on the case where hospitals exhibit loyalty, as defined in Eq.~\eqref{eq:loyalty} with parameter $k\in\{0,\ldots,\doctors-1\}$. We call such a DA variant, an \textit{L-DA variant}. In the setting without loyalty ($k=0$), \cite{ashlagi2017unbalanced} show that there are $\Theta(n^2/\log n)$ proposals in expectation. This implies that the average rank of doctors is $\Theta\left(n/\log n\right)$. On the other hand, when $k=n$, meaning once a doctor is matched, they remain matched, the total number of proposals is at most the number needed to match $n$ hospitals, which was shown to be $\Theta(n\log n)$, plus at most an additional $n$ proposals for the unmatched doctor. This implies $\Theta(\log n)$ average doctors' rank. 

This leads us to the main question of this section: \textit{How much loyalty is needed in order to ensure a poly-logarithmic average doctors' rank?} Interestingly, we prove that an excessively high level of loyalty is required. Specifically, with loyalty $k = n - \sqrt{n}\cdot\log n=n(1-o(1))$, the average rank of doctors is as high as $\tilde{\Theta}(\sqrt{n})$.

In our analysis, we consider two phases:
\begin{enumerate}
    \item \textit{The balanced phase:} Spanning all proposals until the moment when $n$ doctors and $n$ hospitals are tentatively matched.
    \item \textit{The unbalanced phase:} Beginning after $n$ doctors and $n$ hospitals are tentatively matched and ending when a doctor is rejected by all $n$ hospitals, causing the algorithm to terminate.
\end{enumerate}

The number of proposals during the balanced phase follows the same bounds as in the balanced market. The proof is omitted as it is identical to the proof of Theorem~\ref{thm:total-prop-balanced}. The high probability bound on the number of proposals follows Lemma~\ref{lem:coupon_collector_bound}.

\begin{lemma} \label{lem:balanced_phase_bound}
    For every L-DA variant and every loyalty parameter $k$, the expected number of proposals during the balanced phase is $\Theta(n\log n)$. The probability that the number of proposals is higher than $2n\ln n$ is at most $1/n$.
\end{lemma}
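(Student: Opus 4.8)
The plan is to reduce the analysis of the balanced phase entirely to the balanced-market result already established in Theorem~\ref{thm:total-prop-balanced}, by observing that until $n$ hospitals are tentatively matched, the unbalanced market behaves identically to a balanced one with $n$ doctors and $n$ hospitals. The crucial point is that loyalty, as captured by the $\accept$ function in Eq.~\eqref{eq:loyalty}, depends only on the ranks of doctors and on whether a hospital is currently matched; it does not depend on the total number of doctors present. Hence every $\accept$ decision made during the balanced phase is governed by a consistent $\accept$ function exactly as in the balanced setting, and the termination condition for the balanced phase (all $n$ hospitals receiving a proposal) coincides with the termination condition of a genuine balanced market.

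First I would formalize this coupling: given an instance with $n+1$ doctors and $n$ hospitals, I would argue that the sequence of proposals up to the moment the $n$th hospital first receives a proposal is distributed identically to the full run of a balanced market on $n$ agents per side, where the extra doctor simply has not yet been needed. Since preferences are drawn independently and uniformly, and since at most $n$ doctors can be matched at the moment the balanced phase ends, the presence of one additional unmatched doctor does not alter any proposal's distribution during this phase—the $\pull$ function only ever selects among unmatched doctors, and the $\accept$ decisions are identical. This gives a measure-preserving correspondence between the balanced-phase proposals here and the proposals in the balanced market.

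With the coupling in place, the expected-value bound $\Theta(n\log n)$ follows immediately from Theorem~\ref{thm:total-prop-balanced} applied to the coupled balanced instance, since that theorem holds for every consistent DA variant and the L-DA variant restricted to the balanced phase is one such variant. For the high-probability statement, I would invoke the amnesiac-doctor upper bound from Section~\ref{sec:prelims}: making the doctors amnesiac stochastically dominates the true number of proposals, and the amnesiac process is exactly the coupon collector's problem on $n$ coupons. Applying Lemma~\ref{lem:coupon_collector_bound} with threshold $2n\ln n$ yields $\Pr[T > 2n\ln n] \le 1/n$, and since the true number of proposals is dominated by $T$, the same tail bound holds for the balanced phase.

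The main obstacle, though it is conceptual rather than computational, is justifying the coupling rigorously: one must confirm that the extra doctor genuinely plays no role before the balanced phase ends, i.e. that no proposal made or accepted during this phase would differ had the extra doctor been absent. This hinges on the fact that the balanced phase ends precisely when the number of matched hospitals first reaches $n$, at which point exactly $n$ doctors are matched and the counting stops; no $\accept$ decision during the phase ever references the total doctor count. Once this is argued, both bounds transfer directly, which is why the authors note the proof is ``identical'' to that of Theorem~\ref{thm:total-prop-balanced} and defer the tail bound to Lemma~\ref{lem:coupon_collector_bound}.
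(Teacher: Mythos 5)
The tail-bound half of your proof (amnesiac doctors dominate, then apply Lemma~\ref{lem:coupon_collector_bound}) is correct and is exactly the paper's route. The problem is the first half: your coupling claim --- that the proposal sequence of the balanced phase is ``distributed identically to the full run of a balanced market on $n$ agents per side, where the extra doctor simply has not yet been needed'' --- is false, and your final paragraph's assertion that ``no proposal made or accepted during this phase would differ had the extra doctor been absent'' cannot be established. The lemma quantifies over \emph{every} L-DA variant, i.e.\ every $\pull$ function, and nothing stops $\pull$ from selecting the $(n+1)$st doctor at the very first step (indeed, the FIFO variant used in Section~\ref{sec:simulations} puts all $n+1$ doctors in the queue, so the extra doctor proposes early and can displace others). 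Structurally, at every moment of the balanced phase in the unbalanced market the number of unmatched doctors exceeds the number of unmatched hospitals by one, whereas in a genuine balanced market they are equal; rejection chains during the balanced phase can pass through all $n+1$ doctors, so there is no subset of $n$ doctors whose balanced-market run reproduces the observed proposal sequence. Your black-box invocation of Theorem~\ref{thm:total-prop-balanced} therefore rests on a measure-preserving correspondence that does not exist.

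The repair --- and what the paper means by the proof being ``identical'' --- is not a coupling but the observation that the \emph{proof} of Theorem~\ref{thm:total-prop-balanced} never uses the cardinality of the doctor side, only the number of hospitals, the uniformity of each doctor's next proposal, consistency of $\accept$, and the stopping condition ``all $n$ hospitals matched'' (which is exactly when the balanced phase ends). The upper bound (Observation~\ref{obs:balanced_amnesiac_upperbound} and Corollary~\ref{cor:balanced_upperbound}) is a coupon-collector argument over the $n$ hospitals and is indifferent to how many doctors feed the proposals; the lower bound (Lemma~\ref{lem:balanced_lower_bound}, via heavy and light doctors) counts the proposals needed until $n$ doctors are matched and goes through verbatim with $n+1$ doctors present. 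Rerunning these arguments on the balanced phase yields the $\Theta(n\log n)$ expectation, after which your tail-bound step completes the lemma. So your conclusion is right and the overall skeleton matches the paper, but the justification you give for transferring the balanced-market bounds is the one step that genuinely fails and must be replaced by this ``proof-carries-over'' argument.
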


Thus, the number of proposals during the unbalanced phase is the key factor determining the doctors' rank, as demonstrated in simulations in Figure~\ref{fig:total_both_phases}.

Let $f_h$ be the rank of the first proposal of hospital $h$. After the balanced phase, only hospitals with $f_h > k+1$ remain available. At this stage, all DA variants converge, as there is only one unmatched doctor proposing at any given step. We note that $f_h$ is distributed uniformly at random from the set $\{1,2,\ldots, n+1\}$. 
As before, we call a doctor $d$ \textit{heavy} if $m_d\ge n/2,$ and otherwise, we call the doctor \textit{light}. We also let $m_h$ denote the number of offers received by some hospital $h$. We call hospitals for which $m_h\ge n/2$ \textit{heavy hospitals}, and conversely, non-heavy hospitals \textit{light hospitals}.

\subsection{Loyalty $> n - \sqrt{n}$}

\begin{theorem}
    For any L-DA variant with $k > n - \sqrt{n}$, the expected number of proposals is $\Theta(n \log n)$ and the expected doctors' rank is $\Theta(\log n)$. \label{thm:very_high_loyalty_unbalanced}
\end{theorem}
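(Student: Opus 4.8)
The plan is to control the unbalanced phase by showing that, with such high loyalty, essentially no re-matching can ever happen, so the unbalanced phase consists of at most a handful of doctors each making at most $n$ proposals before the algorithm terminates. Recall that after the balanced phase (which contributes $\Theta(n\log n)$ proposals by Lemma~\ref{lem:balanced_phase_bound}), the only hospitals still open to a new proposal are those whose current match has rank $f_h > k+1$, since loyalty $k$ requires an incoming doctor to improve the hospital's rank by more than $k$. When a hospital is matched to a doctor at rank $\ell$, it will accept a new proposal only from a doctor it ranks at position $\le \ell - k - 1$. So the first task is to quantify how rare it is for an unmatched doctor to land such an acceptable proposal during the unbalanced phase.

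First I would fix the moment the balanced phase ends and condition on the realized first-proposal ranks $f_h$, which are (marginally) uniform on $\{1,\dots,n+1\}$. The key observation is that for $k > n-\sqrt{n}$, a hospital currently holding a doctor of rank $\ell$ can only be displaced by a doctor in the top $\ell - k - 1 \le \ell - (n-\sqrt{n}) - 1$ positions of its list; in particular a hospital is displaceable at all only if its current rank $\ell$ exceeds $k+1 > n-\sqrt{n}+1$, and even then the ``acceptance window'' has size at most $\ell - k - 1 \le n+1 - (n-\sqrt{n}) - 1 = \sqrt{n}$. I would then bound the probability that a single proposal from the floating unmatched doctor is accepted: conditioned on hitting a still-available hospital, the doctor's rank at that hospital is (close to) uniform, so the chance of falling in a window of size $O(\sqrt{n})$ out of $n$ is $O(1/\sqrt{n})$. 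Hence each ``rejection chain'' started by a displaced doctor dies out quickly: the expected number of successful displacements before some doctor is rejected by all hospitals is $O(1)$, and by a union/geometric argument the total number of proposals in the unbalanced phase is $O(n)$ with high probability (each of $O(1)$ proposing doctors exhausts at most $n$ hospitals). This matches the intuition the authors give: ``only a few doctors propose during the unbalanced phase, as it is likely that a doctor gets rejected by all hospitals at such a high loyalty.''

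Combining the two phases, the expected total number of proposals is $\Theta(n\log n) + O(n) = \Theta(n\log n)$, where the $\Omega(n\log n)$ lower bound is inherited directly from the balanced phase (Lemma~\ref{lem:balanced_phase_bound}), since the balanced phase is a prefix of the full execution and contributes $\Omega(n\log n)$ proposals on its own. Dividing the expected number of proposals by the $n+1$ doctors gives an average doctors' rank of $\Theta(\log n)$, which is the second claim. I would be slightly careful to phrase the average-rank conclusion correctly: the expected number of proposals equals the sum of the doctors' final ranks (each proposal advances exactly one doctor one step down its list), so dividing by $n+1$ yields the average rank directly.

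The main obstacle I anticipate is making the $O(1/\sqrt{n})$ acceptance-probability bound rigorous while carrying the conditioning correctly. The difficulty is twofold: the set of still-available hospitals and their current ranks evolve adaptively as the rejection chain proceeds, so the ``uniform rank'' claim for a fresh proposal must be justified via the principle of deferred decisions (a doctor who has not yet proposed to a hospital has a rank there that is uniform over the not-yet-revealed positions), and I must ensure that the acceptance windows do not accumulate across the chain in a way that boosts the success probability above $O(1/\sqrt{n})$ per step. I would handle this by a stochastic-domination argument: stochastically dominate the unbalanced-phase dynamics by a process in which each proposal independently succeeds with probability at most $O(1/\sqrt{n})$, so that the number of successful displacements before a total failure is dominated by a geometric random variable with $O(1)$ mean, and a failed doctor contributes at most $n$ wasted proposals. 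This reduces the whole phase to an easy expectation computation of order $O(\sqrt{n})\cdot O(n)/\sqrt{n} = O(n)$ proposals, which is dominated by the balanced phase and yields the stated bound.
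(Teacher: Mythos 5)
Your overall decomposition (balanced phase contributes $\Theta(n\log n)$ by Lemma~\ref{lem:balanced_phase_bound}, unbalanced phase is short because total rejection is likely at loyalty $k > n-\sqrt{n}$) matches the paper's, and your lower-bound and rank-accounting steps are fine. But there is a genuine gap in exactly the step you flagged: the claim that every proposal in the unbalanced phase is accepted with probability $O(1/\sqrt{n})$, enforced by stochastic domination. That bound is false for \emph{heavy} hospitals, i.e., hospitals that received many proposals during the balanced phase. For a hospital $h$ that is still available (so $f_h > k+1$), the acceptance window --- its top $f_h-k-1 \le \sqrt{n}$ positions --- is entirely \emph{unrevealed}: any balanced-phase proposal landing there would have been accepted and would have locked $h$ permanently (a second re-match would require improving the rank by another $k+1$, which is impossible since the new rank is already below $\sqrt{n} < k+1$). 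So, under deferred decisions, a fresh proposal to $h$ lands in the window with probability $(f_h-k-1)/(n+1-m_h)$, where $m_h$ is the number of proposals $h$ has received so far. When $m_h$ is close to $n$, this can be a constant or even close to $1$; nothing in your argument caps it at $O(1/\sqrt{n})$. Consequently the domination by a geometric variable with $O(1)$ mean fails, and your conclusion that only $O(1)$ doctors propose in the unbalanced phase is wrong in the worst case: there can be $\Theta(\log n)$ heavy hospitals, each of which absorbs one displaced doctor with probability near $1$ before any doctor manages to be rejected by everyone.

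The paper closes exactly this hole by splitting the available set $S_A$ into light hospitals ($m_h < n/2$), for which your $O(1/\sqrt{n})$ per-proposal bound is valid, and heavy hospitals, whose number is $O(\log n)$ with probability $1-1/n$ because the balanced phase has at most $2n\ln n$ proposals w.h.p.\ (Lemma~\ref{lem:balanced_phase_bound}). Each proposing doctor is rejected by \emph{all light} hospitals in $S_A$ with at least constant probability ($\ge e^{-5}$ in the paper's computation), and since every hospital in $S_A$ can accept at most one further proposal, each heavy hospital can ``absorb'' at most one such doctor. Hence once $O(\log n)$ doctors have been rejected by all light hospitals --- which, by a Chernoff bound, happens within the first $O(\log n)$ proposing doctors w.h.p.\ --- some doctor must be rejected by all hospitals and the algorithm terminates. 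This yields $O(\log n)$ doctors and $O(n\log n)$ proposals in the unbalanced phase (rather than your $O(n)$), which still suffices for the theorem. Your argument becomes correct if you add this light/heavy split together with the one-acceptance-per-available-hospital observation; without it, the domination step does not go through.
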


\begin{proof}
Let $S_A$ be the set of hospitals that are available at the start of the unbalanced phase. For a hospital to be in $S_A$, it must be that $h_{f}>k+1\Rightarrow h_f\in [n-\sqrt{n}+1,n]$. Let $\varepsilon_1$ denote the event that $|S_A| \leq 2\sqrt{n}$. Note that for every hospital, $h_f\in [n-\sqrt{n}+1,n]$ with probability $1/\sqrt{n}$, therefore, $\E[|S_A|]\le \sqrt{n}.$ Applying Chernoff bound, we get that $\Pr[\varepsilon_1] \geq 1 - 1/n$.     
    Let $\varepsilon_2$ denote the event that at most $4\ln n$ heavy hospitals exist at the end of the balanced phase. By Lemma~\ref{lem:balanced_phase_bound}, $\Pr[\varepsilon_2] \geq 1 - 1/n$.

    For a light hospital in $S_A$ to accept a proposal, the proposing doctor must be ranked among the hospital's top $\sqrt{n}$ preferences.  We next show that under event $\varepsilon_1$, if we let  $8e^5\ln n$ doctors propose to light hospitals in $S_A$ from the beginning of the unbalanced phase, the probability at most $2\ln n$ of them will be rejected by all light hospitals is at most $1/n$. This implies that at least $2\ln n+1$ of them will be rejected by all light hospitals in $S_A$ with probability $1-1/n$. Since that under even $\varepsilon_2$ there are at most $2\ln n$ heavy hospitals in $S_A$, this implies that with probability at least $1-1/n$, at least one doctor of the first $8e^5\ln n = O(\log n)$ doctors proposing in the unbalanced phase will be rejected by all hospitals in $S_A$, which yields our bound on the number of proposals.

    Consider the $\ell$th doctor proposing, where $\ell\in[8e^5\ln n]$. the probability that the doctor will be rejected by all light hospitals in $S_A$ under $\varepsilon_1$ is at least \begin{eqnarray*}
        \left(1-2/(\sqrt{n}-\ell)\right)^{|S_A|}&\ge& \left(1-2/(\sqrt{n}-8e^5\ln n)\right)^{2\sqrt{n}}\\
        &\ge &\left(1-5/2\sqrt{n}\right)^{2\sqrt{n}} \ge e^{-5},
    \end{eqnarray*}
    where we use the fact that, under $\varepsilon_1$, $|S_A|\le 2\sqrt{n}$, in the first inequality. Therefore, under $\varepsilon_2$, the expected number of doctors that will be rejected by all light hospitals in $S_A$ among the first $8e^5\ln n$ doctors to propose in the balanced phase is at least $8\ln n$. By Chernoff bound, the probability that at most $2\ln n$ doctors of the first $8e^5\ln n$ to propose in the unbalanced phase will be rejected by all light hospital in $S_A$ is at most $e^{-\ln n}=1/n.$

    Let $D_U$ be the random variable for the number of doctors proposing in the unbalanced phase. We have that 
    \begin{eqnarray*}
        \E[D_U] & = & \E[D_U |  \varepsilon_1 \wedge\varepsilon_2] \cdot \Pr[\varepsilon_1\wedge\varepsilon_2] \ + \  \E[D_U |  \neg\varepsilon_1 \vee \neg\varepsilon_2]\cdot\Pr[\neg\varepsilon_1 \vee \neg\varepsilon_2]\\
        &\le & \E[D_U |  \varepsilon_1 \wedge\varepsilon_2] + n\cdot \frac{2}{n}\\
        & \le & 8e^5\ln n\cdot(1-1/n) + n\cdot \frac{1}{n} + 2 \\
        & = &  O(\log n).
    \end{eqnarray*}
    
    Since each doctor can propose at most $n$ proposals, we get that there are $O(n\log n)$ proposals in the unbalanced phase in expectation (and with probability $1-O(1/n)$). By Lemma~\ref{lem:balanced_phase_bound}, we get that the number of proposals overall is $\Theta(n\log n)$, as desired.
\end{proof}

Figure~\ref{fig:478axes} illustrates that only a small number of hospitals in $S_A$ improve their rank during the unbalanced phase, as our analysis demonstrates.

\subsection{Loyalty $\le n-\sqrt{n}\cdot \log n$} 

\paragraph{An Upper Bound.}

We provide the following upper bound on the number of proposals for this regime.

\begin{theorem}\label{thm:upper_bound_unbalanced_small_loyalty}
    For any L-DA variant, when $k=n/c(n)$, for $c(n)<\sqrt{n}$, there are $$O(c(n)^2n\log n + n^{3/2}\log n + c(n)n^{3/2})$$ proposals in expectation. This implies that the expected rank of doctors is $$O(c(n)^2\log n + \sqrt{n}\log n + c(n)\sqrt{n}).$$
\end{theorem}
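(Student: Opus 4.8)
The plan is to bound the expected number of proposals during the \emph{unbalanced phase}, since the balanced phase contributes only $\Theta(n\log n)$ by Lemma~\ref{lem:balanced_phase_bound}. I would work throughout via the principle of deferred decisions: whenever a doctor proposes to a hospital $h$ it has not yet approached, its position in $\succ_h$ is uniform among the as-yet-unrevealed positions, and $h$ accepts exactly when this position is at most $\ell_h-k-1$, where $\ell_h=\rank_h(\mu(h))$ is $h$'s current match rank. I thus attach to every \emph{available} hospital ($\ell_h>k+1$) a matchability $p_h\approx(\ell_h-k-1)/n$, the chance that a fresh doctor improves it, and I partition the available hospitals into $O(\log n)$ rank-levels according to $p_h$ (geometrically), tracking how hospitals migrate toward harder levels and eventually toward unavailability.

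First I would dispatch the accepting proposals. Each acceptance strictly improves the accepting hospital's rank by at least $k+1$, and ranks live in $\{1,\dots,n\}$, so every hospital is re-matched at most $n/(k+1)=O(c(n))$ times; hence the total number of accepts over the unbalanced phase is $O(n\,c(n))$. I then view the phase as a sequence of rejection chains, each chain being the run of one proposing doctor until it is either accepted (handing the token to the displaced doctor) or rejected by all $n$ hospitals (terminating the algorithm). Since there is a unique unmatched doctor at every step of the unbalanced phase, the $\pull$ rule plays no role here, and the number of chains equals the number of accepts plus one, i.e.\ $O(n\,c(n))$.

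The heart of the argument is bounding the \emph{rejects}, and the dominant cost sits in the \emph{terminal regime}. I track the total matchability mass $\Sigma=\sum_h p_h$, which starts at $\Theta(n)$ (the first ranks $f_h$ are uniform on $\{1,\dots,n+1\}$) and can only decrease, since an acceptance at $h$ roughly halves $p_h$ and drives $h$ toward unavailability while raising no other $p_{h'}$. Within a chain the proposing doctor walks down its remaining list; I would over-count by making doctors amnesiac (each proposal hits a uniform hospital, repeats auto-rejected), which reduces the reject count to the \emph{Absent-Minded Coupon Collector}: random draws against a shrinking pool of matchable hospitals, with draws landing on unavailable hospitals being the wasted coupons. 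A fresh proposing doctor is rejected by \emph{every} hospital, terminating the algorithm, with probability $\approx\prod_h(1-p_h)=e^{-\Theta(\Sigma)}$. Consequently, while $\Sigma$ is large the chains almost always end in an accept (draining $\Sigma$), but once $\Sigma$ has been forced down to $\Theta(\log n)$, termination occurs with polynomially small yet non-negligible probability per chain, so on the order of $\sqrt n$ further chains are needed, each of which is long (a doctor rejected by nearly all $n$ hospitals contributes $\approx n$ proposals). This terminal regime alone accounts for the $\widetilde\Theta(n^{3/2})$ contribution; summing the coupon-collector cost across the $O(\log n)$ matchability levels that $\Sigma$ traverses, together with the $O(n\,c(n))$ accepts, yields the three stated terms, with $c(n)^2 n\log n$ dominating only when loyalty is comparatively low (large $c(n)$).

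The step I expect to be hardest is coping with the adaptivity and correlations introduced by the ``propose down the list, never repeat'' rule of Algorithm~\ref{alg:DA}. A displaced doctor re-enters having already been rejected by a history-dependent set of hospitals, so along a chain the acceptance events are neither independent nor identically distributed, and each hospital's acceptance threshold $\ell_h-k-1$ itself drifts downward over time. Turning the amnesiac/deferred-decisions heuristic into a rigorous \emph{upper} bound in the presence of these moving, correlated thresholds---and, in tandem, pinning down the stopping time so that the algorithm provably neither terminates before $\Sigma$ is forced down nor runs far beyond the point where $\Sigma=\Theta(\log n)$---is precisely the delicate accounting that the Absent-Minded Coupon Collector lemma is designed to encapsulate.
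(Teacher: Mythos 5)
Your high-level skeleton matches the paper's: reduce to the unbalanced phase via Lemma~\ref{lem:balanced_phase_bound}, make doctors amnesiac, treat acceptance as an absent-minded coupon-collector process, and track hospitals' migration through matchability levels toward unavailability; your bound of $O(c(n)\,n)$ on the total number of acceptances (each acceptance improves a hospital's rank by at least $k+1$) is also sound. But there is a genuine gap exactly where the theorem's third term $c(n)n^{3/2}$ must come from: the terminal regime. You assert that once $\Sigma=\sum_h p_h$ has dropped to $\Theta(\log n)$, ``on the order of $\sqrt n$ further chains are needed,'' but nothing in your argument derives this count, and it cannot be right as stated since it has no dependence on $c(n)$. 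Worse, the $\Sigma$-potential accounting you lean on breaks precisely in this regime: an acceptance by a hospital whose matchability is already below $(k+1)/n$ drains $\Sigma$ only by $p_h$ (the new matchability is clamped at $0$), not by $\approx 1/c$, so a small $\Sigma$ is compatible with a large \emph{number} of remaining acceptances --- e.g.\ $\Theta(c\sqrt n)$ hospitals each with $p_h\approx 1/\sqrt n$ give $\Sigma=\Theta(c)$ yet still admit $\Theta(c\sqrt n)$ future accepts, each of which can be preceded by a chain of up to $n$ rejections. A mass bound on $\Sigma$ therefore cannot control the number of long terminal chains; what is needed is a \emph{cardinality} bound.

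The paper closes this gap with the hard-to-match set $H=\{h:\rank_h(\mu(h))\in[n/c+1,\,n/c+\sqrt n]\}$ and two observations absent from your sketch. First (Lemma~\ref{lem:H_bound}), whenever a proposal could move a hospital into $H$ it could equally move it into $F$ (unavailable), with relative probability $\sqrt n$ versus $n/c$, so a hospital ever lands in $H$ with probability less than $c/\sqrt n$, whence $|H|=O(c\sqrt n)$ with high probability. Second, because $\sqrt n\le n/c=k$, a hospital in $H$ that accepts moves permanently to $F$; hence once all hospitals are in $H\cup F$, at most $|H|+1$ distinct doctors ever propose again, each at most $n$ times, giving the $O(c\,n^{3/2})$ term by a deterministic count --- no estimate of per-chain termination probability is needed (your $e^{-\Theta(\Sigma)}$ estimate also presumes the proposing doctor has no revealed history, which is false for displaced doctors in the unbalanced phase). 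Your remaining claims --- that the pre-terminal levels cost $O(c^2n\log n)$ plus $O(n^{3/2}\log n)$ --- correspond to the paper's easy and moderate phases (Lemmas~\ref{lem:ub_easy_phase} and~\ref{lem:ub_medium_phase}), but in your write-up these are asserted (``summing the coupon-collector cost across the $O(\log n)$ levels'') rather than computed; the paper's arithmetic levels $E_i$ of width $n/c$ with acceptance probability at least $i/c$, fed into Lemma~\ref{lem:absent_minded} with $n(c-1)$ coupons and $q=1/c$ (respectively $n$ coupons and $q=1/\sqrt n$), is what actually produces those two terms.
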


From now on, we denote $c=c(n)$. When $k=n/c$, a hospital can accept up to $c$ proposals. We assign hospitals to sets based on the rank of their current match, and analyze the progression of the hospitals through the sets. Consider the following sets:

\begin{itemize}
    \item $F = \{h\ : \ \rank_h(\mu(h))\in [n/c] \}$.
    This is the set of unavailable hospitals that have reached their \textbf{final} match, and will not accept further proposals.
    \item $H = \{h :  \ \rank_h(\mu(h))\in [n/c+1,n/c+\sqrt{n}] \}$. This is the set of \textbf{hard-to-match} hospitals. We bound the number of hospitals that belong to this set at some point of the unbalanced phase.
    \item $M= \{h\ : \ \rank_h(\mu(h))\in [n/c+\sqrt{n}+1, 2n/c]\}$. This is the set of hospitals that are \textbf{moderately} hard to match. Each $h\in M$ accept a proposal from their top $\sqrt{n}$ doctors. Thus, when a hospital in $M$ receives a proposal, the probability of acceptance is at least $\sqrt{n}/n=1/\sqrt{n}$.  
    \item For $i\in [c-2]$, $$E_i = \{h\ : \ \rank_h(\mu(h))\in [(i+1)n/c+1, (i+2)n/c]\}.$$ A hospital in $E_i$ can be matched to their top $in/c$ doctors; therefore, such a hospital accepts a proposal with probability at least $(in/c)/n=i/c$.
    \item $U= \{h\ :\ h\mbox{ is unmatched}\}$. A hospital in $U$ will accept a proposal w.p. 1. 
\end{itemize}

We bound the number of proposals it takes to move hospitals from $\{E_i\}_i\cup\{U,M\}$ to $\{H,F\}$ in two phases. 
\begin{enumerate}
    \item \textit{The easy phase:} We first bound the expected number of proposals it takes to move all hospitals from $\{E_i\}_i\cup\{U\}$ to $\{M,H,F\}$.
    \item \textit{The moderate phase:} Then, assuming there are no hospitals in $\{E_i\}\cup\{U\}$, we bound the expected number of proposals to move hospitals from $M$ to $\{H,F\}$.
\end{enumerate}

To bound the number of proposals in each phase, we first analyze how many proposals would be required when doctors are amnesiac. Since there are redundant proposal, this gives an upper bound on the actual number of proposals. We again use a coupon-collector-style argument to analyze this case. However, in this scenario, a hospital that receives a proposal may not accept it, as it could already be matched. Nevertheless, in both phases, hospitals in the relevant sets accept proposals with sufficiently high probability. 
We define the following variant of the coupon collector problem.

\begin{definition}[Absent-minded coupon collector]
    The Absent-minded coupon collector problem is defined by two parameters, $n$ and $q\in (0,1]$. As before, $n$ is the number of coupon types to be collected. Unlike the vanilla coupon collector problem, here once the collector collects a coupon, they might forget where they kept the coupon. The collector remembers where the coupon is kept with probability at least $q$. The goal of the collector is to be able to keep all types of coupons. 
\end{definition}

We have the following bound on the expected number of attempts required for the collector to complete the entire set.

\begin{lemma}~\label{lem:absent_minded}
    In the absent-minded coupon collector's problem, the expected number of attempts required to complete the collection of all coupon types is at most $\frac{n}{q} \cdot  H_n$.
\end{lemma}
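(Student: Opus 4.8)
The plan is to decompose the collection process into $n$ stages according to how many distinct coupon types the collector currently manages to keep, and to bound the waiting time in each stage. First I would define, for $i \in [n]$, stage $i$ as the block of attempts made while the collector keeps exactly $i-1$ types, ending with the attempt that raises the kept count to $i$; let $X_i$ be the number of attempts in this stage. By linearity of expectation the total number of attempts is $\sum_{i=1}^n \E[X_i]$, so it suffices to bound each $\E[X_i]$ separately.

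The key step is to lower bound the per-attempt success probability within stage $i$. Conditioned on any history in which the collector currently keeps a set of $i-1$ types, a single attempt draws a uniformly random type, which is one of the $n-(i-1)$ not-yet-kept types with probability $\frac{n-i+1}{n}$; conditioned on drawing such a type, the collector keeps it (remembers where it is placed) with probability at least $q$. Hence the conditional probability that an attempt increases the kept count is at least $p_i := \frac{(n-i+1)\,q}{n}$, and this bound holds uniformly over all histories consistent with the collector being in stage $i$.

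Because this conditional success probability is at least $p_i$ irrespective of the past, the waiting time $X_i$ is stochastically dominated by a geometric random variable with parameter $p_i$; formally, one couples each attempt with an independent $\mathrm{Bernoulli}(p_i)$ trial so that a coupled success forces a success in the process, whence $\E[X_i] \le 1/p_i = \frac{n}{(n-i+1)\,q}$. Summing and re-indexing the harmonic sum then gives
\[
\sum_{i=1}^n \E[X_i] \;\le\; \frac{n}{q}\sum_{i=1}^n \frac{1}{n-i+1} \;=\; \frac{n}{q}\sum_{j=1}^n \frac{1}{j} \;=\; \frac{n}{q}\,H_n,
\]
which is the claimed bound.

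The only delicate point is the stochastic-domination step: since the remembering events across attempts need not be independent and $q$ is merely a lower bound, one cannot treat $X_i$ as exactly geometric. I expect this to be the main obstacle, but it is resolved by the standard observation that a uniform lower bound on the conditional success probability at every attempt already suffices to dominate the waiting time by a geometric variable, which is all that the expectation bound requires.
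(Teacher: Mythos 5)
Your proof is correct and follows essentially the same approach as the paper's: decompose the process into stages by the number of coupons kept, lower bound each attempt's success probability by $\frac{(n-i+1)q}{n}$, and sum the resulting harmonic series to get $\frac{n}{q}H_n$. The only difference is that you make explicit the stochastic-domination/coupling justification for bounding $\E[X_i]$ by $1/p_i$, a step the paper treats as immediate; this added rigor is welcome but does not change the argument.
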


\begin{proof}
After collecting $i$ coupons, the probability of obtaining a new coupon on the next attempt is at least $\frac{n-i}{n} \cdot q$. The expected number of attempts needed to collect the $i+1$th coupon is at most $\frac{n}{(n-i)q}$. Therefore, to collect all coupons, the total expected number of attempts is at most: $$\sum_{i=0}^{n-1} \frac{n}{(n-i)q} = \frac{n}{q} \sum_{i=0}^{n-1} \frac{1}{i} = \frac{n}{q}  \cdot H_n.$$
\end{proof}

Using the above lemma to prove the following two lemmas.

\begin{lemma}~\label{lem:ub_easy_phase}
    The expected number of proposal during the easy phase is $O(c^2n\log n)$.
\end{lemma}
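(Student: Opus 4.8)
The plan is to bound the easy phase by decomposing it into at most $c$ sequential stages, one per occupied level, and to model the clearing of each level as an instance of the Absent-minded Coupon Collector problem (\Cref{lem:absent_minded}). The starting observation is that a hospital's rank for its current match is monotonically non-increasing throughout the algorithm: an acceptance can only replace $\mu(h)$ by a more preferred doctor. Hence each hospital enters any given level $E_i$ (or $U$) at most once, and once it leaves toward a better-ranked set it never returns. Moreover, a single acceptance suffices to exit a level: if $h\in E_i$ has $\rank_h(\mu(h))=r\in[(i+1)n/c+1,(i+2)n/c]$ and accepts some doctor $d$, then by the loyalty rule $\rank_h(d)<r-n/c\le (i+1)n/c$, so $h$'s new match-rank falls strictly below the range of $E_i$, i.e.\ into $E_{i-1}$ or a better set. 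The same holds for $U$ (any acceptance matches the hospital and removes it from $U$), and $M$ serves as the target boundary we never need to clear.

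Next I would invoke the amnesiac-doctor relaxation of Section~\ref{sec:prelims}: replacing the real doctors by amnesiac ones only adds redundant proposals, so it upper bounds the true proposal count, and it makes every proposal land on a uniformly random hospital among all $n$, independent of the past---exactly a coupon-collector draw. I would then process the levels top-down: first $U$, then $E_{c-2},E_{c-3},\dots,E_1$. By monotonicity, once all strictly higher levels are empty no hospital can enter the level currently being cleared, so at the start of its stage that level contains at most $n$ hospitals and receives no further arrivals. Clearing it is then precisely the Absent-minded Coupon Collector with $n$ coupon types (the hospitals), where ``collecting'' a coupon means the corresponding hospital exits the level: a proposal hits an as-yet-unexited hospital with probability (number remaining)$/n$, and conditioned on hitting one in $E_i$ it is accepted---by the deferred-decisions principle---with probability at least $i/c\ge 1/c$, while in $U$ it is accepted with probability $1$. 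Thus the success parameter satisfies $q\ge 1/c$.

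Applying \Cref{lem:absent_minded} with $q\ge 1/c$ bounds the expected number of proposals in each stage by $\tfrac{n}{q}H_n\le cn\,H_n=O(cn\log n)$, and this count already absorbs the ``wasted'' proposals falling on hospitals that have already improved. Summing over the at most $c$ stages yields the claimed $O(c^2 n\log n)$ expected proposals for the whole easy phase.

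The step I expect to be the main obstacle is justifying the clean stage-by-stage decomposition, since in reality the proposals form a single interleaved sequence and hospitals continuously flow between levels. The resolution rests on two points I would make precise: (i) monotonicity of the match-rank guarantees that processing levels from the top prevents any back-flow into a level once its stage begins, so the ``no new arrivals'' hypothesis of the coupon-collector model is met; and (ii) the amnesiac relaxation together with the per-step conditional acceptance bound $\ge q$ is exactly what the proof of \Cref{lem:absent_minded} consumes, so full independence across hospitals is not required. A secondary technical point is the deferred-decisions argument for the acceptance probability $\ge i/c$ inside $E_i$, which must account for the preferences already revealed by earlier rejections; this can only strengthen the bound, since conditioning on past rejections leaves the set of acceptable doctors at least as large.
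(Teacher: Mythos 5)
Your proof is correct and rests on exactly the same machinery as the paper's: the amnesiac-doctor relaxation plus the Absent-minded Coupon Collector bound (Lemma~\ref{lem:absent_minded}) with per-hit acceptance probability $q\ge 1/c$. The only genuine difference is the bookkeeping. The paper runs a \emph{single} coupon-collector instance over $n(c-1)$ coupons, one per (hospital, set-transition) pair: since each hospital passes through at most $c-1$ sets, when $j$ transitions remain at least $j/(c-1)$ hospitals are still live, so the pooled recurrence holds and gives $c(c-1)\,n\,H_{(c-1)n}=O(c^2 n\log n)$ in one shot. You instead clear the levels sequentially, top-down ($U$, then $E_{c-2},\dots,E_1$), running $c-1$ instances with $n$ coupons each; this requires the monotonicity/no-back-flow observation that you supply explicitly (a hospital's match rank only improves, and a single acceptance pushes it strictly below its current level), a fact the paper uses only implicitly. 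Both decompositions yield the same $O(c^2 n\log n)$; yours makes the structural facts precise at the price of a stage argument, while the paper's pooled version is shorter but asks the reader to see why $n(c-1)$ is the right coupon count. One caveat you share with the paper rather than introduce: in the amnesiac model a proposal landing on a level hospital may be \emph{redundant} (the proposer already proposed there) and is then auto-rejected, so the conditional acceptance bound $\ge 1/c$ really needs the deferred-decisions argument you gesture at in your closing remark; the paper glosses over this point in exactly the same way, so it is not a gap relative to the paper's own standard of rigor.
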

\begin{proof}
    Each hospital starts in set $U$ and transitions through at most $c-1$ sets before ending in $H$ or $F$. This process can be modeled as collecting $n(c-1)$ coupons, where each coupon corresponds to a hospital moving to a new set. As mentioned above, the hospitals accepts proposals with probability at least $1/c$. Thus, we can obtain an upper bound on the expected number of proposals during the easy phase by applying Lemma~\ref{lem:absent_minded} with $n(c-1)$ coupons and $q=1/c$. The expected number of proposals during the easy phase is thus bounded by:
    $$c(c-1)nH_{(c-1)n}=O(c^2 n \log n)$$ proposals.
\end{proof}

\begin{lemma}~\label{lem:ub_medium_phase}
    The expected number of proposal during the moderate phase is $O(n^{3/2}\log n)$.
\end{lemma}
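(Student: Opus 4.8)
The plan is to reduce the moderate phase to the \emph{absent-minded coupon collector} of Lemma~\ref{lem:absent_minded}, in exact analogy with the proof of Lemma~\ref{lem:ub_easy_phase}, but now taking the draining of $M$ as the collection target and $q = 1/\sqrt{n}$ as the remembering probability. First I would record two structural facts about the moderate phase. Since we assume $\{E_i\}_i \cup \{U\} = \emptyset$ throughout this phase, every hospital lies in $M \cup H \cup F$; moreover a hospital's rank for its current match only improves over time, so no hospital ever enters $M$ during this phase---$M$ can only shrink. Hence it suffices to bound the number of proposals needed to move the (at most $n$) hospitals currently in $M$ into $\{H, F\}$.

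Next I would show that a \emph{single} accepted proposal removes a hospital from $M$. If $h \in M$ has $\rank_h(\mu(h)) = r \in [n/c + \sqrt{n} + 1, 2n/c]$ and accepts a doctor $d$, then by Eq.~\eqref{eq:loyalty} we have $\rank_h(d) < r - n/c \le n/c$, so $h$'s new rank is at most $n/c - 1$, placing $h$ in $F$. Thus every hospital in $M$ requires exactly one acceptance, and, by the definition of $M$, each proposal reaching such a hospital is accepted with probability at least $\sqrt{n}/n = 1/\sqrt{n}$.

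Then, making the doctors amnesiac---which only introduces redundant, immediately rejected proposals and therefore yields an upper bound on the true count---I would model the emptying of $M$ as an absent-minded coupon collector with $n$ coupons (one per hospital) and remembering probability $q = 1/\sqrt{n}$: each amnesiac proposal lands on a uniformly random hospital, and a proposal landing on an as-yet-uncollected hospital of $M$ ``collects'' it (sends it to $F$) with probability at least $1/\sqrt{n}$. Applying Lemma~\ref{lem:absent_minded} then yields an expected proposal count of at most $\frac{n}{1/\sqrt{n}} H_n = n^{3/2} H_n = O(n^{3/2}\log n)$.

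The step I expect to require the most care is the clean matching to the coupon-collector parameters. I must ensure that (i) hospitals outside $M$ (those in $H, F$, which need not accept with probability $1/\sqrt{n}$) are harmlessly treated as already-collected coupons, so that collecting only the $M$-hospitals is stochastically dominated by the full $n$-coupon collection that Lemma~\ref{lem:absent_minded} bounds; and (ii) no hospital re-enters $M$ after leaving it, which is precisely the rank-monotonicity established in the first step and is what guarantees that the set of ``uncollected'' coupons never grows. Given these two points, the claimed $O(n^{3/2}\log n)$ bound follows immediately.
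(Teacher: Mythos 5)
Your proposal is correct and follows essentially the same route as the paper's proof: treat hospitals in $H\cup F$ as already-collected coupons, note that each hospital in $M$ accepts a uniformly random proposer with probability at least $\sqrt{n}/n = 1/\sqrt{n}$, pass to amnesiac doctors for an upper bound, and apply Lemma~\ref{lem:absent_minded} with $n$ coupons and $q = 1/\sqrt{n}$ to get $n^{3/2}H_n = O(n^{3/2}\log n)$. Your two supporting observations (rank monotonicity, so $M$ never grows, and that an acceptance by an $M$-hospital in fact sends it directly to $F$ since the new rank is below $2n/c - n/c = n/c$) are correct details that the paper leaves implicit.
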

\begin{proof}
    Each hospital in $M$ accepts a proposal with probability at least $1/\sqrt{n}$, as these hospitals will accept a proposal from their top $\sqrt{n}$ doctors. Hospitals already in $H$ or $F$ can be treated as "collected", and do not require further proposals. Thus, we can apply Lemma~\ref{lem:absent_minded} again with $n$ coupons and $q=1/\sqrt{n}$, and immediately get the bound in the lemma statement.
\end{proof}

We bound the expected number of hard-to-match hospitals.
\begin{lemma} \label{lem:H_bound}
    The probability that more than $2c\sqrt{n}$ hospitals belong to $H$ after the moderate phase is at most $1/n$.
\end{lemma}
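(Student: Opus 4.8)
The plan is to reduce the count to a sum of per-hospital indicators whose conditional means are uniformly small, and then apply a Chernoff bound via stochastic domination.

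First I would exploit monotonicity. Over the run of the algorithm a hospital's match rank $\rank_h(\mu(h))$ only decreases, and by the loyalty rule each accepted proposal improves it by at least $k+1=n/c+1>\sqrt n$ (using $c<\sqrt n$). Since $H$ is an interval of width $\sqrt n$, a hospital's match rank can lie in $H$ for at most one match; moreover, once the rank drops to at most $n/c+\sqrt n$ (i.e. the hospital reaches $H\cup F$), any further accepted proposal improves the rank by more than $n/c$ and hence lands it in $F=[1,n/c]$. Therefore a hospital that is in $H$ after the moderate phase must have landed in $H$ precisely at the \emph{first} acceptance that carried its rank into $H\cup F$. Writing $X_h$ for the indicator that this first entry of $h$ into $H\cup F$ lands in $H$, we obtain $|H|\le\sum_{h\in\hospitals}X_h$ after the moderate phase.

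Next I would bound each $X_h$ conditionally on the past. An entry into $H$ can only happen from an unmatched state or from a match rank $\rho>2n/c$: if $h\in M$ then $\rho\le 2n/c$, so its next accepted doctor has rank at most $\rho-n/c-1\le n/c-1\in F$, and $H$ is never entered from $M$. The crucial observation is that a hospital whose current rank exceeds $2n/c$ has, by monotonicity, never had rank below $2n/c$, so it has never rejected an $F$-ranked doctor: rejecting a doctor $d$ while matched at rank $\rho$ requires $\rank_h(d)\ge \rho-n/c>n/c$, hence every previously rejected (and previously accepted) doctor is ranked strictly worse than $n/c$. Thus at the moment of the entering proposal all $n/c$ ranks of $F$ are still unrevealed to $h$ (and the unmatched case is trivial, nothing being revealed yet). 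By the principle of deferred decisions—$h$'s preference is an independent uniform permutation and each proposer is a doctor who has not proposed to $h$ before—the rank of the accepted doctor is uniform over the unrevealed acceptable ranks, so conditioned on landing in $H\cup F$ the probability of landing in the width-$\sqrt n$ set $H$ rather than in the fully unrevealed $F$ of size $n/c$ is at most $\tfrac{\sqrt n}{n/c}=c/\sqrt n=:p$. This bound holds conditionally on the entire history preceding the entry event.

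Finally I would convert these conditional bounds into concentration. Each hospital has exactly one first-entry event, so there are at most $n$ such events; ordering them in time, the indicator that the $j$-th entry lands in $H$ has conditional probability at most $p$ given the past, so a standard stochastic-domination argument shows $\sum_h X_h$ is dominated by $\mathrm{Bin}(n,p)$, which has mean $np=c\sqrt n$. The multiplicative Chernoff bound with $\epsilon=1$ then gives $\Pr[\sum_h X_h>2c\sqrt n]\le 2\exp(-c\sqrt n/3)\le 1/n$ for $n$ large enough, since $c\ge 1$ forces $c\sqrt n\ge\sqrt n=\omega(\log n)$; this is exactly the claim. I expect the second step to be the main obstacle: the naive bound on $\Pr[X_h=1]$ is $\Theta(1)$ because $H$ and the \emph{unrevealed} part of $F$ can both be of size $\Theta(\sqrt n)$, and the whole argument hinges on the monotonicity observation that $F$-ranked doctors can only be rejected once a hospital has entered $M$—strictly after the last moment at which it could still reach $H$—so that the denominator of size $n/c$ is genuinely available at every possible entry into $H$.
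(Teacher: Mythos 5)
Your proof is correct and takes essentially the same approach as the paper's: the key insight that whenever a proposal can move a hospital into $H$ it could equally move it into $F$, yielding a per-hospital probability of at most $\sqrt{n}/(n/c) = c/\sqrt{n}$, followed by a Chernoff bound on the count. Your write-up is in fact more rigorous than the paper's brief argument---in particular, your justification that all $n/c$ ranks of $F$ are still unrevealed at every possible entry into $H$ (because such entries occur only from the unmatched state or from match ranks above $2n/c$, where no $F$-ranked doctor can ever have been rejected), and your stochastic-domination step handling the dependence among hospitals before invoking Chernoff, are both left implicit in the paper.
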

\begin{proof}
    We notice that whenever a proposal can move a hospital to set $H$, it can also move the hospital to set $F$. Therefore, the probability of a hospital to move to set $H$ at some point is at most $\sqrt{n}/(\sqrt{n} + n/c)< c/\sqrt{n}$. In expectation, there are less than $c\sqrt{n}$ such hospitals, and by Chernoff bound, the probability there are more than $2c\sqrt{n}$ such hospitals is at most $e^{-c\sqrt{n}/3}\ll 1/n$. 
\end{proof}

We use the above to prove Theorem~\ref{thm:upper_bound_unbalanced_small_loyalty}:
\begin{proof}
By Lemmas~\ref{lem:ub_easy_phase} and ~\ref{lem:ub_medium_phase}, in expectation, after $O((c^2+\sqrt{n})n\log n)$ proposals, all hospitals are in sets $H$ and $F$. Once all hospitals are in $H$ and $F$, only hospitals in $H$ can get matched once, and the number of doctors proposing is at most $|H|+1$. By Lemma~\ref{lem:H_bound}, $E[|H|]\le 2c\sqrt{n}\cdot\frac{n-1}{n} + n\cdot\frac{1}{n}=O(c\sqrt{n})$. Since each doctor can propose to at most $n$ hospitals, we get that in expectation, there are $O(cn^{3/2})$ proposals after the moderate phase. We conclude that the expected number of proposals is $O(c^2n\log n+n^{3/2}\log n+cn^{3/2})$.
\end{proof}

We get the following corollary.
\begin{corollary}~\label{cor:upper_bound_low_loyalty}
    When $k\ge n/\log n$, the expected rank for doctors in an unbalanced market is $O(\sqrt{n}\log n)$. \label{crl:upper_bound_unbalanced}
\end{corollary}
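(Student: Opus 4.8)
The plan is to derive the corollary as a direct specialization of Theorem~\ref{thm:upper_bound_unbalanced_small_loyalty}, which already expresses the expected doctors' rank as a function of the loyalty parameter through $c = c(n)$ with $k = n/c$. Given the hypothesis $k \ge n/\log n$, I would set $c = n/k$, so that the identity $k = n/c$ holds and, from $k \ge n/\log n$, we obtain $c \le \log n$.

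Before invoking the theorem I would verify its applicability condition $c < \sqrt{n}$. This is immediate: $c \le \log n < \sqrt{n}$ for all sufficiently large $n$, and equivalently $k \ge n/\log n > \sqrt{n}$ (since $n > (\log n)^2$ for large $n$) confirms that we lie inside the regime the theorem covers.

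The main step is then to plug $c \le \log n$ into the rank bound $O(c^2\log n + \sqrt{n}\log n + c\sqrt{n})$ supplied by the theorem and simplify the three terms separately. The first term satisfies $c^2\log n \le (\log n)^3$, which is $o(\sqrt{n}\log n)$ because $(\log n)^2 = o(\sqrt{n})$; the third term satisfies $c\sqrt{n} \le \sqrt{n}\log n$; and the middle term is already exactly $\sqrt{n}\log n$. Hence each of the three terms is $O(\sqrt{n}\log n)$, so the total expected rank is $O(\sqrt{n}\log n)$, as claimed.

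I do not anticipate a genuine obstacle here, since the corollary is a substitution into an already-proved theorem rather than a new argument. The only points requiring a moment of care are (i) observing that over the whole range $k \ge n/\log n$ the bound is maximized at the smallest admissible $k = n/\log n$, equivalently the largest $c = \log n$ (the rank bound being non-decreasing in $c$), so it suffices to evaluate the worst case $c = \log n$; and (ii) the elementary asymptotic comparison $(\log n)^2 = o(\sqrt{n})$ that collapses the $c^2\log n$ contribution into the dominant $\sqrt{n}\log n$ term.
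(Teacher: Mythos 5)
Your proof is correct and follows essentially the same route as the paper: both specialize Theorem~\ref{thm:upper_bound_unbalanced_small_loyalty} at the worst case $c=\log n$ and observe that $c^2\log n$ and $c\sqrt{n}$ are dominated by (or equal to) $\sqrt{n}\log n$. If anything, you are slightly more careful than the paper, which only plugs in $c(n)=\log n$ without explicitly noting that the bound is monotone in $c$ over the whole range $k\ge n/\log n$, and which also miscites Theorem~\ref{thm:very_high_loyalty_unbalanced} where Theorem~\ref{thm:upper_bound_unbalanced_small_loyalty} is intended.
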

\begin{proof}
    Plugging in $c(n)=\log n$ in Theorem~\ref{thm:very_high_loyalty_unbalanced}, we get that the expected number of proposals is 
    $$O(n\log^3 n+n^{3/2}\log n)=O(n^{3/2}\log n),$$ implying that the expected doctors' rank is $O(\sqrt{n}\log n)$.
\end{proof}

\paragraph{A Lower Bound.}

We show the following. 

\begin{theorem} \label{thm:lower_bound_unbalanced}
    For any L-DA variant, when $k\in [\sqrt{n}\ln n ,n-\sqrt{n}\ln n]$, there are $\Omega(n^{3/2}/\log n)$ proposals in expectation and the expected doctor rank is $\Omega(\sqrt{n}/\log n)$.
\end{theorem}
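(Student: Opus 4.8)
The plan is to localize all the action in the \emph{unbalanced phase} (Lemma~\ref{lem:balanced_phase_bound} already shows the balanced phase contributes only $\Theta(n\log n)$ proposals, which is lower order), and to exhibit a set $T$ of hospitals carrying the two competing properties advertised in the introduction. For a hospital $h$, write $\omega_h = \rank_h(\mu(h)) - k - 1$ for its current \emph{slack}: by Eq.~\eqref{eq:loyalty}, $h$ accepts a proposing doctor $d$ exactly when $\rank_h(d) \le \omega_h$, so $h$ is available iff $\omega_h \ge 1$. At the start of the unbalanced phase $\omega_h = f_h - k - 1$, with $f_h$ uniform on $\{1,\dots,n+1\}$. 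I would set
\[
T = \Bigl\{h : f_h - k - 1 \in \bigl[\tfrac12\sqrt{n}\ln n,\ \sqrt{n}\ln n\bigr]\Bigr\}.
\]
The two hypotheses on $k$ are precisely what make $T$ well behaved: $k \le n-\sqrt n\ln n$ guarantees that slacks as large as $\sqrt n\ln n$ are attainable (the upper endpoint $k+1+\sqrt n\ln n$ does not exceed $n+1$), so $\E|T| = \Theta(\sqrt n\ln n)$ and, by a Chernoff bound, $|T| \ge \tfrac14\sqrt n\ln n$ with probability $\ge 1-1/n$; while $k \ge \sqrt n\ln n$ guarantees that a single re-match drops $\rank_h(\mu(h))$ by at least $k+1 > \sqrt n\ln n \ge \omega_h$, so that \emph{every hospital of $T$ is re-matched at most once} and becomes permanently unavailable afterward.

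Next I would establish the two properties. Property (i), that the algorithm is very unlikely to terminate before $R_0 := \tfrac{1}{16}\sqrt n\ln n$ distinct hospitals of $T$ have been re-matched: termination means some doctor $d$ is rejected by all $n$ hospitals, hence by all $\ge \tfrac34|T|$ hospitals of $T$ that are still available (and, by monotonicity of availability, were available when $d$ proposed to them). Revealing hospital preferences lazily, each such rejection is the independent event $\rank_h(d) > \omega_h \ge \tfrac12\sqrt n\ln n$, of probability at most $1-\tfrac{\ln n}{2\sqrt n}$; the product over $\Omega(\sqrt n\ln n)$ hospitals is $\exp(-\Omega(\ln^2 n)) = n^{-\omega(1)}$, which survives a union bound over the $n+1$ doctors. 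Property (ii) is the counting bottleneck. Let $R_t$ be the number of distinct hospitals of $T$ re-matched within the first $t$ proposals of the unbalanced phase. A proposal increases $R_t$ only if the proposing doctor lands on an available, not-yet-re-matched hospital $h\in T$ \emph{and} is accepted there. By the principle of deferred decisions on the doctors' side, a doctor's next proposal is (conditionally) uniform among the hospitals it has not yet tried, so it lands on a fixed available hospital of $T$ with probability $\approx 1/n$; conditioned on landing on $h$, acceptance has probability at most $\omega_h/n \le \sqrt n\ln n/n$. Summing over the at most $|T|\le\sqrt n\ln n$ candidates,
\[
\Pr\bigl[R_t > R_{t-1} \mid \mathcal F_{t-1}\bigr] \;\le\; \frac{1}{n^2}\sum_{h\in T} \omega_h \;\le\; \frac{|T|\cdot\sqrt n\ln n}{n^2} \;\le\; \frac{\ln^2 n}{n}.
\]

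With this per-step bound in hand, an optional-stopping (Wald) argument finishes the proof. Since $M_t = R_t - \sum_{s\le t}\Pr[R_s>R_{s-1}\mid\mathcal F_{s-1}]$ is a martingale, the first time $\tau$ at which $R_\tau = R_0$ satisfies $R_0 = \E[R_\tau] \le \tfrac{\ln^2 n}{n}\E[\tau]$, so $\E[\tau] \ge R_0\, n/\ln^2 n = \Omega(n^{3/2}/\log n)$. Combining with property (i)---which ensures that with probability $1-o(1)$ the process supplies these $\tau$ proposals before terminating---gives $\E[\#\text{proposals}] = \Omega(n^{3/2}/\log n)$. Since each doctor's proposal count equals the rank of its final match (a doctor stops at the last hospital it tries), the total number of proposals equals $\sum_d \rank_d(\mu(d))$, and dividing by $n+1$ yields the stated $\Omega(\sqrt n/\log n)$ average rank. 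It is worth noting that the quantity $\sum_{h\in T}\omega_h = \Theta(n\log^2 n)$ driving this estimate is \emph{independent of $k$}, which is exactly why a single argument covers the whole window $k\in[\sqrt n\ln n, n-\sqrt n\ln n]$.

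The step I expect to be the main obstacle is making the per-step bound $\Pr[R_t>R_{t-1}\mid\mathcal F_{t-1}]\le \ln^2 n/n$ rigorous, specifically the claim that a proposal lands on a given available hospital of $T$ with probability $\approx 1/n$. This is clean early in a rejection chain but fails verbatim once the floating doctor has already tried almost all hospitals, where the conditional landing probability on a surviving hospital of $T$ can far exceed $1/n$. The saving grace is that such deep proposals are themselves numerous, so the honest argument must either amortize landing probabilities against chain depth or split proposals into ``early'' ones (many untried hospitals, where the $1/n$ estimate is valid) and ``late'' ones (which are few unless the total proposal count is already $\Omega(n^{3/2}/\log n)$). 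Carrying out this case analysis while keeping all the high-probability estimates compatible with a single filtration is the delicate heart of the argument; property (i)'s handling of the doctors' accumulated proposal history requires similar care.
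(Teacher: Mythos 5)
Your overall strategy is the same as the paper's: the same window of first-match ranks just above the loyalty threshold (your slack band $\omega_h\in[\tfrac12\sqrt n\ln n,\sqrt n\ln n]$ is the paper's set $T$), the same observation that $k\ge\sqrt n\ln n$ makes every re-match of a $T$-hospital permanent, and the same two-property plan, with your property (i) matching the paper's Lemma~\ref{lem:eps3_high} (and the conditioning-on-history issue you flag there is exactly what the paper's event $\varepsilon_1$ and Lemma~\ref{lem:eps1_high} are for). The only structural difference is in property (ii): you bound the per-proposal probability of creating a new re-match and apply optional stopping, while the paper decomposes the unbalanced phase into inter-re-match intervals $L_i$ and lower-bounds each $\E[L_i\mid\cdot]$ via a two-urn experiment (Lemma~\ref{lem:li_bound}). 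These are interchangeable accountings of the same quantity.

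The genuine gap is in your per-step bound, and it is larger than the one difficulty you acknowledge. Besides the doctor-side landing probability (which you flag), your acceptance bound is wrong in the same direction on the hospital side: conditioned on landing on an available $h\in T$ that has received $m_h$ proposals, the acceptance probability is exactly $\omega_h/(n+1-m_h)$ --- every doctor $h$ has rejected has rank above the threshold, so all $\omega_h$ low rank slots are still held by doctors untried at $h$ --- and this is \emph{at least} $\omega_h/n$, not at most; your claimed bound $\omega_h/n$ holds only up to the factor $(n+1)/(n+1-m_h)$, i.e., only for light hospitals. Moreover, your sketched repair (``late proposals are few unless the total is already $\Omega(n^{3/2}/\log n)$'') does not work as stated: a single heavy doctor can emit $\Theta(n)$ late proposals, so late proposals can be of the same order as the total count. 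The repair that works is the paper's dichotomy on heavy agents (event $\varepsilon_2$ and Lemma~\ref{lem:lower_bound_unbalanced_not_epsilon_2}): either at some point there are more than $\sqrt n$ heavy doctors or more than $\sqrt n$ heavy hospitals, in which case there are already $\Omega(n^{3/2})$ proposals and you are done; or else, since each re-match of a $T$-hospital is permanent and permanently absorbs the proposing doctor, re-matches involving a heavy doctor or a heavy hospital number at most $2\sqrt n=o(R_0)$, so $\Omega(\sqrt n\log n)$ of the required re-matches are light-doctor-to-light-hospital, and for those your per-step bound $O(\ln^2 n/n)$ is valid up to a constant factor. With this dichotomy inserted (and property (i) carried out with the care the paper takes in Lemmas~\ref{lem:eps1_high} and~\ref{lem:eps3_high}), your optional-stopping argument does yield the theorem.
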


Consider some loyalty $k\in [\sqrt{n}\ln n ,n-\sqrt{n}\ln n]$, and $\ell=\sqrt{n}\ln n$. Our proof focuses on hospitals that are first matched to a doctor they rank in the $[k+\ell/2+1, k+\ell]$ range. This set of hospitals is large enough for the following two thing to happen:
\begin{enumerate}
    \item They are easy enough to re-match such that a doctor is not likely to be rejected by all such hospitals; therefore, we need to re-match a large portion of these hospitals. 
    \item They are hard enough to re-match such that the doctors will need to send out $\Omega(n^{3/2}/\log n)$ proposals in order to match a large portion of these hospitals.
\end{enumerate}

Indeed, Figure~\ref{fig:361axes} suggest this is the case, as almost all hospitals in this range improve their match before the algorithm terminates.

Recall that $f_h$ is the rank of the first doctor who proposed to (and accepted by) hospital $h$. We define the following set of hospitals:
\begin{eqnarray}
T=\{h \ : \ f_h\in [k+\ell/2+1,k+\ell]\}.    
\end{eqnarray}
 
We define event $\varepsilon_1$ as the event where the following two things happen during the balanced phase:
\begin{enumerate}
    \item $|T|\in [\ell/4,\ell]$.
    \item For every doctor, the number of hospitals in $T$ they proposed to during the balanced phase is at most $\ell/8$.
\end{enumerate}

We show that this event is very likely. 

\begin{lemma} \label{lem:eps1_high}
    $\Pr[\neg \varepsilon_1] \le 1/2\ln n.$ 
\end{lemma}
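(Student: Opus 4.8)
The plan is to bound $\Pr[\neg\varepsilon_1]$ by a union bound over the two defining sub-events, treating the cardinality bound on $|T|$ (item~1) and the per-doctor bound (item~2) separately. I will show that item~1 fails only with probability $e^{-\Theta(\ell)}$, which is negligible, so that essentially the entire budget $1/(2\ln n)$ is available for item~2.

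For item~1, the observation that makes the argument clean is that only the \emph{count} $|T|=\sum_{h}\mathds{1}[f_h\in R]$ matters, where $R=[k+\ell/2+1,k+\ell]$ has size $\ell/2$. Revealing the hospital preferences lazily, as in the principle of deferred decisions from Section~\ref{sec:prelims}, the proposer's rank $f_h$ at the moment $h$ receives its first proposal is a fresh uniform draw from $\{1,\dots,n+1\}$, independent of the entire past, since $h$'s preference list has not been consulted before then. Because each hospital is matched exactly once during the balanced phase, the collection $\{f_h\}_h$ is distributed as $n$ i.i.d.\ uniform samples, and the (random, dynamics-dependent) order in which hospitals first receive proposals does not affect the count. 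Hence $|T|$ is stochastically equal to $\mathrm{Bin}(n,p)$ with $p=\tfrac{\ell/2}{n+1}$, so that $\E[|T|]=\tfrac{n}{n+1}\cdot\tfrac{\ell}{2}\in[\ell/4,\ell/2]$, comfortably inside $[\ell/4,\ell]$. Since $\ell=\sqrt n\ln n\to\infty$, the multiplicative Chernoff bound gives $\Pr[|T|\notin[\ell/4,\ell]]\le e^{-\Theta(\ell)}\ll 1/(2\ln n)$.

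For item~2, I fix a doctor $d$ and bound $X_d$, the number of hospitals of $T$ that $d$ proposes to during the balanced phase, again through deferred decisions revealed in the order of the run. When $d$ proposes to a hospital $h$ for the first time there are two cases. If $h$ is currently unmatched then $d$ is its first proposer, $f_h$ is drawn uniformly, and if $f_h\in R$ then $h\in T$ but $d$ is simultaneously accepted with a match of rank in $R$; under loyalty $k\ge\sqrt n\ln n$ such a match can only be undone by a top-$(f_h-k)\le\ell$ proposer, so the first-proposer case contributes little to $X_d$. If instead $h$ is already matched, then $\mathds{1}[h\in T]$ was fixed when $h$ first received a proposal and is an independent $\mathrm{Bernoulli}(p)$ bit, independent of $d$'s own preference order. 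In both cases the increments of $X_d$ are dominated by independent $\mathrm{Bernoulli}(p)$ trials, one per proposal of $d$, so conditioned on $d$ making $N_d$ proposals we have $X_d\preceq\mathrm{Bin}(N_d,p)$. A Chernoff bound then controls $\Pr[X_d>\ell/8]$ whenever $N_d$ is not too large, and a union bound over the $n+1$ doctors is intended to finish the argument.

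The crux of item~2 — and the step I expect to be the main obstacle — is controlling doctors with large $N_d$: since $\E[X_d]\approx N_d\,p$, the tail bound only beats $\ell/8$ when $N_d$ is small relative to $n$, whereas a doctor that fills one of the last few hospitals can be forced far down its list (by Claim~\ref{clm:min_element}, the last vacancy sits at position $\approx n/2$), making $N_d=\Theta(n)$. To handle this I would combine the total-proposal bound of Lemma~\ref{lem:balanced_phase_bound} (at most $2n\ln n$ proposals with probability $1-1/n$), which caps how many doctors can make many proposals, with the structural fact that a hospital in $T$ is available only to a top-$\ell$ proposer, so that the $T$-hospitals a single searching doctor can actually reach are limited; reconciling these two ingredients into the clean $1/(2\ln n)$ bound, rather than the naive per-proposal accounting, is the delicate part of the proof.
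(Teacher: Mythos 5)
Your handling of item~1 of $\varepsilon_1$ is correct and essentially identical to the paper's (Chernoff on $|T|$ with mean $\approx\ell/2$; this term is negligible). The genuine gap is in item~2, and it sits exactly where you yourself flagged it: the device you propose --- dominating each doctor's hits on $T$ by independent $\mathrm{Bernoulli}(p)$ trials so that $X_d\preceq\mathrm{Bin}(N_d,p)$, then Chernoff plus a union bound --- cannot prove the statement. With $p\approx\ell/(2n)$ the dominating binomial has mean $N_d\,p$, which already exceeds the threshold $\ell/8$ once $N_d\ge n/4$, and doctors making $\Omega(n)$ proposals during the balanced phase occur with non-negligible (indeed constant) probability, e.g.\ a doctor filling one of the last vacancies. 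Capping the \emph{total} number of proposals at $2n\ln n$ via Lemma~\ref{lem:balanced_phase_bound} does not repair this: the failure event is per-doctor, and a single heavy doctor breaks the union bound. So per-proposal accounting is a dead end, as you suspected, and the ``reconciliation'' you defer is the entire content of the lemma.

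What the paper actually does is split the $T$-hospitals that a fixed doctor $d$ proposes to into three categories, each killed by a different mechanism: \emph{(i)} $T$-hospitals that are matched but not yet re-matched --- each such proposal is \emph{accepted} with probability at least $(\ell/2)/n$ (the slack $f_h-k-1\ge\ell/2$), and since a re-matched $T$-hospital can never be displaced again when $k\ge\ell$, an acceptance is permanent; hence proposing to $\ell/32$ of them forces $\ell/32$ consecutive rejections, probability $(1-\ell/2n)^{\ell/32}=n^{-\Theta(\ln n)}$; \emph{(ii)} $T$-hospitals for which $d$ is the first proposer --- the paper relaxes to counting all hospitals in $\hospitals$ of which $d$ is the first match, and each such occurrence requires $d$ to be later displaced, i.e.\ $\rank_h(d)>k$, probability at most $1-k/n$ per hospital, again $n^{-\Theta(\ln n)}$ (note your version of this step, conditioning on $f_h\in R$, forces $\rank_h(d)>k$ and so gives no small factor per event); \emph{(iii)} $T$-hospitals that are \emph{already re-matched}, which reject everyone ``for free'' so no rejection-probability argument applies --- here one needs a global count, supplied by the separate Lemma~\ref{lem:ell_rematch_bound}: with probability at least $1-32/\ln^2 n$, at most $\ell/16$ hospitals of $T$ are re-matched during the balanced phase. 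That third term, $32/\ln^2 n$, is the true dominant contribution to the $1/(2\ln n)$ bound, not a Chernoff tail, so your plan of spending ``essentially the entire budget'' on a Chernoff/union-bound for item~2 is structurally misaligned with how the bound arises. Your observation that a $T$-hospital is available only to a top-$\ell$ proposer is indeed one of the needed ingredients (it drives (i) and the permanence of re-matches), but without the three-way decomposition and without Lemma~\ref{lem:ell_rematch_bound} --- itself a nontrivial argument via amnesiac doctors, a cap on proposals received per hospital, and negative correlation --- the proof does not go through.
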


To prove the lemma, we first need to bound the probability that many hospitals in $T$ get re-matched during the balanced phase.

\begin{lemma}
    The probability that more than $\ell/16$ hospitals in $T$ have been re-matched during the balanced phase is at most $32/\ln^2 n.$ \label{lem:ell_rematch_bound}
\end{lemma}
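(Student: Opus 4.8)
The plan is to bound $\E[R]$, where $R$ denotes the number of hospitals of $T$ that are re-matched during the balanced phase, and then invoke Markov's inequality. Since $\Pr[R > \ell/16] \le 16\,\E[R]/\ell$, it suffices to establish $\E[R] \le 2\ell/\ln^2 n = 2\sqrt n/\ln n$, which yields the claimed $32/\ln^2 n$ with enormous room to spare.

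To estimate $\E[R]$ I would reveal each hospital's preference order lazily (principle of deferred decisions): the rank assigned by $h$ to the $i$-th distinct doctor that proposes to it is uniform over the ranks not yet exposed for $h$, independently of the history up to that proposal. Fix $h$ and condition on $f_h = r$; then $h \in T$ means $r \in [k+\ell/2+1,\, k+\ell]$, and because $k \ge \ell$ we have $r > \ell$, so the first proposer lies outside $h$'s top $\ell$ doctors. Hospital $h$ is re-matched in the balanced phase only if one of its \emph{later} proposers has rank at most $r-k-1 \le \ell-1$. Each later proposer realizes such a rank with probability at most $\ell/(n+1-i)$, so a union bound over the (balanced-phase) proposers of $h$ gives, on the event that $h$ is proposed to by at most $n/2$ doctors, $\Pr[h \text{ re-matched} \mid f_h=r] \le 2\ell\,(m_h-1)/n$, where $m_h$ is the number of proposals $h$ receives during the balanced phase. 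Summing over hospitals,
$$\E[R] \;\le\; \frac{2\ell}{n}\,\E\!\Big[\textstyle\sum_{h:\,f_h\in T}(m_h-1)\Big] \;+\; (\text{tail contribution from } m_h > n/2).$$
The tail contribution is controlled by the high-probability events of Lemma~\ref{lem:balanced_phase_bound} (at most $2n\ln n$ proposals overall) together with a crude bound ruling out any single hospital being proposed to by more than half the doctors in the balanced phase, making it $o(\sqrt n/\ln n)$.

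It then remains to bound the total number of later proposals landing on $T$-hospitals, $\E[\sum_{h:\,f_h\in T}(m_h-1)]$. The total number of balanced-phase proposals is $O(n\log n)$ in expectation (Lemma~\ref{lem:balanced_phase_bound}), and $T$ is determined by the first-proposer ranks $\{f_h\}$, which are exchangeable across the symmetric hospital labels; hence a $T$-hospital receives, up to coupling effects, only its fair share $\approx |T|/n = \tilde O(1/\sqrt n)$ of the proposals. This yields $\E[\sum_{h\in T}(m_h-1)] = O(\tfrac{\ell}{n}\cdot n\log n) = O(\sqrt n\log^2 n)$ and therefore $\E[R] = O(\log^3 n)$, which is far below the required $2\sqrt n/\ln n$.

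The main obstacle is precisely this last decoupling: a hospital's loyalty-driven acceptance behavior depends on $f_h$ and feeds back into the global proposal flow, so $m_h$ and the event $h\in T$ are not literally independent. I would handle this by comparing the true dynamics with the fully loyal (stubborn) variant in which every hospital keeps its first match forever; there the rankings never influence the flow, so $\E[m_h \mid f_h\in T]=O(\log n)$ holds by symmetry, and the two processes agree on every hospital that is never re-matched — which, by the estimate above, is all but a vanishing fraction. Because the gap between $O(\log^3 n)$ and $2\sqrt n/\ln n$ is so large, even a loose control of this coupling closes the argument.
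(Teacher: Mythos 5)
Your overall architecture is the same as the paper's: bound the expected number of re-matched hospitals of $T$ by roughly $2\sqrt{n}/\ln n$ and finish with Markov's inequality (the paper's proof does exactly this, with the same constants). The gap lies in the one step you yourself identify as the obstacle, and your proposed fix does not work. To bound $\E\left[\sum_{h:\,f_h\in T}(m_h-1)\right]$ you couple the true dynamics with the fully loyal (``stubborn'') process and assert that the two processes agree on every hospital that is never re-matched, ``which, by the estimate above, is all but a vanishing fraction.'' This is circular --- the estimate above is what you are in the middle of proving --- and, more importantly, it conflates re-matches \emph{in $T$} with re-matches \emph{anywhere in the market}. The coupling breaks as soon as any hospital is re-matched: at that moment a doctor re-enters the pool in the true process but not in the stubborn one, the subsequent proposal sequences diverge, and hospitals that are never re-matched may then receive entirely different proposal sets in the two processes, so no hospital-by-hospital agreement survives. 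For $k$ at the lower end of the admissible range, say $k=\sqrt{n}\ln n$, every hospital with $f_h>k+1$ --- all but an $O(\ln n/\sqrt{n})$ fraction --- is open to re-matching, and a hospital with large $f_h$ accepts essentially any later proposer; a constant fraction of hospitals are in fact re-matched during the balanced phase. The divergence is therefore early and massive, not a small perturbation that the $O(\log^3 n)$ vs.\ $2\sqrt{n}/\ln n$ slack can absorb, and the symmetry statement $\E[m_h\mid f_h\in T]=O(\log n)$, valid in the stubborn process, cannot be transported back.

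For comparison, the paper decouples $m_h$ from the event $h\in T$ without any process coupling: it passes to amnesiac doctors (which only increases the re-match probability), uses the $2n\ln n$ high-probability bound of Lemma~\ref{lem:balanced_phase_bound} plus a counting argument to conclude that a given hospital receives more than $\sqrt{n}/\ln^3 n$ proposals with probability at most $2\ln^4 n/\sqrt{n}$ --- crucially, this bound is unconditional (by exchangeability of hospitals), not conditioned on $h\in T$ --- then a hypergeometric computation showing that $\le \sqrt{n}/\ln^3 n$ proposals hit $h$'s top $\ell$ doctors with probability $O(1/\ln^2 n)$, and finally a negative-correlation argument between the events ``$h\in T$'' and ``$h$ received a proposal from its top doctors'' so that the two probabilities can be multiplied. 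If you want to keep your route, you need a substitute of this kind (an unconditional bound on $m_h$ combined with negative association) in place of the stubborn-process comparison; as written, the decoupling step is a genuine gap.
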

\begin{proof}
     In order for a hospital in $T$ to be re-matched, one of the proposals directed towards the hospital has to be sent from one of $h$'s $\ell$ favorite doctors. We analyze this probability for event in the version where proposing doctors are amnesiac, where reason about the number of proposal $h$ gets,  assuming that every proposal is from a distinct doctor. This is an upper bound on the actual number of proposals, thus making it easier to re-match $h$. The probability of this event is obviously higher than when doctors are not amnesiac.

    Let $\varphi_1$ be the event that no more than $2n\ln n$ proposals have been sent by amnesiac doctors during the balanced phase. By Lemma~\ref{lem:balanced_phase_bound}, the probability of this event is at least $1-1/n$. Under event $\varphi_1$  the number of hospitals who received more than $\sqrt{n}/\ln^3 n$ proposals is at most $2\sqrt{n}\ln^4 n$. 
    Therefore, under $\varphi_1$, the probability a hospital gets more than $\sqrt{n}/\ln^3 n$ proposals is at most $2\sqrt{n}\ln^4 n/n=2\ln^4 n/\sqrt{n}$. The probability that none of $t\le \sqrt{n}/\ln^3 n$ proposals will be directed to $h$ is 
    \begin{eqnarray}
        \frac{\binom{n-\ell}{t}}{\binom{n}{t}} & \ge & \frac{\binom{n-\sqrt{n}\ln n}{\sqrt{n}/\ln^3 n}}{\binom{n}{\sqrt{n}/\ln^3 n}} \nonumber \\ 
        &\ge & \frac{(n-\sqrt{n}\ln n-\sqrt{n}/\ln^3 n)^{\sqrt{n}/\ln^3 n}}{n^{\sqrt{n}/\ln^3 n}}\nonumber \\ 
        & \ge & (1-2\ln n/\sqrt{n})^{\sqrt{n}/\ln^3 n} \nonumber\\
        &\ge &1 -2/\ln^2 n. \label{eq:xh_few_bound}
    \end{eqnarray} 

    Let $\varphi_{2,h}$ be the event that $h$ got at most $\sqrt{n}/\ln^3 n$ proposals. We have that
    \begin{eqnarray}
        \Pr[\varphi_{2,h}] & \ge & \Pr[\varphi_{2,h} | \varphi_1]\cdot \Pr[\varphi_1] \nonumber\\
        & = & (1-2\ln^4 n/\sqrt{n})\cdot (1-1/n)\nonumber\\
        &\ge & 1-3\ln^4 n/\sqrt{n}. \label{eq:varphi2_bound}
    \end{eqnarray}
    Let $X_{h,k}$ be an indicator variable for the event that $h$ got a proposal from one of their top $k$ doctors during the balanced phase. We have that
    \begin{eqnarray*}
        \Pr[X_{h,k}] & = & \Pr[X_{h,k} | \varphi_{2,h}]\cdot \Pr[\varphi_{2,h}] + \Pr[X_{h,k} | \neg \varphi_{2,h}]\cdot \Pr[\neg\varphi_{2,h}] \\
        &\le & \Pr[X_{h,k} | \varphi_{2,h}] + \Pr[\neg\varphi_{2,h}]\\
        & \le & 2/\ln^2 n + 3\ln^4 n/\sqrt{n}\\
        & \le & 4/\ln^2 n,
    \end{eqnarray*}
    where the second inequality follows Eq.~\eqref{eq:xh_few_bound} and \eqref{eq:varphi2_bound}.

    Let $X_{h,T}$ be the event that $h\in T$. As $f_h$ is distributed uniformly at random, the probability that a hospital $h$ is in $T$ is $(\ell/2)/n=\ell/2n$.

    Let $X_h$ 
    be the event that $h\in T$ and $h$ got re-matched (and therefore, got a proposal from one of their top $k$ doctors) during the balanced phase. Notice that $X_{h,T}$ and $X_{h,k}$ are negatively affiliated, as when a hospital is in $T$, there is at least one proposal not from the top $k$ doctors. 
    We get that 
    \begin{eqnarray*}
        \E[X_h] & \le & \E[X_{h,T}\wedge X_{h,k}]\le \E[X_{h,T}]\cdot \E[X_{h,k}] \\
        &\le& \frac{4}{\ln^2 n} \cdot \frac{\ell}{2n} = \frac{2}{\sqrt{n}\ln n}. 
    \end{eqnarray*}

    Let $X$ be the random variable of the number of doctors in $T$ that got re-matched during the balanced phase. We have that $$\E[X] = \sum_h[X_h] \le n\cdot \frac{2}{\sqrt{n}\ln n} = 2\sqrt{n}/\ln n.$$
    By Markov inequality, we get that 
    $$\Pr[X\ge \ell/16] \le \frac{\E[X]}{\ell/16}\le \frac{2\sqrt{n}/\ln n}{\sqrt{n}\ln n/8} = 32/\ln^2 n.$$
\end{proof}

We are now ready to prove Lemma~\ref{lem:eps1_high}.
\begin{proof}[Proof of Lemma~\ref{lem:eps1_high}]
    Recall that $f_h$ is distributed uniformly at random in $[n]$. Thus, $E[|T|]=\ell/2$, and by applying Chernoff bound, we easily obtain that $\Pr[|T|\in [\ell/4,\ell]]\gg 1-1/8\ln n$.

    We now bound the probability that there exists a doctor that proposes to more than $\ell/8$ proposals in $T$ during the balanced phase. 
    
    We first bound the probability that some doctor $d$ proposed to more than $\ell/32$ matched hospitals in $T$ that are still available for a re-match. For a matched hospital in $T$ to accept a proposal, the proposal has to come from a doctor they rank among their top $\ell=\sqrt{n}\log n\ge k$ doctors. That is, after a hospital in $T$ accepts a proposal, they will be matched to this doctor until the algorithm terminates. Hence, for a doctor to propose to more than $\ell/32$ hospitals in $T$ that are available for a re-match, the doctor has to be sequentially rejected by $\ell/32$ such hospitals. Notice that an available hospital in $T$ accepts a proposal with probability of at least $\frac{\ell/2}{n}$. Thereby, a doctor is rejected by $\ell/32$ such hospitals with probability at most
    \begin{eqnarray*}
        (1-\ell/2n)^{\ell/32} & = & (1-\ln n/2\sqrt{n})^{\sqrt{n}\ln n/32}\\
        & = & \left((1-\ln n/2\sqrt{n})^{2\sqrt{n}/\ln n}\right)^{\ln^2 n/64}\\
        & < & e^{\ln^2 n/64}\\
        & = & n^{-\ln n/64}.
    \end{eqnarray*}
    Taking a union bound, the probability there exists such a doctor is at most $$n^{-\ln n/64+1}\ll 1/8\ln n.$$

    We next bound the probability that a doctor was the first matched doctor of more than $\ell/32$ hospitals in $T$.  This probability is lower than the probability that a doctor was the first match of more than $\ell/32$ hospitals {in $\hospitals$}. For a doctor to be the first match of more than $\ell/32$ hospitals, it has to be unmatched from the first $\ell/32$ {hospitals} the doctor was the first match of. Therefore, the doctor cannot be one of $k$ favorite doctors of all such hospitals. 
    Since when a doctor proposes to a hospital, their rank for the hospital is distributed uniformly at random, the probability for this is at most    
    \begin{eqnarray*}
        (1-k/n)^{\ell/32} & \le & (1-\ell/n)^{\sqrt{n}\ln n/32}\\
        & = & \left((1-\ln n/\sqrt{n})^{\sqrt{n}/\ln n}\right)^{\ln^2 n/32}\\
        & < & e^{\ln^2 n/32}\\
        & = & n^{-\ln n/32}.
    \end{eqnarray*}
    Taking a union bound, the probability there exists such a doctor is at most $$n^{-\ln n/32+1}\ll 1/8\ln n.$$

    We finally bound the probability that a doctor proposes to more than $\ell/16$ doctors in $T$ that are already re-matched. This is bounded by the probability there are $\ell/16$ re-matched doctors in $T$ during the balanced phased, which is at most $32/\ln^2 n$ according to Lemma~\ref{lem:ell_rematch_bound}. Thus, by taking a union bound, the probability there exists a doctor at the end of the balanced phase who proposed to more than $\ell/8$ hospitals in $T$ during the balanced phase is at most $$1/8\ln n+1/8\ln n+32/\ln^2\le 3/8\ln n.$$ Taking the union bound again on the probability of both conditions, we get that the probability of $\neg \varepsilon_1$ is at most $3/8\ln n+1/8\ln n=1/2\ln n$. 
\end{proof}

Let $\varepsilon_2$ be the event that when the algorithm terminates, there are at most $\sqrt{n}$ heavy doctors and at most $\sqrt{n}$ heavy hospitals. If the probability of $\varepsilon_2$ is low then we already obtain our bound.

\begin{lemma} \label{lem:lower_bound_unbalanced_not_epsilon_2}
    If $\Pr[\varepsilon_2] < 1/2$, then $\E[P^A]=\Omega(n^{3/2})$ 
\end{lemma}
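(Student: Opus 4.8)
The plan is to exploit the elementary double-counting identity for the total number of proposals, counting it both from the doctors' side and from the hospitals' side. Since every proposal is made by exactly one doctor and directed to exactly one hospital, at termination we have
$$P^A = \sum_{d\in\doctors} m_d = \sum_{h\in\hospitals} m_h,$$
where $m_d$ is the number of hospitals $d$ has proposed to and $m_h$ is the number of proposals $h$ has received.

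Next I would unpack $\neg\varepsilon_2$ via De Morgan: it is precisely the event that either there are more than $\sqrt{n}$ heavy doctors (each with $m_d\ge n/2$) or more than $\sqrt{n}$ heavy hospitals (each with $m_h\ge n/2$). In the first case, restricting the first sum above to heavy doctors already gives $P^A \ge \sqrt{n}\cdot (n/2) = n^{3/2}/2$; in the second case, restricting the second sum to heavy hospitals gives the same bound. Hence $P^A \ge n^{3/2}/2$ holds deterministically whenever $\neg\varepsilon_2$ occurs, and crucially it suffices that just one of the two "many heavy agents" conditions holds.

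Finally I would condition on $\neg\varepsilon_2$ and invoke the hypothesis $\Pr[\varepsilon_2]<1/2$, i.e.\ $\Pr[\neg\varepsilon_2]>1/2$:
$$\E[P^A] \ge \E[P^A \mid \neg\varepsilon_2]\cdot\Pr[\neg\varepsilon_2] \ge \frac{n^{3/2}}{2}\cdot\Pr[\neg\varepsilon_2] > \frac{n^{3/2}}{2}\cdot\frac{1}{2} = \frac{n^{3/2}}{4},$$
which gives $\E[P^A]=\Omega(n^{3/2})$ as claimed.

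There is essentially no technical obstacle here; the only points requiring care are the double-counting identity and the correct De Morgan reading of $\neg\varepsilon_2$. Conceptually, this lemma isolates the easy regime in which an abundance of heavy agents by itself already forces $\Omega(n^{3/2})$ proposals, so that the subsequent and genuinely harder part of the lower-bound argument may safely condition on the complementary event $\varepsilon_2$ (few heavy doctors and few heavy hospitals).
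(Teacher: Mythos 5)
Your proof is correct and takes essentially the same approach as the paper's: both observe that more than $\sqrt{n}$ heavy doctors (or heavy hospitals), each accounting for at least $n/2$ proposals, deterministically forces $P^A \ge n^{3/2}/2$, and then conclude via $\E[P^A]\ge \E[P^A\mid \neg\varepsilon_2]\cdot\Pr[\neg\varepsilon_2]$ with $\Pr[\neg\varepsilon_2]>1/2$. Your write-up merely makes explicit the double-counting identity and the De Morgan reading of $\neg\varepsilon_2$ that the paper leaves implicit.
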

\begin{proof}
    Notice that if we have either $\sqrt{n}$ heavy doctors or $\sqrt{n}$ heavy hospitals, then we already have $\Omega(n^{3/2})$ proposals. Hence,  
    $\E[P^A]\ge \E[P^A| \neg \varepsilon_2]\cdot \Pr[\neg \varepsilon_2]= \Omega(n^{3/2}).$
\end{proof}

Consider a hospital $h\in T$. We know that for this hospital, $f_h \in [k+\ell/2+1,k+\ell].$ Thus, if this hospital accepts another proposal after their first proposal from some doctor $d$, they will be matched to a doctor in their top $\ell$ doctors. We call such hospital a \textit{re-matched} hospital. Since $k\ge \sqrt{n}\log n,$ we know that after this point, $h$ and $d$ will be matched until the algorithm terminates. Under events $\varepsilon_1$ and $\varepsilon_2$, we will need to re-match many light hospitals to many light doctors, which will result in many proposals. For $i\in[\ell]$, let $L_i$ be the following random variable:
    \begin{itemize}
        \item Let $S_1$ be the set of light doctors at the start of the algorithm (which is the set of all doctors). $L_1$ is the expected number of proposals it takes to re-match a light hospital to a doctor in $S_1$ (a match which will not be un-tied). 
        \item For $i\in [2,n]$, let $S_i$ be the set of light doctors after the $i-1$th light hospital to light doctor re-matching. Clearly $S_i\subseteq S_{i-1}$. $L_i$ is the number of proposals that is needed from the point a doctor in $S_{i-1}$ is re-matched to a light hospital to the point a doctor in $S_i$ is re-matched a light hospital. 
    \end{itemize}

 Let $\varepsilon_3$ be the event that the algorithm does not terminate before $\ell/8+2\sqrt{n}$ hospitals in $T$ are re-matched.

\begin{lemma} \label{lem:eps3_high}
    $\Pr[\neg \varepsilon_{3} | \varepsilon_1 ]\le 1/2\ln n.$
\end{lemma}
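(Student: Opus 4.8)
The plan is to show that early termination would force the terminating doctor to be rejected by an atypically large set of easy-to-match hospitals in $T$, an event that is super-polynomially unlikely, and then union bound over the possible terminating doctors.

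First I would isolate the key structural fact. Since $k \ge \ell$, once a hospital $h\in T$ is re-matched its new partner has rank at most $f_h-k-1\le \ell-1<k+1$, so no later proposal can displace it: re-matched hospitals of $T$ are frozen. Consequently, the terminating doctor $d$ (the one eventually rejected by all $n$ hospitals, hence having proposed to every hospital) is \emph{never} accepted by any hospital of $T$ — otherwise it would stay matched there forever and could not end up unmatched. Thus for every $h\in T$ the proposal of $d$ to $h$ is rejected, either because $h$ was already frozen at that moment, or because $h$ was still at its first match and $\rank_h(d)>f_h-k-1$.

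Next I would do the counting. Write $R$ for the set of hospitals of $T$ that are re-matched before termination; under $\neg\varepsilon_3$ we have $|R|<\ell/8+2\sqrt{n}$, and under $\varepsilon_1$ we have $|T|\ge \ell/4$. Every $h\in T\setminus R$ stays at its first match throughout and is rejected-by-rank by $d$, and $|T\setminus R|\ge |T|-|R|>\ell/4-\ell/8-2\sqrt{n}=\ell/8-2\sqrt{n}\ge \ell/16$ for large $n$. So under $\varepsilon_1\wedge\neg\varepsilon_3$ the terminating doctor proposes to at least $\ell/16$ hospitals of $T$ that are still at their first match at proposal time, and is rejected by all of them. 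I would then bound this event's probability for a fixed candidate $d$ via deferred decisions: while $h\in T$ is at its first match it accepts any of its top $f_h-k-1\ge \ell/2$ doctors, and none of those top positions have been revealed (every earlier proposer is its first match, of rank $f_h$, or a rejected doctor of rank exceeding $f_h-k-1$); hence, conditioned on the whole history, each proposal of $d$ to a still-available hospital of $T$ is accepted with probability at least $\ell/(2n)$. Iterating conditional expectations, the probability that the first $\ell/16$ such proposals are all rejected is at most $(1-\ell/(2n))^{\ell/16}\le e^{-\ell^2/(32n)}=n^{-\ln n/32}$. A union bound over the $n+1$ possible terminating doctors gives $\Pr[\varepsilon_1\wedge\neg\varepsilon_3]\le (n+1)\,n^{-\ln n/32}\ll 1/(2\ln n)$, and since $\Pr[\varepsilon_1]\ge 1-1/(2\ln n)\ge 1/2$ by Lemma~\ref{lem:eps1_high}, I conclude $\Pr[\neg\varepsilon_3\mid\varepsilon_1]\le 2\,\Pr[\varepsilon_1\wedge\neg\varepsilon_3]\ll 1/(2\ln n)$.

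The main obstacle, and the reason the naive approach fails, is the dependence between the never-re-matched set $T\setminus R$ and the randomness of $d$'s ranks: $T\setminus R$ is precisely the set of $T$-hospitals no acceptable doctor ever reached, so it is heavily biased toward hospitals that reject $d$, and ``rejected by all of $T\setminus R$'' is essentially forced once $d$ terminates rather than being a genuinely rare product event. A union bound over all size-$\ell/16$ subsets of $T$ is equally hopeless, since $\binom{|T|}{\ell/16}\approx 2^{\Theta(\sqrt{n}\log n)}$ dwarfs the $n^{-\Theta(\log n)}$ per-set probability. The deferred-decision exposure order is what breaks the circularity: by revealing each relevant rank only at the instant $d$ proposes to a currently-available $T$-hospital, the acceptance indicators become conditionally independent trials of success probability at least $\ell/(2n)$, so the $\ge \ell/16$ forced rejections genuinely multiply and the bound goes through.
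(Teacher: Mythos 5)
Your proof is correct and rests on the same core mechanism as the paper's: under $\varepsilon_1\wedge\neg\varepsilon_3$ the terminating doctor must have been rejected, by rank, by $\Omega(\ell)$ hospitals of $T$ that were still at their first match; each such rejection happens with probability at most roughly $1-\ell/(2n)$; and a union bound over the $n+1$ candidate terminating doctors gives a super-polynomially small probability. The differences are in the accounting and in the level of rigor. The paper counts only rejections occurring in the unbalanced phase, and therefore must also subtract the hospitals of $T$ that the doctor already proposed to during the balanced phase (invoking the second condition of $\varepsilon_1$), arriving at $\ell/17$ forced rejections. You instead count every hospital of $T\setminus R$, in either phase, observing that a proposal of the terminating doctor to a never-re-matched hospital of $T$ cannot have been an acceptance (acceptance by such a hospital is necessarily a first-proposal acceptance, which would leave $d$ matched forever), hence must be a rank rejection. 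This is slightly cleaner: it needs only $|T|\ge\ell/4$ from $\varepsilon_1$, and it sidesteps an arithmetic slip in the paper's own write-up (the paper's proof asserts the doctor made at most $\ell/16$ balanced-phase proposals to $T$, although $\varepsilon_1$ as defined only guarantees $\ell/8$, with which the paper's subtraction $\ell/4-(\ell/8+2\sqrt{n})-\ell/8$ would be negative). You also make explicit two points the paper leaves implicit: the deferred-decision argument showing why the per-proposal rejection bounds multiply even though $T\setminus R$ is execution-dependent, and the passage from a bound on $\Pr[\varepsilon_1\wedge\neg\varepsilon_3]$ to a bound on $\Pr[\neg\varepsilon_3\mid\varepsilon_1]$ via $\Pr[\varepsilon_1]\ge 1/2$ from Lemma~\ref{lem:eps1_high}.

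One inaccuracy to repair: your blanket claim that the terminating doctor ``is never accepted by any hospital of $T$'' is false. Doctor $d$ can be accepted by $h\in T$ as $h$'s \emph{first} match (when $h$ is still unmatched, so the proposal is accepted regardless of rank) and later be displaced by a re-match; such an $h$ then lies in $R$. The statement you actually use in the counting is the restriction to $h\in T\setminus R$, and that statement is true for exactly the reason you give: a never-re-matched hospital only ever accepts its first proposal, so acceptance of $d$ would leave $d$ matched at termination, a contradiction. Phrase the structural fact for $T\setminus R$ rather than for all of $T$ and the argument is airtight.
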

\begin{proof}
    In an unbalanced market, the algorithm ends once a doctor is rejected by all hospitals. Since we are considering the case where $k\ge \ell$, once a doctor is accepted by a hospital in $T$ which is currently matched, they would be matched to this hospital until the algorithm terminates. Recall that the balanced phase ends once $n$ hospitals have been matched. Therefore, the algorithm ends once there is a doctor in the unbalanced phase that is rejected by all hospitals. Consider some doctor that needs to be rejected before $\ell/8+2\sqrt{n}$ hospitals in $T$ are re-matched. Under $\varepsilon_1$, this doctor have proposed to at most $\ell/16$ hospitals in $T$ during the balanced phase. And since there are at least $\ell/4$ hospitals in $T$ under $\varepsilon_1$, for a doctor to be rejected by all hospitals in the balanced phase, they would need to be rejected by at least $\ell/4-(\ell/8+2\sqrt{n})-\ell/16=\ell/16-2\sqrt{n}\ge \ell/17$ hospitals in $T$. The probability for a doctor to be rejected by a hospital in $T$ is at most $1-(\ell/2)/n$. Therefore, the probability the doctor will be rejected by $\ell/17$ hospitals in $T$ is at most 
    \begin{eqnarray*}
        (1-\ell/2n)^{\ell/17} & = & (1-\ln n/2\sqrt{n})^{\sqrt{n}\ln n/17}\\
        & = & \left((1-\ln n/2\sqrt{n})^{2\sqrt{n}/\ln n}\right)^{\ln^2 n/34}\\
        & < & e^{\ln^2 n/34}\\
        & = & n^{-\ln n/34}.
    \end{eqnarray*}
    By the union bound over all doctors, we get that the probability some doctor was rejected before $i\in \sqrt{n}\log n/17$ doctors are matched to hospitals in $T$ is at most $n^{-\ln n/34}\cdot n=n^{-\ln n/34+1}\ll 1/2\ln n.$
\end{proof}

Let $\varepsilon_{2,i}$ be the event that after the $i-1$th light doctor to light hospital re-matched, there are at most $\sqrt{n}$ heavy doctors and at most $\sqrt{n}$ heavy hospitals. We now analyze $\E[L_i|\varepsilon_1,\varepsilon_{2,i},\varepsilon_{3,i}]$. The intuition is the following.  We need to match a light doctor to a light hospital in $T$. Since a light doctor have not proposed to at least $n/2$ hospitals and since there are at most $\sqrt{n}\ln n$ light hospitals in $T$ under $\varepsilon_1$, light doctors need  $\Omega((n/2)/\sqrt{n}\log n)=\Omega(\sqrt{n}/\log n)$ proposals to hit a light hospital. Once hitting a light hospital $h$, in order to match to $h\in T$, they will need to rank in $h$'s top $\ell$ hospitals to have a chance to match $h$. 
Since $h$ is a light hospital, $h$ received at most $n/2$ proposals so far, and the expected number of proposals light hospitals need to receive in order to accept a proposal is $\Omega((n/2)/\ell)=\Omega(\sqrt{n}/\log n)$. Hence, light doctors will need to propose $\Omega(n/\log^2 n)$ proposals in order to match to a light hospital.    

\begin{lemma} \label{lem:li_bound}
    For every $i\in [\ell/8],$
    $$\E[L_i|\varepsilon_1,\varepsilon_{2,i},\varepsilon_{3}]\ge \frac{(n/2+1)^2}{(\sqrt{n}\log n+1)^2}= \Omega(n/\log^2 n).$$
\end{lemma}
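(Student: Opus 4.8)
The plan is to bound, at every proposal inside the window counted by $L_i$, the conditional probability that this single proposal is the one completing the $i$th light-doctor-to-light-hospital re-match, and then read off the lower bound on $\E[L_i]$ from a first-red-ball (geometric) comparison. The key structural observation is that completing such a re-match requires two rare events to line up in one proposal: the proposal must land on a light, still-available hospital in $T$, \emph{and} it must then be accepted. Each of these has probability only $O(\ell/n)$, so their product is $O(\ell^2/n^2)$, which is exactly what produces the square in the denominator of the claimed bound.

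Concretely, I would fix any history of the algorithm up to the current proposal (still inside the window, the $i$th re-match not yet completed) and consider the next proposal, made by some doctor $d$ to the next hospital $h$ on $d$'s list. For this proposal to complete the re-match, $d$ must be a light doctor in $S_i$; conditioned on that, $d$ has proposed to $m_d < n/2$ hospitals, so $h$ is uniform over the $n - m_d > n/2$ hospitals $d$ has not yet proposed to. Under $\varepsilon_1$ there are at most $|T| \le \ell = \sqrt{n}\ln n$ hospitals in $T$, hence at most $\ell$ light available targets, so $\Pr[h \text{ is a light available hospital in } T] \le \ell/(n/2)$. This is the first factor. Second, conditioning additionally on $h$ being such a target, note $f_h \le k+\ell$, so by the loyalty rule $\accept$ returns $true$ only if $\rank_h(d) < f_h - k \le \ell$, i.e.\ $d$ must land in $h$'s top (at most) $\ell$ doctors. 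By deferred decisions $\rank_h(d)$ is uniform over the $n-m_h$ ranks not yet revealed by $h$'s earlier proposers, and since $h$ is light, $m_h < n/2$; thus at most $\ell$ of the $>n/2$ available ranks are accepting, giving acceptance probability at most $\ell/(n/2)$.

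Multiplying the two conditional factors, the per-proposal completion probability is at most $(2\ell/n)^2$ regardless of the history, so $L_i$ stochastically dominates a geometric random variable with this parameter and $\E[L_i \mid \varepsilon_1,\varepsilon_{2,i},\varepsilon_3] \ge (n/2)^2/\ell^2 = \Omega(n/\log^2 n)$. To recover the precise constant $\frac{(n/2+1)^2}{(\sqrt{n}\log n+1)^2}$, I would instead run the sharper first-red-ball computation (expected position $\frac{m+1}{r+1}$ with $m=n/2$ and $r=\ell$, exactly as in the proof of Lemma~\ref{lem:balanced_lb_light}) for each of the two stages — the wait to hit a light target and the number of such hits before acceptance — and combine them, the independence of the hospital-selection randomness from the rank-revelation randomness justifying the product.

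The conditioning events exist to make the per-proposal bound valid throughout the window: $\varepsilon_1$ caps $|T|$ by $\ell$, $\varepsilon_{2,i}$ guarantees that many light doctors and light available targets in $T$ remain (so the window is genuinely alive and the assumptions $m_d<n/2$, $m_h<n/2$ apply to the relevant agents), and $\varepsilon_3$ ensures the algorithm does not terminate before the $i$th re-match, so for $i \le \ell/8$ the variable $L_i$ counts real proposals. I expect the main obstacle to be the conditioning on $\varepsilon_3$, which is a forward-looking survival event and so does not a priori commute with the step-by-step geometric argument. The saving grace is that $\varepsilon_3$ (not terminating early) is negatively correlated with fast re-matching — re-matches in $T$ freeze hospitals and only hasten the rejection chain that ends the algorithm — so conditioning on it can only lengthen the wait; I would make this rigorous either via a coupling that removes $\varepsilon_3$ and argues monotonicity, or by observing that the per-step completion bound $(2\ell/n)^2$ holds pointwise on configurations and therefore survives conditioning on any event measurable with respect to the continuation of the process.
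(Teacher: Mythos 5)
Your fallback argument is, in fact, the paper's proof. The paper models stage (i) — a light doctor hitting a light available hospital of $T$ — as drawing from a ``red hat'' with $m=n/2$ balls of which $r=\ell$ are red, and stage (ii) — such a hit being accepted — as an independent ``blue hat'' with $b=\ell$ blue balls, then lower-bounds $\E[L_i\mid\varepsilon_1,\varepsilon_{2,i},\varepsilon_3]$ by $\E[X_r]\cdot\E[X_b]=\frac{m+1}{r+1}\cdot\frac{m+1}{b+1}$. Your two conditional factors (landing on one of at most $\ell$ light available targets in $T$; landing among $h$'s at most $\ell$ accepting ranks out of more than $n/2$ unrevealed ones) are exactly the paper's two hazards, so in its refined form your proposal coincides with the paper's argument.

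The headline version, however, contains a genuine error: the claim that the per-proposal completion probability is at most $(2\ell/n)^2$ ``regardless of the history,'' so that $L_i$ stochastically dominates a geometric variable with that parameter. The guarantees $m_d<n/2$ and $m_h<n/2$ hold at the \emph{start} of the window (or, for the hospital, at the moment of the completing match), not throughout it: after $j$ proposals inside the window, the proposing doctor's residual list may have shrunk by $j$, so the hitting probability is only bounded by $\ell/(n/2-j)$, which is not $O(\ell/n)$ once $j$ is comparable to $n$ — in the extreme, a doctor whose remaining list consists solely of light available $T$-hospitals hits one with probability $1$, making the one-step completion probability as large as $\Theta(\ell/n)\gg(2\ell/n)^2$. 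Since we are proving a \emph{lower} bound on $L_i$, i.e.\ the window may be long, this cannot be waved away; the constant-parameter geometric domination is simply false. It is repairable either by truncation ($\E[L_i]\ge J\cdot\Pr[L_i>J]$ with $J=\Theta(n/\log^2 n)$, a horizon short enough that all hazards are still $O(\log^2 n/n)$), or by comparing to the without-replacement urn whose hazard $r/(m-j)$ grows in step with the process — which is precisely what the paper (and your fallback) does. Separately, your concern about conditioning on the forward-looking event $\varepsilon_3$ is legitimate, and the paper glosses over it; but your second proposed remedy is unsound as stated — a pointwise per-step hazard bound does \emph{not} survive conditioning on a future-measurable event (conditioning on ``completion occurs at the next step'' is the extreme counterexample) — and the monotonicity direction you assert is also doubtful, since conditioning on re-matches beating termination plausibly biases toward \emph{faster} re-matching, i.e.\ shorter windows. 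The robust patch instead uses Lemma~\ref{lem:eps3_high}: since $\Pr[\neg\varepsilon_3\mid\varepsilon_1]\le 1/(2\ln n)$, conditioning on $\varepsilon_3$ shifts the truncated survival probability $\Pr[L_i>J]$ by only $o(1)$.
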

\begin{proof}
    We analyze the following situation: There are at most $\sqrt{n}\log n$ light hospitals, and we want to give a lower bound on the expected number of proposals needed for the $i$th light doctor to light hospital match. 
    Notice that under $\varepsilon_{2,i}$ There are at most $\sqrt{n}$ light doctors and at most $\sqrt{n}$ light hospitals. Thus, at most $\ell/8+2\sqrt{n}$ hospitals from $T$ have been matched so far, and under $\varepsilon_3$, the algorithm hasn't terminated yet.
    
    Consider the $j$th proposal since the $i$th light doctor to light hospital match. the probability that a light doctor will propose to a light hospital is at most $$\frac{\sqrt{n}\log n}{m_d-j}\le\frac{\sqrt{n}\log n}{n/2-j}.$$ 
    Consider the $\hat{j}$th proposal of phase $L_i$ that was directed to a light hospital $h$. If this proposal was made by a heavy doctor, then the probability the phase will end is $0$. If this proposal was made by a light doctor $d$, then $h$ might receive this proposal only if  $h$ ranks $d$ as one of their top $\hat{j}$ doctors. Since $h$ received at most $\hat{j}-1$ proposals since the start of phase $L_i$, and since $h$ is a light hospital, the probability $h$ accepts the proposal is at most    
    $$\frac{\ell}{m_h-\hat{j}}\le\frac{\sqrt{n}\log n}{n/2-\hat{j}}.$$

    To bound the expected number of proposals needed in order for the $i$th light-doctor to light hospital match, consider the following experiment: There are two hats, red and blue, each with $m=n/2$ balls. In the red hat there are $r=\sqrt{n}\log n$ red balls and in the blue hat there are $b=\sqrt{n}\log n$ blue balls. We now pull out a ball at random from the red hat until we pick a red ball. Once we pick a red ball, we get to choose a single ball at random from the blue hat. The experiment ends once we pull out a blue ball from the blue hat. The probability of pulling out a red ball on the $j$th try is exactly $$\frac{r}{m-j}=\frac{\sqrt{n}\log n}{n/2-j}.$$
    The probability of pulling out a blue ball on the $\hat{j}$ try is exactly $$\frac{b}{m-\hat{j}}=\frac{\sqrt{n}\log n}{n/2-\hat{j}}.$$
    Therefore, the expected number of balls needed to be pulled out from the red hat is a lower bound for  $\E[L_i|\varepsilon_1,\varepsilon_{2,i},\varepsilon_{3}].$ Let $X_r$ and $X_b$ be random variables taking the value of the number of balls needed to be pulled out of the red and blue hats, respectively, to pull out a red and blue ball, respectively.  The expected number of balls needed to be pulled out of the red hat until we manage to pull out a blue ball out of the blue hat is 
    \begin{eqnarray*}
        \E[X_r\cdot X_b] & = &\E[X_r]\cdot \E[X_b] = \frac{m+1}{r+1}\cdot \frac{m+1}{b+1}\\ & =& \frac{(n/2+1)^2}{(\sqrt{n}\log n+1)^2}= \Omega(n/\log^2 n),
    \end{eqnarray*}
    where the first equality follows by the independence of $X_r$ and $X_b$.
\end{proof}

We now prove that when the probability of $\varepsilon_2$ is high, then we obtain our bound as well.

\begin{lemma} \label{lem:lower_bound_unbalanced_epsilon_2}
    If $\Pr[\varepsilon_2] \ge 1/2$, then $\E[P^A]=\Omega(n^{3/2}/\log n)$.
\end{lemma}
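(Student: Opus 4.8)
The plan is to lower-bound the total number of proposals by the work spent on the first $\ell/8$ light-doctor-to-light-hospital re-matchings inside $T$. Since the blocks of proposals counted by distinct $L_i$'s are disjoint (adopting the convention $L_i=0$ if the $i$th such re-matching never occurs), the inequality $P^A \ge \sum_{i=1}^{\ell/8} L_i$ holds deterministically, so by linearity of expectation $\E[P^A] \ge \sum_{i=1}^{\ell/8}\E[L_i]$. The whole argument then reduces to showing $\E[L_i]=\Omega(n/\log^2 n)$ for each $i\in[\ell/8]$ and multiplying by $\ell/8 = \sqrt{n}\ln n/8$.

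For a single term I would write $\E[L_i] \ge \E[L_i \mid \varepsilon_1 \wedge \varepsilon_{2,i}\wedge\varepsilon_3]\cdot\Pr[\varepsilon_1\wedge\varepsilon_{2,i}\wedge\varepsilon_3]$, which is legitimate because $L_i \ge 0$. The conditional expectation is $\Omega(n/\log^2 n)$ by Lemma~\ref{lem:li_bound}, so it remains to bound the probability factor below by a constant. Here I would combine three ingredients: $\Pr[\neg\varepsilon_1] \le 1/(2\ln n)$ from Lemma~\ref{lem:eps1_high}; the bound $\Pr[\neg\varepsilon_3]\le \Pr[\neg\varepsilon_3\mid\varepsilon_1] + \Pr[\neg\varepsilon_1] \le 1/\ln n$, obtained from Lemma~\ref{lem:eps3_high}; and the monotonicity observation that the numbers of heavy doctors and heavy hospitals are non-decreasing in time, so that $\varepsilon_2$ (few heavy agents at termination) implies $\varepsilon_{2,i}$ (few heavy agents at the earlier moment after the $(i-1)$th re-matching). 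Under the hypothesis $\Pr[\varepsilon_2]\ge 1/2$ this gives $\Pr[\neg\varepsilon_{2,i}]\le\Pr[\neg\varepsilon_2]\le 1/2$. A union bound then yields $\Pr[\varepsilon_1\wedge\varepsilon_{2,i}\wedge\varepsilon_3]\ge 1 - 1/2 - 1/\ln n - 1/(2\ln n) \ge 1/4$ for $n$ large enough.

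Putting these together, $\E[L_i]=\Omega(n/\log^2 n)$ for every $i\in[\ell/8]$, and summing over the $\ell/8$ indices gives $\E[P^A]\ge \tfrac{\ell}{8}\cdot\Omega(n/\log^2 n) = \Omega(n^{3/2}/\log n)$, as claimed (the stated doctor-rank bound then follows by dividing by the $n+1$ doctors).

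I expect the main obstacle to be the bookkeeping guaranteeing that, under $\varepsilon_2\wedge\varepsilon_3$, at least $\ell/8$ of the re-matchings in $T$ are \emph{genuinely} light-doctor-to-light-hospital re-matchings, so that $L_1,\dots,L_{\ell/8}$ actually count disjoint blocks of real proposals and Lemma~\ref{lem:li_bound} is non-vacuous for each index. This is exactly where the slack $2\sqrt{n}$ in the definition of $\varepsilon_3$ is spent: of the at least $\ell/8 + 2\sqrt{n}$ distinct re-matched hospitals in $T$ guaranteed by $\varepsilon_3$, at most $\sqrt{n}$ involve a hospital that is heavy at the time of re-matching and at most $\sqrt{n}$ involve a heavy doctor under $\varepsilon_2$ (each doctor re-matches at most one hospital in $T$, since $k\ge\ell$ freezes such a pair until termination), leaving at least $\ell/8$ light-to-light re-matchings. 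Getting this counting and the nested conditioning of $\varepsilon_3$ on $\varepsilon_1$ exactly right is the delicate part; the rest is the linearity-plus-union-bound skeleton above.
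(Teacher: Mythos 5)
Your proposal is correct and follows essentially the same route as the paper's proof: the same decomposition $\E[P^A]\ge\sum_{i=1}^{\ell/8}\E[L_i]$, the same conditioning on $\varepsilon_1\wedge\varepsilon_{2,i}\wedge\varepsilon_3$ with Lemma~\ref{lem:li_bound} supplying the conditional bound, the same monotonicity argument that $\varepsilon_2$ implies $\varepsilon_{2,i}$, and the same union-bound control of the probability factor (the paper gets the constant $1/2-1/\ln n$ where you get $1/4$; both suffice). Your closing remarks on the $L_i=0$ convention and on how the $2\sqrt{n}$ slack in $\varepsilon_3$ absorbs heavy-agent re-matchings make explicit bookkeeping that the paper leaves implicit (it is buried in the proof of Lemma~\ref{lem:li_bound}), but this is a refinement of the same argument, not a different one.
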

\begin{proof}
    Notice that by Lemma~\ref{lem:eps1_high} and Lemma~\ref{lem:eps3_high}, $$Pr[\varepsilon_1,\varepsilon_{3}]\ge (1-1/2\ln n)^2>1-1/\ln n.$$
    Notice that whenever event $\varepsilon_2$ takes place, event $\varepsilon_{2,i}$ takes place as well. Thus, by our assumption, $\Pr[\varepsilon_{2,i}] \ge 1/2$, and  
    $$Pr[\varepsilon_1, \varepsilon_{2,i}, \varepsilon_{3}]\ge 1/2-1/\ln n.$$

    We conclude:
    \begin{eqnarray*}
        \E[P^A]  & \ge &  \E[\sum_i L_i] \ge \sum_{i=1}^{\ell/8} \E[L_i]\\
        & \ge &\sum_{i=1}^{\ell/8} \E[L_i | \varepsilon_1, \varepsilon_{2,i}, \varepsilon_{3}]\cdot \Pr[\varepsilon_1, \varepsilon_{2,i}, \varepsilon_{3}]\\
        & \ge & \left(\frac{1}{2}-\frac{1}{\ln n}\right)\cdot \frac{\ell}{8}\cdot \frac{(n/2+1)^2}{(\sqrt{n}\log n+1)^2}\\
        & = & \left(\frac{1}{2}-\frac{1}{\ln n}\right)\cdot \frac{\sqrt{n}\log n}{8}\cdot \frac{(n/2+1)^2}{(\sqrt{n}\log n+1)^2}\\
        & = & \Omega(n^{3/2}/\log n),
    \end{eqnarray*}
    where the 4th inequality follows Lemma~\ref{lem:li_bound}. Thus, our bound is obtained.
\end{proof}

Theorem~\ref{thm:lower_bound_unbalanced} follows Lemma~\ref{lem:lower_bound_unbalanced_not_epsilon_2} and Lemma~\ref{lem:lower_bound_unbalanced_epsilon_2}.

\section{Simulations} \label{sec:simulations}

We perform various simulations to empirically demonstrate our theoretical findings. The DA variant used in these simulations operates as follows: there is a FIFO queue of doctors. When a doctor proposes and is rejected, they continue proposing until they are accepted. If a doctor is matched but later rejected, they are placed at the end of the queue.

In Figure~\ref{fig:doc_avg} and Figure~\ref{fig:total_both_phases}, we give simulation results for a market of $1001$ doctors and $1000$ hospitals using the DA algorithm for varying levels of loyalty. Figure ~\ref{fig:doc_avg} present the average rank of doctors under different loyalty levels. Figure~\ref{fig:total_both_phases} compares the total number of proposals in the Balanced Phase and the Unbalanced Phase. The results indicate that the total number of proposals in the Balanced Phase remains in static across different loyalty levels, whereas the Unbalanced Phase accounts for the increase in the number of proposals as loyalty decreases, as our theoretical findings suggest. The simulation results are averaged across 100 experiments, where in each one we use a different random seed to generate random preferences. 

\begin{figure}[H] 
	\centering
	\begin{subfigure}{\columnwidth}	
		\centering
		\includegraphics[width=\columnwidth]{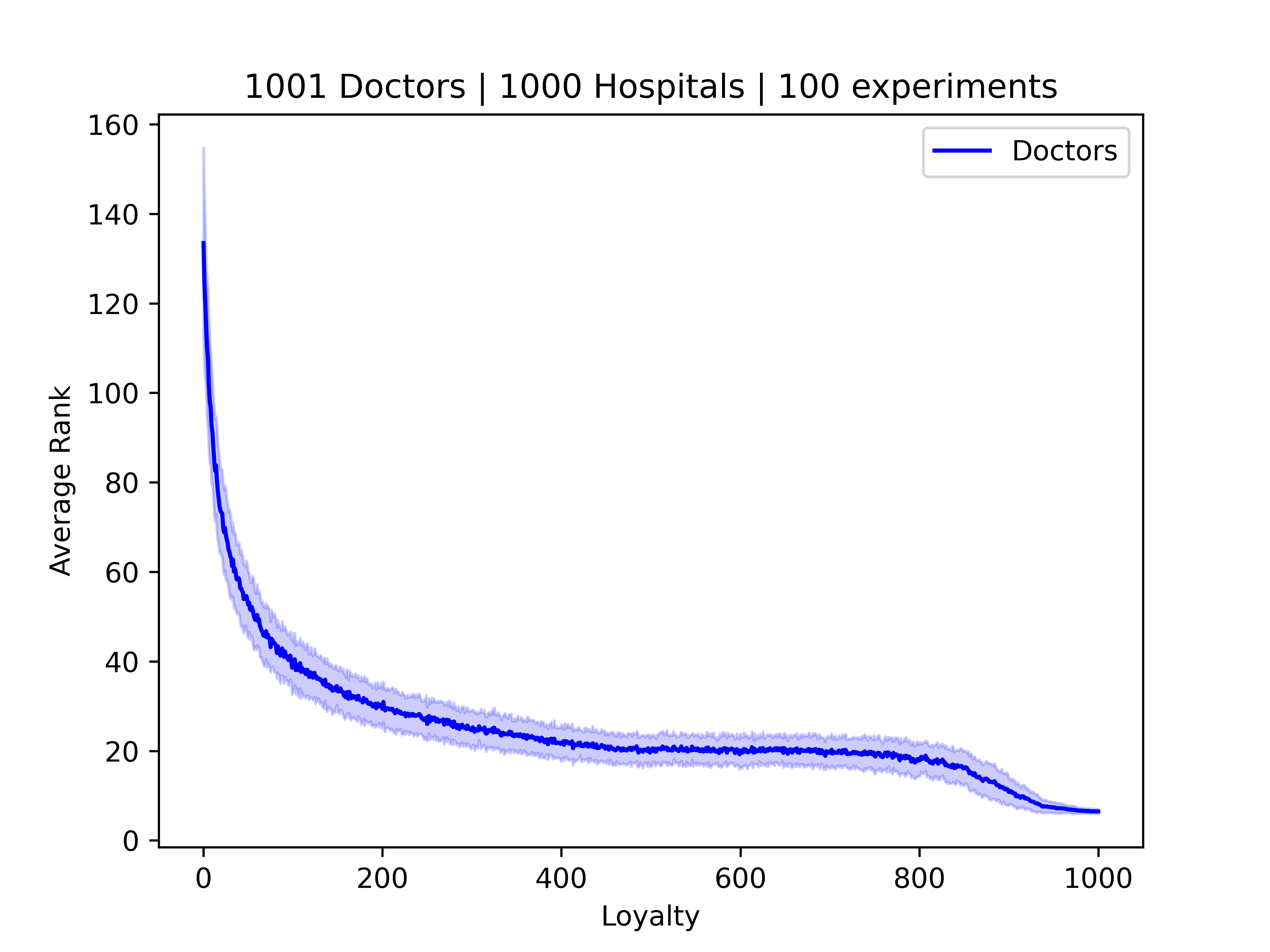}
  \vspace{-2ex}

	\end{subfigure}
    
        \vspace{-2ex}
	\caption{Doctors' average rank for varying amounts of loyalty.}\label{fig:doc_avg}
  \vspace{-1ex}
\end{figure}

\begin{figure}[H]
	\centering
	\begin{subfigure}{\columnwidth}	
		\centering
		\includegraphics[width=\columnwidth]{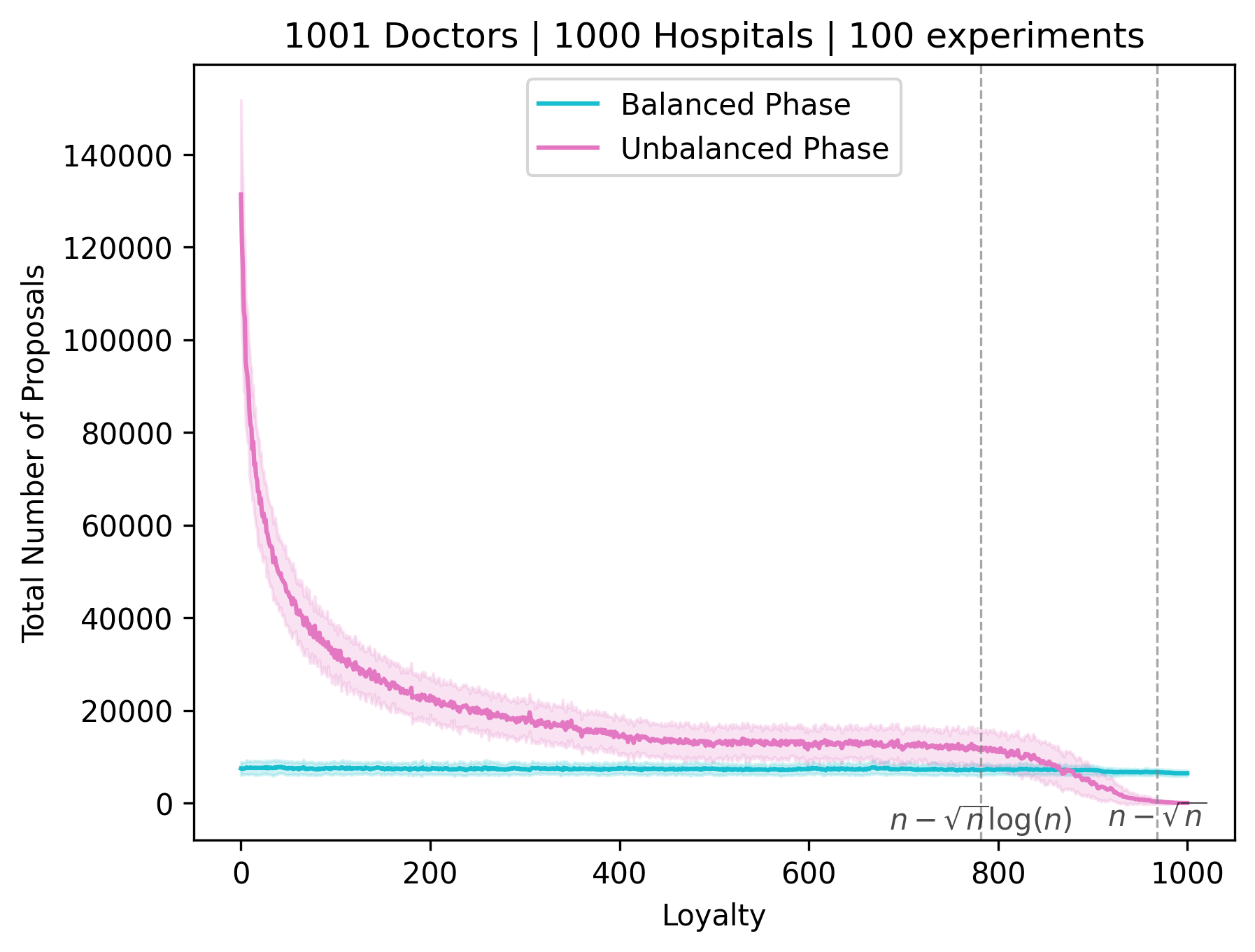}
  \vspace{-2ex}

	\end{subfigure}
    
        \vspace{-2ex}
	\caption{Total number of proposals across the Balanced Phase and Unbalanced Phase for varying amounts of loyalty.}\label{fig:total_both_phases}
  \vspace{-1ex}
\end{figure}

To demonstrate the underlying principals behind the theoretical results, we show hospital rank distributions in a market with $501$ doctors and $500$ hospitals in a single run under varying loyalty levels. The simulations depict the ranks of hospitals for their assigned doctors at the end of the Balanced Phase and Unbalanced Phase for different levels of loyalty. Darker blue shades indicate a higher number of hospitals matched at a given rank. We observe the following:
\begin{itemize}
    \item Figure~\ref{fig:0axes} depicts the case of $k=0$, which is the setting studied in~\cite{ashlagi2017unbalanced}. The long rejection chains can be observed by the many changes in hospitals' ranks, which implies that many of them changed their partner after the balanced phase.
    \item Figure~\ref{fig:250axes} and Figure~\ref{fig:361axes} shows two loyalty parameters that are covered by Corollary~\ref{cor:upper_bound_low_loyalty} and Theorem~\ref{thm:lower_bound_unbalanced}. It is visible that most hospitals of set $T$ after the balanced phase (matched to doctors they rank in the range $[k+\sqrt{n}\cdot \ln n/2+1,k+\sqrt{n}\cdot \ln n]$) are re-matched during the balanced phase, which is exactly what drives our analysis.
    \item Figure~\ref{fig:478axes} show the rank distribution for $k=n-\sqrt{n}$. For this loyalty level, we show an average doctors' rank of $\Theta(\log n)$ in Theorem~\ref{thm:very_high_loyalty_unbalanced}. As shown by our analysis, the hospitals available after the balanced phase (matched to doctors they rank in range $[k+1,n]$) are mostly not re-matched during the unbalanced phase, and this phase ends quickly.
\end{itemize}


\begin{figure}[H]
	\centering
	\begin{subfigure}{0.9\columnwidth}	
		\centering
		\includegraphics[width=\columnwidth]{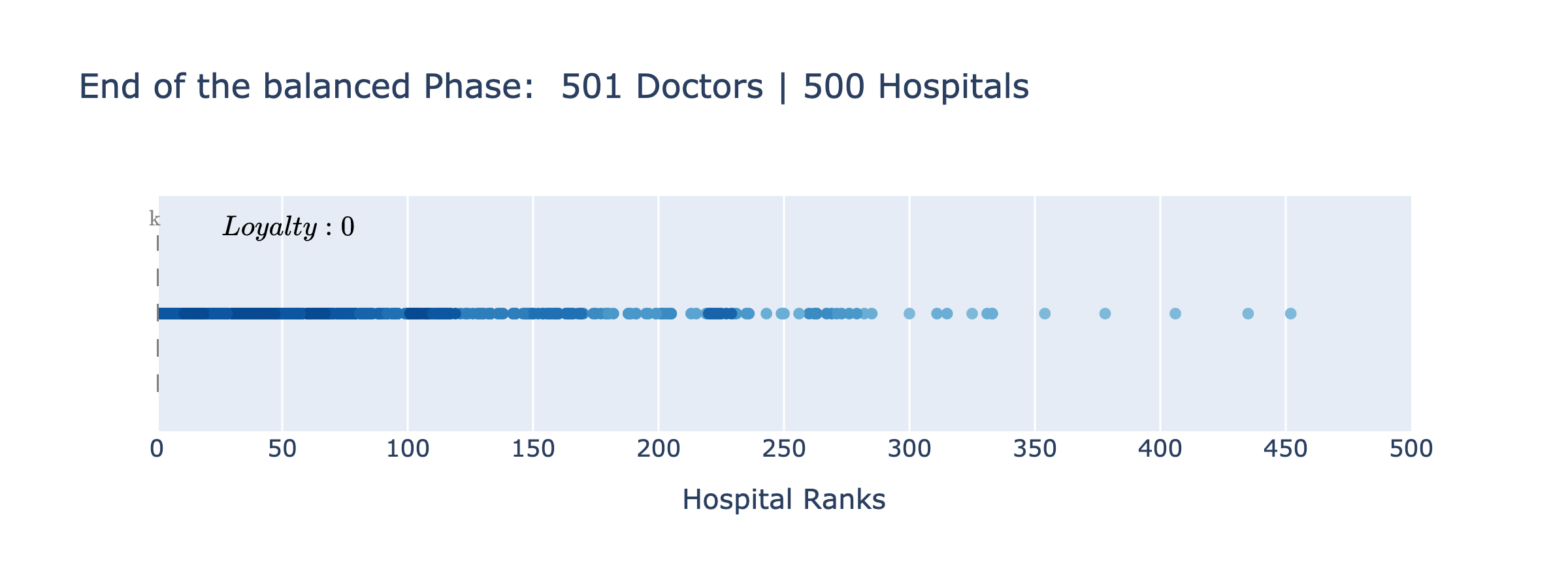}
  \vspace{-4ex}
		
	\end{subfigure}
    \begin{subfigure}{0.9\columnwidth}	
		\centering
		\includegraphics[width=\columnwidth]{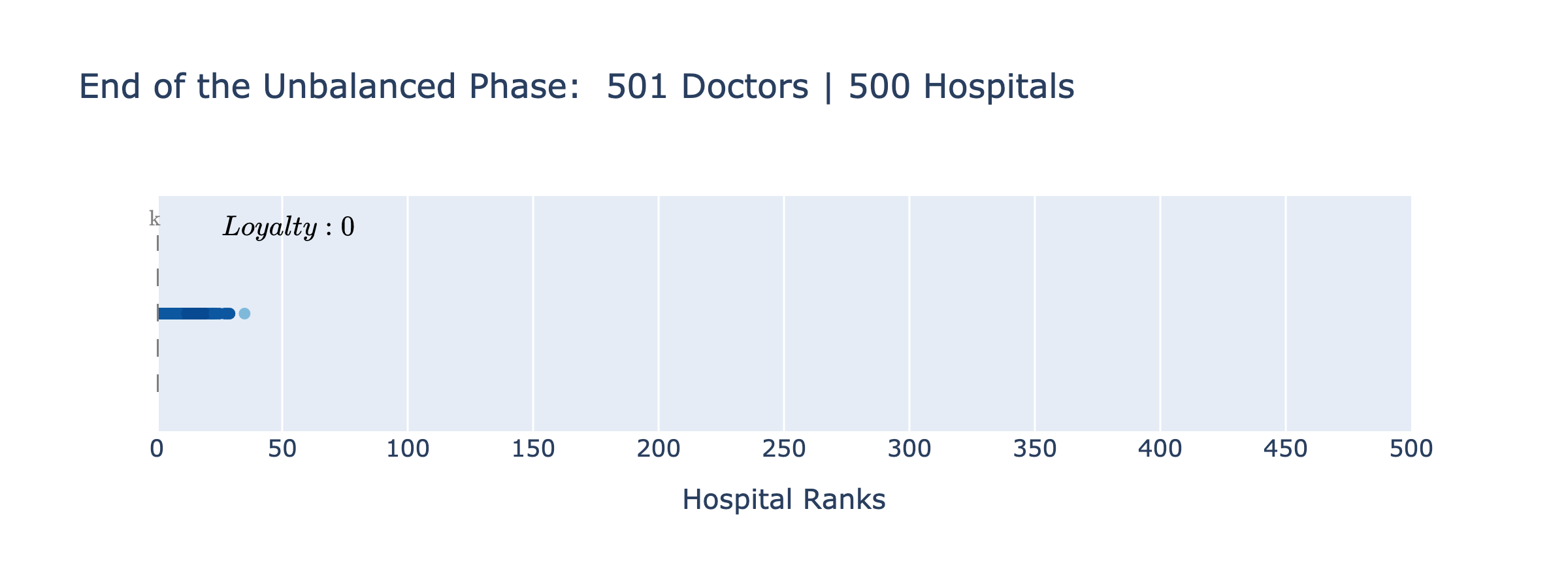}
    \vspace{-4ex}
		
	\end{subfigure}
    
        \vspace{-2ex}
	\caption{Hospital rank distribution for $k=0$.} \label{fig:0axes}
  \vspace{-2ex}
\end{figure}

\begin{figure}[H]
	\centering
	\begin{subfigure}{0.9\columnwidth}	
		\centering
		\includegraphics[width=\columnwidth]{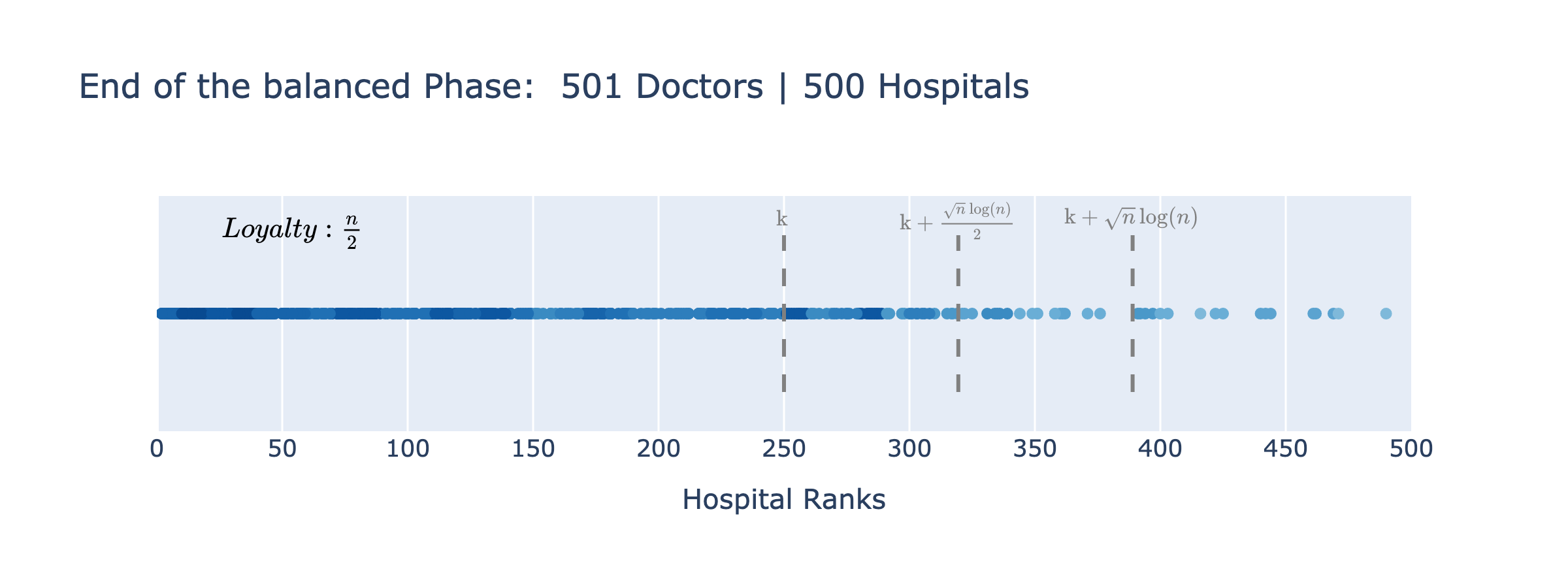}
  \vspace{-4ex}
		
	\end{subfigure}
    \begin{subfigure}{0.9\columnwidth}	
		\centering
		\includegraphics[width=\columnwidth]{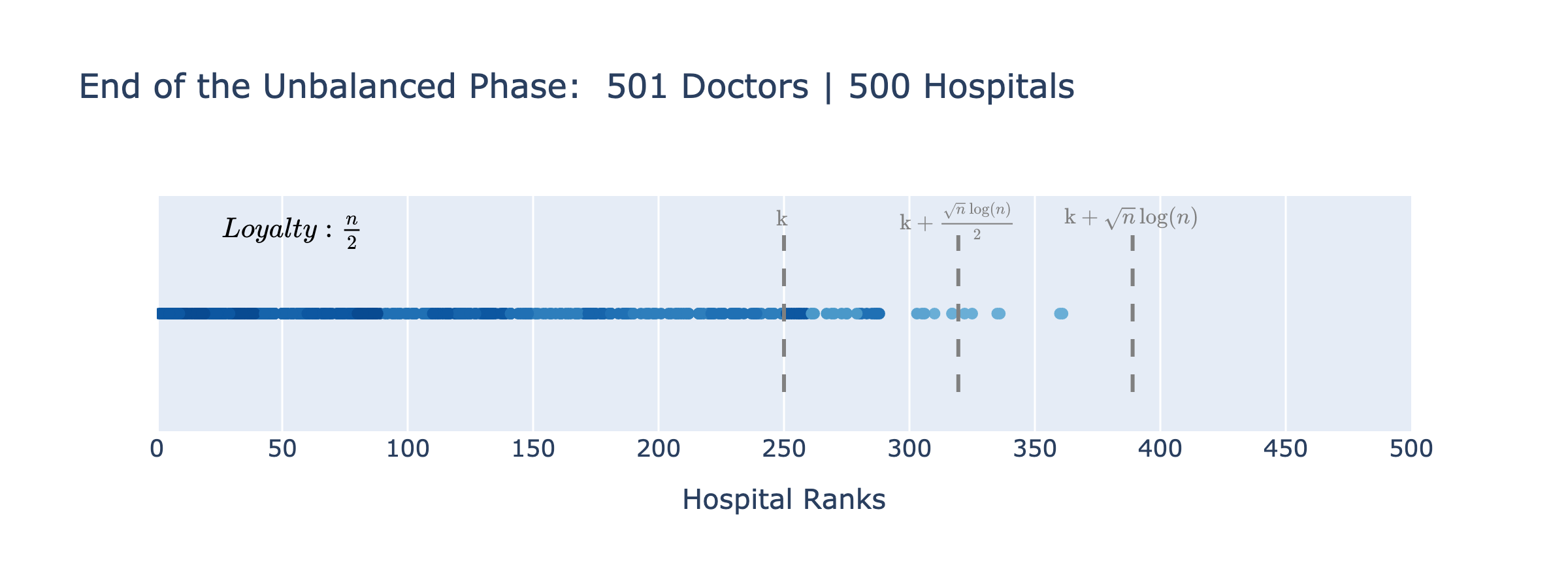}
    \vspace{-4ex}
		
	\end{subfigure}
    
        \vspace{-2ex}
	\caption{Hospital rank distribution for $k=n/2$.}\label{fig:250axes}
  \vspace{-2ex}
\end{figure}

\begin{figure}[H]
	\centering
	\begin{subfigure}{0.9\columnwidth}	
		\centering
		\includegraphics[width=\columnwidth]{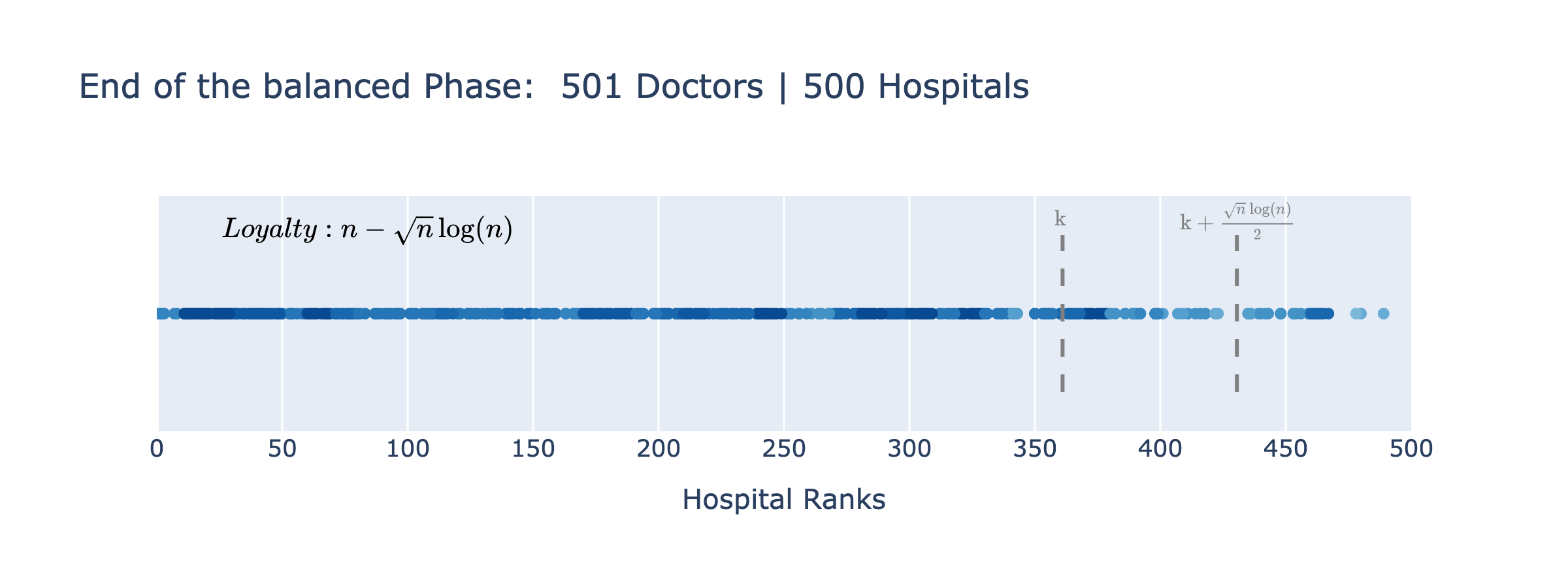}
  \vspace{-4ex}
		
	\end{subfigure}
    \begin{subfigure}{0.9\columnwidth}	
		\centering
		\includegraphics[width=\columnwidth]{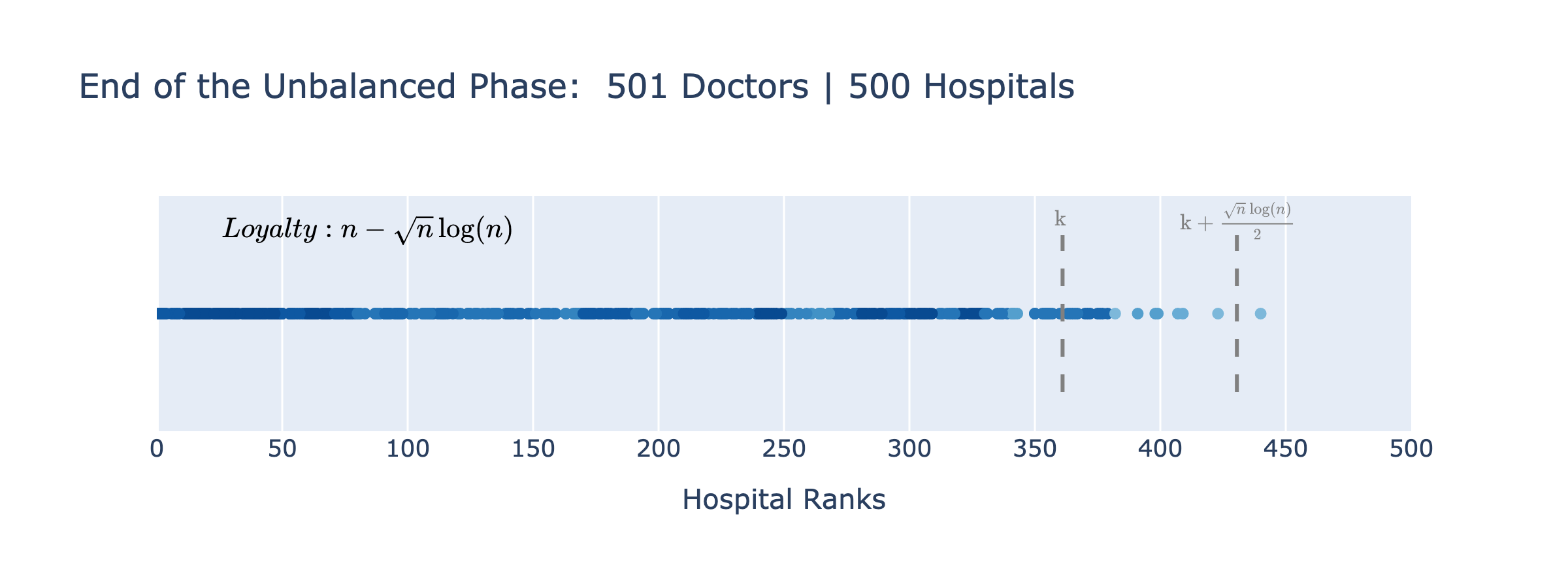}
    \vspace{-4ex}
		
	\end{subfigure}
    
        \vspace{-2ex}
	\caption{Hospital rank distribution for $k=n-\sqrt{n}\log n$.}\label{fig:361axes}
  \vspace{-2ex}
\end{figure}

\begin{figure}[H]
	\centering
	\begin{subfigure}{0.9\columnwidth}	
		\centering
		\includegraphics[width=\columnwidth]{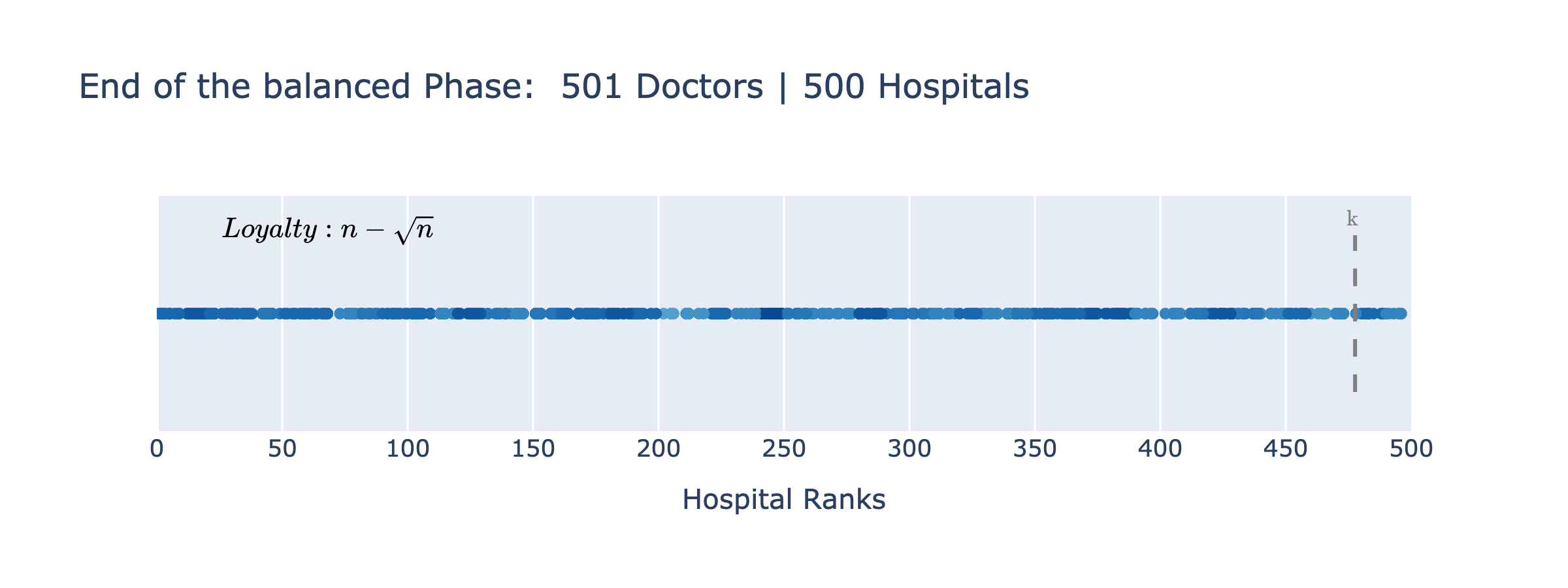}
  \vspace{-4ex}
		
	\end{subfigure}
    \begin{subfigure}{0.9\columnwidth}	
		\centering
		\includegraphics[width=\columnwidth]{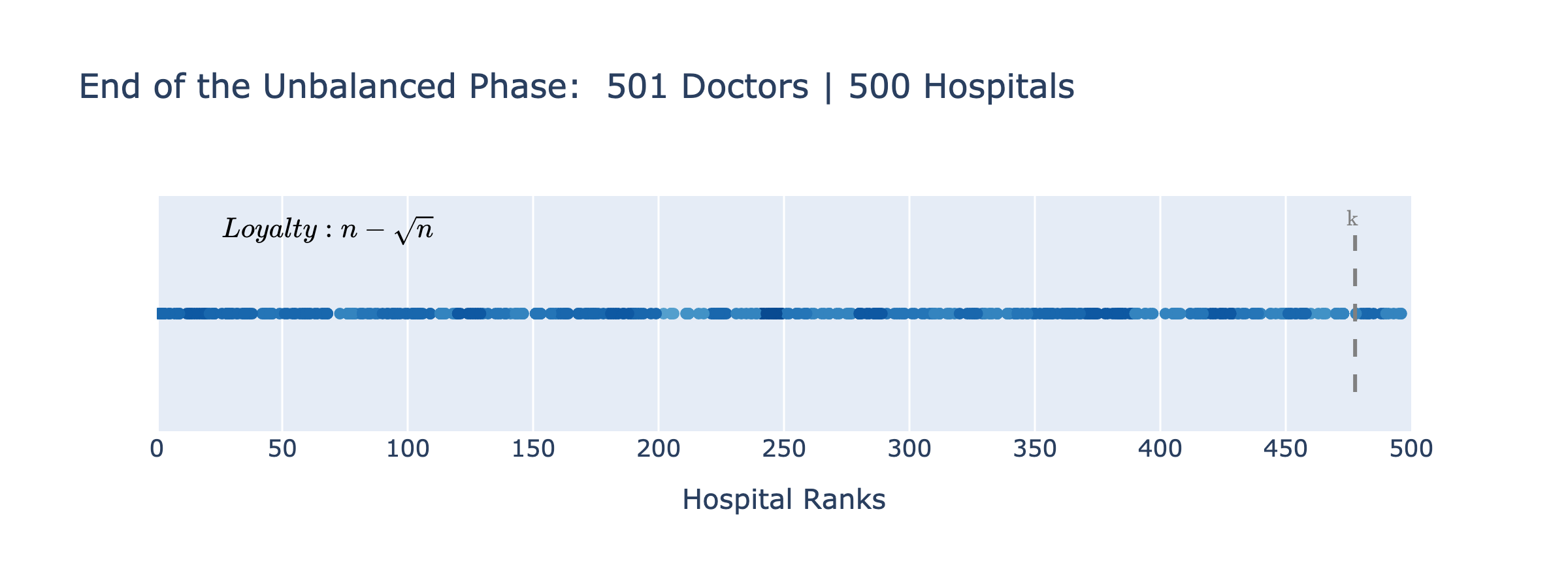}
    \vspace{-4ex}
		
	\end{subfigure}
    
        \vspace{-2ex}
	\caption{Hospital rank distribution for $k=n-\sqrt{n}$.}\label{fig:478axes}
  \vspace{-2ex}
\end{figure}

\section{Discussion}~\label{sec:discussion}
Our findings highlight how loyalty—modeled here as an additive parameter $k$—can partially mitigate the stark “role reversal” observed in unbalanced stable-matching markets. Nonetheless, high levels of loyalty are needed to restore outcomes resembling the balanced case. The immediate questions left open by our research are  understanding exactly how loyalty changes the expected doctors' rank at low levels of loyalty ($k\le n/\log n$), and at which pace does the expected rank changes during the phase transition (for $k\in (n-\sqrt{n}\log n, n-\sqrt{n})$).

Future research could explore replacing the constant $k$ with a \emph{distribution-based} loyalty parameter, such as a Gaussian function based on each hospital’s rank, where hospitals exhibit greater loyalty to mid-ranked doctors and less to extremes. Additionally, studying \emph{truncated preference lists}, where agents rank only a subset of partners, could provide a more realistic model for large-scale platforms and deepen our understanding of stable matching dynamics.

\newpage
\bibliographystyle{plainnat}
\bibliography{loyalty.bib}

\newpage

\appendix

\newpage

\section{Proof of Observation~\ref{obs:balanced_amnesiac_upperbound}} \label{sec:balanced_proofs}

    Consider the amnesiac counterpart $\tilde{A}$ of some consistent DA variant $A$.  For every $i \in [n]$, $P_i^{\tilde{A}}$ follows a geometric distribution with $p=\frac{n-i+1}{n}$ (the probability a proposal is directed to an unmatched hospital); therefore, $\E[P_i^A]=\frac{n}{n-i+1}$. The amount of proposals proposed during the all algorithm is
    \begin{eqnarray*}
    \E[P^A] & =& \E[\sum_i P_i^A]= \sum_{i}\E[P_i^A] \\
    & =&\sum_{i=1}^n \frac{n}{n-i+1}= \sum_{i=1}^n \frac{n}{i}\\ 
        &= &n\cdot H_n = O(n \log n).
    \end{eqnarray*}\qed

\end{document}